\definecolor{maroon}{cmyk}{0,0.87,0.68,0.32}
\newcommand{\orth}{\ensuremath{\perp\!\!\!\perp}}%
\newcommand{\indep}{\orth}%
\newcommand{\E}{\mathbb{E}}
\newcommand{\hE}{\widehat{\mathbb{E}}_n}
\newcommand{\R}{\mathbb{R}}
\newtheorem{definition}{Definition}
\numberwithin{definition}{section}
\numberwithin{axiom}{section}
\newtheorem{theorem}{Theorem}
\numberwithin{theorem}{section}
\newtheorem{assumption}{Assumption}
\numberwithin{assumption}{section}
\newtheorem{proposition}{Proposition}
\numberwithin{proposition}{section}
\numberwithin{corollary}{section}
\numberwithin{lemma}{section}
\newtheorem{example}{Example}
\numberwithin{example}{section}
\newtheorem{remark}{Remark}
\numberwithin{remark}{section}
\numberwithin{equation}{section}
\begin{document}

\author{Ganesh Karapakula\footnote{Website: \href{https://c-metrics.com}{https://c-metrics.com}. Email: \href{mailto:vgk22@cam.ac.uk}{vgk22@cam.ac.uk} or \href{mailto:gkarapakula@gmail.com}{gkarapakula@gmail.com}. I gratefully acknowledge financial support from the Janeway Institute, Faculty of Economics, The University of Cambridge.}}
\title{\LARGE \textbf{Stable Probability Weighting} \\ \vspace{2mm} \large Large-Sample and Finite-Sample Estimation and Inference Methods for Heterogeneous Causal Effects of Multivalued Treatments Under Limited Overlap}
\date{\normalsize 12 January 2023}

\maketitle

\vspace{-8mm}
\begin{abstract}
\noindent 
In this paper, I try to tame ``Basu's elephants'' (data with extreme selection on observables). I propose new practical large-sample and finite-sample methods for estimating and inferring heterogeneous causal effects (under unconfoundedness) in the empirically relevant context of limited overlap. I develop a general principle called ``Stable Probability Weighting'' (SPW) that can be used as an alternative to the widely used Inverse Probability Weighting (IPW) technique, which relies on strong overlap. I show that IPW (or its augmented version), when valid, is a special case of the more general SPW (or its doubly robust version), which adjusts for the extremeness of the conditional probabilities of the treatment states. The SPW principle can be implemented using several existing large-sample parametric, semiparametric, and nonparametric procedures for conditional moment models. In addition, I provide new finite-sample results that apply when unconfoundedness is plausible within fine strata. Since IPW estimation relies on the problematic reciprocal of the estimated propensity score, I develop a ``Finite-Sample Stable Probability Weighting'' (FPW) set-estimator that is unbiased in a sense. I also propose new finite-sample inference methods for testing a general class of weak null hypotheses. The associated computationally convenient methods, which can be used to construct valid confidence sets and to bound the finite-sample confidence distribution, are of independent interest. My large-sample and finite-sample frameworks extend to the setting of multivalued treatments.
\end{abstract}


\onehalfspacing

\section{Introduction}
\label{section:introduction}

Inverse Probability Weighting (IPW), or its augmented doubly robust version, is now a widely used technique for estimating causal effects and for dealing with missing data. The technique is based on the influential work of \cite{horvitz1952generalization}, \cite{robins1994estimation}, \cite{robins1995semiparametric}, \cite{hahn1998role}, \cite{hirano2003efficient}, and several others. However, even in its early days, the method was not without its critics. \cite{basu1971essay} illustrates some practical issues with it using an amusing tale, in which a circus statistician gets sacked by the circus owner after proposing an unsound IPW estimator of the total mass of circus elephants (given by the mass of a randomly chosen elephant  multiplied by the reciprocal of an associated extreme selection probability). Unfortunately, ``Basu's elephants'' are not confined to fictional circuses; they manifest themselves in many policy-relevant empirical settings, and extreme weights based on IPW can result in questionable and unstable estimates of causal effects.\footnote{See, e.g., \cite{kang2007demystifying, robins2007comment, tsiatis2007comment, kang2007rejoinder, frolich2004finite, khan2010irregular, ma2020robust, heiler2021valid, sasaki2022estimation, ma2022testing, d2021overlap, crump2006moving, li2019addressing, petersen2012diagnosing}, and the references therein.} This problem with IPW shows up in many important datasets across various fields, including the health sciences\footnote{See, e.g., \cite{crump2009dealing, scharfstein1999adjusting, heiler2021valid}.} and the social sciences.\footnote{See, e.g., \cite{ma2020robust, huber2013performance, busso2014new}.} In fact, the pioneers of the modern form of (augmented) IPW have themselves repeatedly issued warnings regarding highly variable, extreme inverse probability weights.\footnote{See, e.g., \cite{robins1995semiparametric, scharfstein1999adjusting, robins2000inference, robins2007comment}.}

In this paper, I provide new large-sample and finite-sample methods, which serve as alternatives to the (augmented) IPW framework, for estimating and inferring heterogeneous causal effects under unconfoundedness, i.e., the standard assumption of strongly ignorable treatment assignment \citep{rosenbaum1983central}. Even though I primarily work with binary treatment variables to avoid cumbersome notation, I show that my framework readily extends to the setting of multivalued treatments with generalized propensity scores \citep{imbens2000role}. While studying the existing literature on limited overlap (extreme selection on observables), I have come across only papers that deal with the average treatment effect (ATE) parameter in this context (to various extents). To the best of my knowledge, this paper is the first to propose estimation and inference methods for higher-dimensional parameters that characterize heterogeneity in the (possibly multivalued) treatment effects when limited overlap is empirically important and non-negligible.\footnote{As far as I am aware, the existing literature (from a frequentist or machine learning or policy learning perspective) on heterogeneous treatment effect estimation and inference relies on strong overlap. See \cite{wang2020debiased, zhao2019selective, zhao2022selective, athey2019generalized, powers2018some, nie2021quasi, kunzel2019metalearners, wager2018estimation, athey2016recursive, kennedy2022towards, kennedy2022minimax, semenova2021debiased, imai2013estimating,belloni2017program,singh2020kernel,chernozhukov2018generic,knaus2021machine, oprescu2019orthogonal,semenova2022estimation,crump2008nonparametric,sant2021nonparametric, athey2021policy, kitagawa2018should, kitagawa2022stochastic, sun2021empirical,abrevaya2015estimating,fan2022estimation,lee2017doubly}. There is some discussion on limited overlap in the Bayesian literature on heterogeneous treatment effect estimation \citep[see, e.g.,][]{li2022bayesian}. \cite{hill2013assessing} propose some heuristic strategies based on Bayesian additive regression trees to deal with poor overlap. \cite{hahn2020bayesian} propose finding the regions with limited overlap and then trimming them or to use the spline-based extrapolation procedures suggested by \cite{nethery2019estimating}. The existing literature on causal estimation and inference in the case of multi-valued treatments also relies on the assumption of strong overlap. See, \cite{ai2021unified, ai2022estimation, bugni2019inference, cattaneo2010efficient, kennedy2017non,farrell2015robust,su2019non,colangelo2020double,yang2016propensity, zhang2022towards}. A very recent exception is the work of \cite{dias2021inference}, who allow the proportion of treated units to diminish asymptotically. They provide some interesting results in a specialized setting with discrete covariates and some shape restrictions, especially a ``conditional rank invariance restriction [that] is analogous to the control function framework for quantile regression.'' They also provide valid inferential methods for a narrow class of null hypotheses. Since \cite{dias2021inference} allow for dynamic treatment effects, their methods are very useful in some panel data settings, but their model setup is specialized and different from the standard version of unconfoundedness that I use.}

In the setting of \cite{rosenbaum1983central}, if the assumption of strong overlap (also known as strict overlap, i.e., propensity scores being bounded away from zero and one) does not hold, then inverse weight estimation is problematic; specifically, \cite{khan2010irregular} show that IPW may not lead to an asymptotically normal regular $\sqrt{n}$-consistent estimator of even just the ATE, which is a simpler parameter than the heterogeneous treatment effect parameters. Depending on how the propensity scores are distributed, the IPW estimator may be non-Gaussian (e.g., asymmetric L\'evy stable) asymptotically \citep{ma2020robust}. Although strong overlap is often invoked in the literature \citep[see, e.g.,][]{hirano2003efficient, chernozhukov2022automatic} or is implicit in other assumptions \citep{d2021overlap}, it rules out many simple, plausible models (e.g., some probit models) for the treatment assignment variable and also contradicts many empirically observed propensity score distribution shapes \citep{heiler2021valid, ma2022testing,lei2021distribution}.

In the presence of limited overlap, some researchers suggest focusing attention instead on alternative estimands and parameters that can be estimated efficiently using heavy trimming or winsorization of the propensity scores \citep{crump2009dealing, zhou2020propensity} or using other weighting procedures.\footnote{See \cite{li2018balancing, graham2012inverse, zubizarreta2015stable, wang2020minimal, athey2018approximate, hainmueller2012entropy, zhao2019covariate, imai2014covariate, imai2015robust, robins2000marginal, ai2021unified, yang2018asymptotic,wong2018kernel,ning2020robust,hirshberg2021augmented,ben2022using,wang2020debiased,matsouaka2022overlap,khan2021adaptive, chen2021robust,li2019propensity}. Some methods, such as those of \cite{chen2008semiparametric} and \cite{hirshberg2021augmented}, make weaker assumptions than strong overlap but still restrict limited overlap. For example, \cite{hirshberg2021augmented} require the first moment of the inverse propensity score to be bounded, but this testable assumption may lack empirical support \citep{ma2022testing}.} \cite{lee2021bounding} propose partial identification methods. \cite{ma2020robust} develop bias-correction and robust inference procedures for trimmed IPW estimators.\footnote{\cite{chaudhuri2014heavy, yang2018asymptotic, sasaki2022estimation, khan2022uniform,khan2022heteroscedasticity} also propose related trimming approaches for ATE estimation and inference under limited overlap.} \cite{heiler2021valid} propose an empirically appealing method (a modified $m$-out-of-$n$ bootstrap procedure that does not rely on trimming) for robust inference on the (augmented) IPW estimators of the ATE in the presence of limited overlap. However, \cite{heiler2021valid} restrict ``the occurrence of extreme inverse probability weighted potential outcomes or conditional mean errors.'' Although the focus of my paper is on the heterogeneity of causal effects, my proposed methods also offer an alternative to the aforementioned existing large-sample methods for inference on the ATE.

In Section \ref{section:setup}, I formalize the observational setup and the parameters of interest that are used throughout the paper. I do not impose the empirically restrictive assumption of strict overlap, but the setting is otherwise quite standard in the literature on causal inference under unconfoundedness. In Section \ref{section:alternatives}, I review issues with the popular (augmented) inverse probability weighting techniques under weak overlap. I illustrate these issues by considering a very simple setting where the propensity scores are known and a linear model describes heterogeneity in average causal effects. This particular setting is designed to help us focus on the core issues related to limited overlap, paving the way for a more thorough consideration of the problem from semi/non-parametric perspectives in later sections. Even in the simple setting of Section \ref{section:alternatives}, IPW approaches can have undesirable statistical properties when overlap is not strong. However, I show that there exist simple alternatives: ``Non-Inverse Probability Weighting'' (NPW) estimators, which are consistent and asymptotically normal while also being robust to limited overlap. I define a general estimator class that nests IPW and NPW estimators and show that they can be expressed as weighted versions of one another. I also discuss NPW-based estimation of the ATE.

My estimators in Section \ref{section:alternatives} are based on a general principle called ``Stable Probability Weighting'' (SPW), which I develop more fully in the next section, for learning about the conditional average treatment effect (CATE) function that characterizes the heterogeneity in the average causal effects. In Section \ref{section:spw}, I present the relevant conditional moment restrictions that enable estimation and inference procedures that are robust to limited overlap and provide several examples. (If strict overlap holds, my general framework nests IPW as a special case.) I also show how to augment the basic conditional moment model in order to achieve double robustness (i.e., consistent estimation despite misspecification of one of the nuisance parameters) and compare my approach with some moment-based methods that are widely used in the existing literature. I then generalize the SPW approach for analyzing multivalued treatments and distributional or quantile treatment effects. I also discuss SPW-based estimation and inference from parametric, semiparametric, nonparametric, machine learning, and policy learning perspectives. Thanks to the currently rich econometric literature on conditional moment models, I do not have to reinvent the wheel for large-sample estimation and inference. Next, I proceed to develop finite-sample methods in Section \ref{section:fpw}.

Even under strict overlap, the standard IPW estimators may have poor finite-sample behavior. But under limited overlap, the problem is much worse, and those estimators typically have very undesirable small-sample properties \citep{armstrong2021finite, rothe2017robust, hong2020inference}. \cite{hong2020inference} provide restrictions on limited overlap that allow the use of standard methods for valid asymptotic inference on the finite-population version of the ATE when propensity scores degenerate to zero asymptotically. In a setting with finite strata, \cite{rothe2017robust} assumes normally distributed potential outcomes and proposes valid inference for the empirical ATE (EATE), which is the sample average of the expected treatment effect conditional on the covariates, by turning the problem into a general version of the Behrens--Fisher problem.\footnote{An alternative to the approach of \cite{rothe2017robust} is that of \cite{dias2021inference}, who also use discrete covariates. Rather than assuming normally distributed potential outcomes like \cite{rothe2017robust} does, \cite{dias2021inference} impose some shape restrictions, especially a ``conditional rank invariance restriction [that] is analogous to the control function framework for quantile regression.'' My finite-sample methods in this paper do not impose either set of assumptions.} \cite{armstrong2021finite} also develop some finite-sample inference methods for the EATE under the assumption of normally distributed errors with known variances for potential outcomes whose conditional means lie within a known convex function class. 

Both \cite{rothe2017robust} and \cite{armstrong2021finite} acknowledge the restrictiveness of the normality assumption for finite-sample inference.\footnote{\cite{rothe2017robust} says, ``We work with normality since without some restriction of this type, it would seem impossible to obtain meaningful theoretical statements about the distribution of (studentized) average outcomes in covariate-treatment cells with very few observations,'' in addition to saying that the normality assumption ``is clearly restrictive; but without imposing some additional structure it would seem impossible to conduct valid inference in the presence of small groups.''} They also use fixed designs not only for covariates but also treatment status. While it is standard in the literature on finite-sample inference to condition on covariates for testing purposes,\footnote{See, e.g., \cite{lehmann1993fisher, lehmann2005testing,zhang2022randomization}.} it is not standard to also condition on the treatment status. Conditioning on the treatment status may be even more problematic in the limited overlap setting, because such an approach may understate actual uncertainty in the resulting estimates.\footnote{In the limited overlap scenario, very few treated (or untreated) observations are used to estimate the conditional means of the relevant potential outcomes. However, the number of those few treated (or untreated) observations is itself a random variable. Conditioning on the treatment statuses of the observations ignores this source of randomness, potentially leading to underreported uncertainty in the estimates of the conditional means.}

The finite-sample framework of my paper goes beyond the EATE parameter, which is the focus of \cite{rothe2017robust}, \cite{armstrong2021finite}, and \cite{hong2020inference}.\footnote{To be more precise, \cite{rothe2017robust} and \cite{armstrong2021finite} focus on the EATE, and \cite{hong2020inference} focus on the finite-population ATE, which is related to but conceptually different from EATE. Another important note is that \cite{rothe2017robust} and \cite{armstrong2021finite} use different terms for the same EATE parameter. \cite{rothe2017robust} calls it the ``sample average treatment effect'' (SATE), but this term is used for a different parameter in much of the literature. \cite{armstrong2021finite} use the term ``conditional average treatment effect'' (CATE) to refer to the EATE parameter, but their usage is very nonstandard in the literature. They themselves say, ``We note that the terminology varies in the literature. Some papers call this object the sample average treatment effect (SATE); other papers use the terms CATE and SATE for different objects entirely.'' To avoid all this confusion, I ensure that the names and abbreviations of some common parameters used in this paper, as defined in Section \ref{section:setup}, are the same as those used in the vast majority of the causal inference literature.} I propose finite-sample estimation and inference methods for each term comprising the EATE, which is the sample average of heterogeneous treatment effect means, rather than just the aggregate EATE object. Unlike \cite{armstrong2021finite}, I do not impose assumptions on the smoothness or the shape of the conditional means of the outcomes (for finite-sample results), and so my focus is on finite-sample unbiased estimation and valid finite-sample inference rather than ``optimal'' procedures. To allow for fine partitions of the covariate space (e.g., using unsupervised feature learning), I use the strata setup of \cite{rothe2017robust}, \cite{hong2020inference}, and \cite{dias2021inference} but with two generalizations. Unlike their setups, I allow the number of strata to potentially grow with the overall sample size, while also allowing overlap to be arbitrarily low within each stratum. I do not impose the restrictive normality assumptions used by \cite{rothe2017robust} and \cite{armstrong2021finite}. In addition, unlike these papers, I do not condition on the treatment status variables, which are the source of the fundamental missing (counterfactual) data problem. I instead use a design-based finite-sample approach\footnote{See, e.g., \cite{lehmann2005testing, wu2021randomization, abadie2020sampling, imbens2021causal, ding2016randomization, young2019channeling, fisher1925statistical, fisher1935design,zhang2022randomization,xie2013confidence}.} that I believe is simpler, broader, and more transparent.

Even though the heterogeneous average treatment effects are point-identified in the above setting, it is typically not straightforward to obtain unbiased point-estimates of the objects of interest. This difficulty is a result of some practical statistical issues with the reciprocal of the estimated propensity score. Thus, in Section \ref{section:fpw}, I develop a ``Finite-Sample Stable Probability Weighting'' (FPW) set-estimator for analyzing heterogeneous average effects of multivalued treatments. I define a notion of finite-sample unbiasedness for set-estimators (a generalization of the usual concept for point-estimators), and I show that the property holds for the FPW set-estimator. An interesting feature of the FPW set-estimator is that it actually reduces to a point-estimator for many practical purposes and thus serves as a simpler alternative to other available set-estimators that are wider \citep[see, e.g.,][]{lee2021bounding} in the (high-dimensional) strata setting.

After discussing finite-sample-unbiased estimation of average effects within strata, I propose new finite-sample inference methods for testing a general class of ``weak'' null hypotheses, which specify a hypothesized value for average effects. Except for the case of binary outcomes,\footnote{See \cite{rigdon2015randomization} and \cite{li2016exact}. See \cite{caughey2021randomization} for another exception.} currently there only exist asymptotically robust tests\footnote{See \cite{wu2021randomization,chung2013exact,chung2016multivariate}.} of general weak null hypotheses that have finite-sample validity under some ``sharp'' null hypotheses, which restrictively specify hypothesized values of counterfactual outcomes for all the observations. I thus tackle the question of how to conduct finite-sample tests of some general weak null hypotheses. I also use partial identification for this purpose. The finite-sample inference methods that I propose are computationally convenient, and they can be used to construct valid confidence sets and to bound the finite-sample confidence distribution.\footnote{Since sharp null hypotheses are a subset of weak null hypotheses, my proposals also offer a computationally fast way to compute the finite-sample confidence distributions when sharp null hypotheses are of interest.} These contributions (to the literature on finite-sample inference) are of independent interest. An appealing aspect of my finite-sample methods is that they are valid under weak assumptions and are yet simple conceptually and computationally.

\section{Standard Observational Setup and Parameters of Interest}
\label{section:setup}

In this section, I formalize the observational setup and parameters of interest that are used throughout the paper. The parameters and the associated empirical setting that I consider are quite standard in the literature on causal inference under unconfoundedness \citep[see, e.g.,][]{imbens2000role,hirano2004propensity, rosenbaum1983central,imbens2015causal}. However, I do not impose the restrictive assumption of strict overlap (boundedness of the reciprocals of propensity scores), which is crucial for many of the results in the existing literature. Thus, the following setup is more general than the standard one that is usually used \citep[see, e.g.,][]{hirano2003efficient, kennedy2022towards, kennedy2022minimax, ai2021unified, kennedy2017non,semenova2021debiased}.
\pagebreak

Let $\mathbb{W} \subset \R$ be a bounded set of possible treatment states, and let $Y^*_w$ be the random variable representing the potential outcome under the treatment state $w \in \mathbb{W}$. However, the fundamental problem of causal inference is that the random element $(Y^*_w)_{w\in \mathbb{W}}$ is not completely observable; only one component of it is not latent. The observable component is $Y = Y^*_W$, where $W$ is itself a random variable on $\mathbb{W}$ so that $W$ represents the treatment status. When $\mathbb{W} = \{0,1\}$ and $W$ is the associated binary random variable, $Y = W\,Y^*_1+(1-W)\,Y^*_0$. More generally, when $\mathbb{W}$ is discrete, $Y = \sum_{w \in \mathbb{W}} \mathbb{I}\{W = w\}\,Y^*_w$. When $\mathbb{W}$ is an open interval, $Y = \int_{\mathbb{W}}\, Y^*_{\tilde{w}}\,\delta(W - \tilde{w})\,d\tilde{w} = Y^*_W$, where $\delta(\cdot)$ is the Dirac delta function. The random elements $(Y^*_w)_{w \in \mathbb{W}}$ and $W$ may all depend on characteristics or covariates $X$, where $X$ is a random vector on a subset $\mathbb{X}$ of some Euclidean space. The only observable random elements in this setting are $(Y, X, W)$, but a lot can be learned from them. The relationship between $W$ and $X$ is characterized by an object called the propensity score, which plays a crucial role in causal analysis \citep{rosenbaum1983central}. Building on \cite{imbens2000role} and \cite{hirano2004propensity}, the propensity score can be generally defined as follows.

\begin{definition}[Propensity Score]
    \label{definition:ps}
    For all $x \in \mathbb{X}$, let $\tilde{p}(\cdot\,|\,x)$ be the Radon--Nikodym derivative of the probability measure induced by $W$ conditional on $X = x$ with respect to the relevant measure. Then, the (generalized) propensity score is given by $p(w, x) = \tilde{p}(w\,|\,x)$ for all $(w,x) \in \mathbb{W} \times \mathbb{X}$.
\end{definition}

\begin{remark}
    When $\mathbb{W}$ is discrete, $p(x; w) = \mathbb{P}\{W = w \,|\,X = x\} = \E[\,\mathbb{I}\{W = w\}\,|\,X = x]$ for all $(w,x) \in \mathbb{W} \times \mathbb{X}$, according to the above definition. However, if $W \,|\,X = x$ (i.e., $W$ conditional on $X = x$) is continuously distributed on an open interval $\mathbb{W} \subset \R$ for all $x \in \mathbb{X}$, then $p(w, x)$ is a conditional probability density function, which may be written using the Dirac delta $\delta(\cdot)$ notation as $p(w, x) = \E[\,\delta(W - w) \,|\,X = x]$, since $\int_{\mathbb{W}}\, \delta(\tilde{w} - w) \, p(\tilde{w}, x) \, d\tilde{w} = p(w, x)$ $\,\,\forall\,(w,x) \in \mathbb{W} \times \mathbb{X}$.
\end{remark}

Note that Definition \ref{definition:ps} is applicable even if $W$ is not a purely discrete or a purely continuous random variable. For example, if $W$ is a mixed continuous and discrete random variable, then $\tilde{p}(\cdot \,|\,x)$ in the above definition is the Radon--Nikodym derivative of the probability measure (induced by $W \,|\,X = x$) with respect to a combined measure, which equals the Lebesgue measure plus another measure on $\R$ for which the measure of any Borel set is equal to the number of integers on that Borel set. Definition \ref{definition:ps} can be applied even in a setting where $W$ is a random vector.

It is relatively less difficult to learn about the marginal distributions of the potential outcomes (under some assumptions) than about the joint distribution of the potential outcomes. Thus, it is useful to focus on parameters such as the ``Conditional Average Response'' (CAR) function and the `Conditional Average Contrast'' (CAC), which are defined as follows.

\begin{definition}[CAR: Conditional Average Response]
    \label{definition:car} CAR function $\mu: \mathbb{W} \times \mathbb{X} \to \R$ is given by
    $$\mu(w,x) \equiv \mu_w(x) = \mathbb{E}[Y^*_w \mid X = x].$$
\end{definition}

\begin{definition}[CAC: Conditional Average Contrast]
    \label{definition:cac} CAC function $\theta[\widetilde{\mathbb{W}}, \kappa]: \mathbb{X} \to \R$ is given by
    $$\textstyle \theta[\widetilde{\mathbb{W}}, \kappa](x) = \sum_{w \,\in\, \widetilde{\mathbb{W}}} \,\,\kappa_w\,\mu_w(x)$$
    when $\widetilde{\mathbb{W}}$ is a finite subset of $\mathbb{W}$ and $\kappa \equiv (\kappa_w)_{w\,\in\,\mathbb{W}}$ is a $|\widetilde{\mathbb{W}}|$-dimensional vector of real constants. In addition, the Average Contrast $\bar{\theta}(\widetilde{\mathbb{W}}, \kappa)$ is a real number given by $\,\,\bar{\theta}(\widetilde{\mathbb{W}}, \kappa) = \mathbb{E}[\theta[\widetilde{\mathbb{W}}, \kappa](X)]$.
\end{definition}

\begin{remark}
    When $\widetilde{\mathbb{W}} = \mathbb{W} = \{0,1\}$ and $\kappa_w = 2\,w - 1$, CAC equals $(1)\,\mu(1,x) + (-1)\,\mu(0,x)$. See Definition \ref{definition:cate}. When $\mathbb{W}$ is a bounded open interval on $\mathbb{R}$, an extension of the notion of CAC is given by $\,\,\widetilde{\theta}[\kappa](x) = \int_\mathbb{W}\,\kappa(w)\,\mu(w,x)\,dw\,\,$ for all $x \in \mathbb{X}$ when $\kappa: \mathbb{W} \to \R$ is a bounded function.
\end{remark} 

In addition to CAR and CAC, one may also be interested in parameters that depend on the conditional marginal distributions of the potential outcomes. For this purpose, it is useful to define the ``Conditional Response Distribution'' (CRD) as follows.

\begin{definition}[CRD: Conditional Response Distribution]
    \label{definition:crd} CRD function $\varrho: \mathbb{R} \times \mathbb{W} \times \mathbb{X} \to [0, 1]$ is a conditional cumulative distribution function given by
    $$\varrho(u; w,x) \equiv \varrho_w(u; x) = \mathbb{P}\{Y^*_w \leq u \mid X = x \} = \mathbb{E}[\,\mathbb{I}\{Y^*_w \leq u \}\mid X = x] = \E[Y^{*,u}_w \mid X = x],$$
    which is equivalent to the CAR of the transformed potential outcome $Y^{*,u}_w = \mathbb{I}\{Y^*_w \leq u\}$.
\end{definition}

It is difficult to identify the above parameters (CAR, CAC, and CRD) without further assumptions. To make parameter identification feasible, a large strand of literature on causal inference---as reviewed by \cite{imbens2004nonparametric}, \cite{imbens2015causal}, and \cite{wager2020causal}---assumes unconfoundedness (also known as selection on observables, conditional exogeneity, or strongly ignorable treatment assignment), population overlap (i.e., at least weak overlap), and SUTVA (stable unit treatment value assumption). These notions are formalized in Assumptions \ref{assumption:unconfoundedness}--\ref{assumption:pop_overlap} below.

\begin{assumption}[Unconfoundedness]
    \label{assumption:unconfoundedness}
    For all $w \in \mathbb{W}$, $\,\,Y^*_w \indep W \, | \, X$.
\end{assumption}

\begin{assumption}[Stable Unit Treatment Value Assumption]
    \label{assumption:sutva}
    $Y$ satisfies $\,\mathbb{P}\{ Y = Y^*_W \} = 1$.
\end{assumption}

\begin{assumption}[Population Overlap]
    \label{assumption:pop_overlap}
    For all $(w, x) \in \mathbb{W} \times \mathbb{X}$, $\,\, 0 < p(w, x) < \overline{p}$ for some $\overline{p} > 0$.
\end{assumption}

The above assumptions are sufficient to identify CAR as follows for all $(w, x)\in \mathbb{W} \times \mathbb{X}$: $$\mu_w(x) = \mu(w, x) = \E[Y^*_w \mid X = x] = \E[Y^*_W \mid W = w, X = x]  = \E[Y \mid W = w,\, X = x].$$
Note that the conditioning event in the above expression is $\{W = w,\, X = x\}$. Assumption \ref{assumption:pop_overlap}, i.e., $p(w, x) > 0$, ensures that $\{W = w,\, X = x\}$ is not a null event. It is also possible to identify CAR using the conditioning event $\{X = x\}$. For example, when $\mathbb{W}$ is discrete, CAR can be identified as follows for all $(w, x)\in \mathbb{W} \times \mathbb{X}$:
$$\frac{\E[\,\mathbb{I}\{W = w\}\,Y \mid X = x]}{\E[\,\mathbb{I}\{W = w\}\mid X = x]} = \frac{\E[\,\mathbb{I}\{W = w\}\,Y^*_w \mid X = x]}{p(w, x)} = \frac{p(w, x)\,\E[Y^*_w \mid X = x]}{p(w, x)} = \mu_w(x).$$
When $\mathbb{W}$ is an interval, if $K_h(W - w) \equiv h^{-1}\,K\big(\frac{W - w}{h}\big)$, where $K(\cdot)$ is a kernel and $h > 0$, then $$\mathrm{lim}_{\,h \,\downarrow\, 0}\frac{\E[\,K_h(W-w)\,Y \mid X = x]}{\E\,[K_h(W-w) \mid X = x]} = \frac{\int_\mathbb{W}\,\delta(\tilde{w} - w)\,\mu(\tilde{w}, x)\,p(\tilde{w}, x)\,d\tilde{w}}{\E[\delta(W-w) \mid X = x]} = \frac{\mu(w, x)\,p(w, x)}{p(w, x)} = \mu_w(x).$$
Thus, the positivity of the propensity score plays a key role in the identification of CAR. However, note that Assumption \ref{assumption:pop_overlap} is not the same as the more restrictive condition of strict (or strong) overlap defined below. Although a vast number of results in the existing literature depend on strict overlap, it is a testable assumption that may lack empirical support \citep{ma2022testing,lei2021distribution}. In this paper, I only assume population overlap (Assumption \ref{assumption:pop_overlap}), which allows for arbitrarily low propensity scores, i.e., ``weak\,/\,limited overlap,'' rather than the following condition.

\begin{definition}[Strict\,/\,Strong Overlap]
    \label{definition:strict_overlap}
    There exist $(\underline{p}^*, \overline{p}^*) \in \R_{>\,0}^2$ and a function $\underline{p}: \mathbb{W} \to \R_{>\,0}$ such that $0 < \underline{p}^* < \underline{p}(w) < p(w, x) < \overline{p}^*$ for all $(w, x) \in \mathbb{W} \times \mathbb{X}$. Thus, strict\,/\,strong overlap holds when the (generalized) propensity score function is uniformly bounded away from zero.
\end{definition}

Since CAR is identified under Assumptions \ref{assumption:unconfoundedness}--\ref{assumption:pop_overlap}, CAC and CRD are also identified because they both can be expressed using different forms of CAR. Although these three assumptions are adequate for identifying the parameters of interest, estimating them is still challenging without more restrictions (on the outcomes, covariates, and observations), such as the following assumptions used by, e.g., \cite{kitagawa2018should,kennedy2022towards,kennedy2022minimax}.

\begin{assumption}[Bounded Outcomes]
    \label{assumption:bounded_outcomes}
    For all $w \in \mathbb{W}$, $\,\mathbb{P}\{ |Y^*_w| \leq C/2\} = 1$ for some constant $C$.
\end{assumption}

\begin{assumption}[Bounded Covariates with Bounded Distribution]
    \label{assumption:bounded_covariates}
    The support $\mathbb{X}$ of $X$ is a Cartesian product of compact intervals. In addition, the Radon--Nikodym derivative of the probability measure induced by the random element $X$ is bounded from above and bounded away from zero.
\end{assumption}

\begin{assumption}[Independent and Identically Distributed Observations]
    \label{assumption:iid}
    For each $i \in \{1,\dots,n\}$, the vector $V_i \equiv (Y_i, X_i, W_i)$ has the same distribution as $V \equiv (Y, X, W)$, and $V_i \indep V_j$ for all $j \neq i$.
\end{assumption}

Several authors \citep[see, e.g.,][]{kitagawa2018should,kennedy2022towards,kennedy2022minimax} use Assumption \ref{assumption:bounded_outcomes}, which restricts the support of the potential outcomes, because it is convenient analytically; it also rules out cases that are practically unimportant but theoretically pathological. It implies that CAR and CAC are also bounded in magnitude. Assumption \ref{assumption:bounded_outcomes} may seem restrictive, but it is indeed satisfied in practice for most economic or health outcomes of interest and also, of course, for binary outcomes. Nevertheless, Assumption \ref{assumption:bounded_outcomes} is not strictly needed; it is possible to do the analysis in this paper by dropping it and using a weaker version of it. For example, one could instead assume that the conditional marginal distributions of the potential outcomes (conditional on covariates) satisfy standardized uniform integrability \citep{romano2012uniform}. Alternatively, one could use an even weaker assumption that $\E[(Y^*_w)^2 \,|\,X = x] < \infty$ for all $(w, x) \in \mathbb{W} \times \mathbb{X}$. Assumptions \ref{assumption:bounded_covariates} and \ref{assumption:iid} are also quite standard in the literature \citep[see, e.g.,][]{hirano2003efficient}.

The majority of the literature on causal inference under unconfoundedness focuses on the empirically important setting where observations can be classified as ``treated'' and ``untreated'' (or ``control'') units. In this setting, $\mathbb{W} = \{0, 1\}$, and $W$ is a binary indicator of treatment status; the event $\{W = 0\}$ refers to the control state, and the event $\{W = 1\}$ refers to the treatment state. In this case, the parameters CAR, CAC, and the related parameters all have special names as follows.

\begin{definition}[CTM, CCM, UTM, UCM: Conditional\,/\,Unconditional Treatment\,/\,Control Mean]
    \label{definition:ctm} When $\mathbb{W} = \{0, 1\}$ so that $W$ is binary, CTM is $\mu_1(x) = \mu(1,x)$ and CCM is $\mu_0(x)=\mu(0,x)$ for all $x \in \mathbb{X}$. In addition, UTM is $\overline{\mu}_1 = \E[\,\mu_1(X)]$ and UCM is $\overline{\mu}_0 = \E[\,\mu_0(X)]$.
\end{definition}

\begin{definition}[CATE: Conditional Average Treatment Effect]
    \label{definition:cate} When $\mathbb{W} = \{0, 1\}$ so that the treatment status is binary, the CATE function $\tau: \mathbb{X} \to \R$ is given by
    $$\tau(x) = \E[Y^*_1 - Y^*_0 \mid X = x] = \E[Y^*_1 \mid X = x] - \E[Y^*_0 \mid X = x] = \mu(1,x) - \mu(0,x) = \mu_1(x) - \mu_0(x).$$
\end{definition}

\begin{definition}[PATE: Population Average Treatment Effect]
    \label{definition:pate} When $\mathbb{W} = \{0, 1\}$, the PATE is
    $$\tau^* = \E[\tau(X)] = \E[\E[Y^*_1 - Y^*_0 \mid X]] = \E[Y^*_1 - Y^*_0] = \E[Y^*_1] - \E[Y^*_0].$$
\end{definition}

\begin{definition}[EATE: Empirical Average Treatment Effect]
    \label{definition:eate} When $\mathbb{W} = \{0, 1\}$ so that the treatment status is binary, the EATE for the sample is given by $$\tau^\circ = \widehat{\E}_n[\tau(X_i)] \equiv \frac{1}{n}\sum_{i = 1}^n \tau(X_i) = \frac{1}{n}\sum_{i = 1}^n \E[Y^*_1 - Y^*_0 \mid X = X_i].$$
\end{definition}

\begin{definition}[SATE: Sample Average Treatment Effect]
    \label{definition:sate} When $\mathbb{W} = \{0, 1\}$ and the observations are such that $Y_i = W_i\,Y^*_{1,i} + (1-W_i)\,Y^*_{0,i}$, where $(Y^*_{w,i})_{w \in \mathbb{W}}$ are the potential outcomes, for all $i \in \{1,\dots,n\}$, the SATE for the sample is given by $$\overline{\tau} = \frac{1}{n}\sum_{i = 1}^n Y^*_{1,i} - Y^*_{0,i}.$$
\end{definition}

Assumptions \ref{assumption:unconfoundedness}, \ref{assumption:pop_overlap}, and \ref{assumption:sutva} identify all of the above parameters except for SATE, which is not nonparametrically identified because it directly involves the unobserved counterfactual outcomes rather than expectations. However, note that PATE equals the expected value of both EATE and SATE. Other parameters, such as quantile treatment effects, are briefly considered in a later subsection, but the rest of the paper mostly focuses on aspects of the CATE function. Although it would be ideal to know CATE at every point on $\mathbb{X}$, an interpretable low-level summary of CATE would be much more useful in practice. For this purpose, an object called ``Best Summary of CATE'' (BATE) is defined as follows and includes the ``Group Average Treatment Effect'' (GATE) vector (defined below) and the PATE as special cases. GATEs are common parameters of interest in the literature because they are easily interpretable. The finite-dimensional BATE parameter is of interest for another obvious reason: estimates of the higher-dimensional (or infinite-dimensional) CATE function may have too much statistical uncertainty to be practically useful.

\begin{definition}[BATE: ``Best'' Summary of CATE Function]
    \label{definition:bate}
    If $\mathbb{W} = \{0,1\}$ and $\widetilde{Z}: \mathbb{X} \to \R^{d_Z}$ is a known function that converts $x \in \mathbb{X}$ into a finite-dimensional vector $\widetilde{Z}(x)\in \R^{d_Z}$ of basis functions so that $Z \equiv \widetilde{Z}(X)$ is not perfectly collinear almost surely, then BATE (``best'' summary of the conditional average treatment function, but ``best'' only in a particular sense) is given by
    $$\beta = \mathrm{arg\,min}_{\tilde{\beta} \,\in\, \R^{d_Z}}\,\E[(\tau(X) - \tilde{\beta}'\widetilde{Z}(X))^2] \text{, so } \E[Z(\tau(X) - Z'\beta)] = 0 \text{ and } \beta = \E[Z\,Z']^{-1}\E[Z\,\tau(X)],$$ 
    where $Z \equiv \widetilde{Z}(X)$. It also convenient to write $Z_i \equiv \widetilde{Z}(X_i)$ for all $i \in \{1,\dots,n\}$. When $Z$ contains binary indicators for population subgroups of interest, then BATE can be interpreted as the vector containing Group Average Treatment Effects (GATEs). When $\widetilde{Z}(\cdot) \equiv 1$, BATE equals PATE.
\end{definition}

Since CATE is a specific version of CAR, which is identified, it follows that BATE is also identified. The propensity of being treated is useful for expressing BATE in terms of observables. 
\begin{definition}[Propensity Score for Binary Treatment Indicator]
    \label{definition:ps_binary}
    When $\mathbb{W} = \{0,1\}$, let the term ``propensity score'' refer to $e(x) \equiv p(1,x) = 1 - p(0,x)$ for all $x \in \mathbb{X}$ without loss of generality, since $p(0,X) + p(1,X) = 1$ holds almost surely. In addition, let $e_i \equiv e(X_i)$ for all $i \in \{1, \dots, n\}$.
\end{definition}
\noindent Specifically, since $e_i = \E[W_i \mid X_i]$ and the expected value of $\mathcal{I}_i \equiv \frac{(W_i - e_i)\,Y_i}{e_i\,(1-e_i)}= \frac{W_i\,Y_i}{e_i} - \frac{(1-W_i)\,Y_i}{1 - e_i}$ given $X_i$ is $\E[\,\mathcal{I}_i \mid X_i] = \E\Big[\frac{W_i\,Y_i}{e_i} \Big\lvert \,X_i\Big] - \E\Big[\frac{(1-W_i)\,Y_i}{1 - e_i} \Big\lvert \,X_i\Big] = \frac{e_i\,\E[Y^*_{1,i} \mid X_i]}{e_i} - \frac{(1-e_i)\,\E[Y^*_{0,i} \mid X_i]}{1 - e_i}= \tau(X_i)$,
$$\E[Z_i\,Z'_i]^{-1}\E[Z_i\,\mathcal{I}_i] =  \E[Z_i\,Z'_i]^{-1}\E[Z_i\,\E[\,\mathcal{I}_i \mid X_i]] =\E[Z_i\,Z'_i]^{-1}\E[Z_i\,\tau(X_i)] = \beta$$
is identified under Assumptions \ref{assumption:unconfoundedness}--\ref{assumption:pop_overlap}. However, these conditions alone do not guarantee that the sample analog of $\E[Z_i\,Z'_i]^{-1}\E[Z_i\,\mathcal{I}_i]$ is $\sqrt{n}$-consistent and asymptotically normal, even in the simple case where $\widetilde{Z}(\cdot) \equiv 1$ \citep{khan2010irregular}. This motivates the next section.

\clearpage 

\clearpage

\section{Non-Inverse Alternatives to Inverse Probability Weighting}
\label{section:alternatives}

If strict overlap (in Definition \ref{definition:strict_overlap}) holds in addition to Assumptions \ref{assumption:unconfoundedness}--\ref{assumption:iid}, then it is quite straightforward to prove that $\hE[Z_i\,Z'_i]^{-1}\hE[Z_i\,\mathcal{I}_i]$, where $\hE[\,\cdot\,] \equiv \frac{1}{n}\sum_{i = 1}^n [\,\cdot\,]$, is a $\sqrt{n}$-consistent and asymptotically normal estimator of $\beta$ in the case where $W$ is binary. This estimator is based on a simple regression of $\mathcal{I}_i$ on $Z_i$; strict overlap (together with the other assumptions) ensures that $0 \prec \mathbb{V}[Z_i(\mathcal{I}_i - Z'_i\beta)] \prec \infty$, where $\mathbb{V}[\,\cdot\,]$ denotes the covariance matrix. The regressand $\mathcal{I}_i$ is called a ``pseudo-outcome'' \citep[see, e.g.,][]{kennedy2022towards} based on ``Inverse Probability Weighting'' (IPW), which is a widely used tool for causal inference. There are two versions of it \citep[see, e.g,][]{hirano2003efficient, hahn1998role} that are practically different but are based on the same underlying idea.

\begin{definition}[IPW: Inverse Probability Weighting]
    \label{definition:ipw}
    For all $w \in \mathbb{W}$, let $\widetilde{K}(W - w)$ denote either $\mathbb{I}\{W - w = 0\}$ or $\mathrm{lim}_{\,h\,\downarrow\,0} \,K_h(W-w)$, depending on whether $\mathbb{W}$ is a discrete set or a continuum, respectively. Then, IPW refers to the following related concepts that can both identify CAR:
    $$\frac{\E[\,\widetilde{K}(W-w)\,Y \mid X = x]}{p(w, x)} = \mu(w, x) \,\,\text{ and }\,\, \mu(w, x) = \E\Bigg[\frac{\widetilde{K}(W-w)\,Y}{p(w, x)}\,\Bigg\lvert\, X = x \Bigg]$$
    for all $(w,x) \in \mathbb{W} \times \mathbb{X}$. They are mathematically equivalent but have different practical implications for empirical work. For identifying CAC, IPW refers to the following related concepts:
    $$\theta[\widetilde{\mathbb{W}}, \kappa](x) = \sum_{w \,\in\, \widetilde{\mathbb{W}}} \,\,\kappa_w\,\frac{\widetilde{\mu}(w,x)}{p(w, x)}, \text{ where } \widetilde{\mu}(w,x) \equiv \E[\,\widetilde{K}(W-w)\,Y \mid X = x],$$
    is the sum of $|\widetilde{\mathbb{W}}|$ number of inverse-weighted conditional expectations \citep[see, e.g.,][]{hahn1998role}; and
    $$\theta[\widetilde{\mathbb{W}}, \kappa](x) = \E[\, \mathcal{I}[\widetilde{\mathbb{W}}, \kappa; p, x] \mid X = x], \text{ where } \mathcal{I}[\widetilde{\mathbb{W}}, \kappa; p, x] \equiv \sum_{w \,\in\, \widetilde{\mathbb{W}}} \,\,\kappa_w\,\frac{\widetilde{K}(W-w)\,Y}{p(w, x)},$$
    involves only one conditional expectation: that of the IPW ``pseudo-outcome'' $\mathcal{I}[\widetilde{\mathbb{W}}, \kappa; p, x]$, which is designed for the parameter $\theta[\widetilde{\mathbb{W}}, \kappa]$ \citep[see, e.g.,][]{hirano2003efficient,kennedy2022towards}.
\end{definition}

\begin{remark}
    When $\mathbb{W} = \widetilde{\mathbb{W}} = \{0,1\}$ and $\kappa = (-1, 1)$, then $\theta[\widetilde{\mathbb{W}}, \kappa] \equiv \tau$ (CATE). For all $x\in\mathbb{X}$,
    $$\tau(x) = \E[\,\mathcal{I}[\widetilde{\mathbb{W}}, \kappa; p, x] \mid X = x], \text{ where } \mathcal{I}[\widetilde{\mathbb{W}}, \kappa; p, x] = \frac{W\,Y}{e(x)} - \frac{(1-W)\,Y}{1 - e(x)} = \frac{[W - e(x)]\,Y}{e(x)[1-e(x)]}.$$
\end{remark}

In the above expressions, the propensity score is a nuisance function but plays a central role in the identification of CAR and CAC. However, it is possible to identify CAC (and thus also CAR) in another manner, for which the propensity score is relatively less central. A widely used strategy in the literature \citep{kennedy2022towards,semenova2021debiased,su2019non,chernozhukov2022locally}, based on the proposal of \cite{robins1994estimation}, augments the IPW pseudo-outcome to make identification of CAC robust to misspecification of either the propensity score function or the conditional means of the potential outcomes. The introduction of these additional nuisance functions limits the importance of the propensity score for identifying CAC.

\begin{definition}[AIPW: Augmented Inverse Probability Weighting]
    \label{definition:aipw}
    Let $\gamma$ and $\varphi$ be functions from $\mathbb{W} \times \mathbb{X}$ to $\R$ such that either $\gamma(w, x) = \E[Y \mid W = w,\, X = x]$ or $\varphi(w, x) = \E[\,\widetilde{K}(W - w) \mid X = x]$ holds for all $(w, x) \in \mathbb{W} \times \mathbb{X}$. Then, following AIPW concept can identify CAC (and thus CAR):
    $$\theta[\widetilde{\mathbb{W}}, \kappa](x) = \E[\, \mathcal{A}[\widetilde{\mathbb{W}}, \kappa; \gamma, \varphi, x] \mid X = x], \text{ where } \mathcal{A}[\widetilde{\mathbb{W}}, \kappa; \gamma, \varphi, x] \equiv \sum_{w \,\in\, \widetilde{\mathbb{W}}} \,\,\kappa_w\,\frac{\mathcal{A}_w[\widetilde{\mathbb{W}}, \kappa; \gamma, \varphi, x]}{\varphi(w, x)}$$
    and $\mathcal{A}_w[\widetilde{\mathbb{W}}, \kappa; \gamma, \varphi, x] \equiv \widetilde{K}(W - w)\,Y + \gamma(w, x)[- \widetilde{K}(W - w) + \varphi(w, x)]$ so that
    $$\mathcal{A}[\widetilde{\mathbb{W}}, \kappa; \gamma, \varphi, x] = \sum_{w \,\in\, \widetilde{\mathbb{W}}} \,\,\kappa_w\,\Bigg[\gamma(w,x) + \frac{\widetilde{K}(W - w)}{\varphi(w,x)}\,[Y - \gamma(w,x)]\Bigg]$$
    is the AIPW ``pseudo-outcome.'' For all $x \in \mathbb{X}$, note that
    $$\E[\, \mathcal{A}[\widetilde{\mathbb{W}}, \kappa; \gamma, \varphi, x] \mid X = x] = \sum_{w \,\in\, \widetilde{\mathbb{W}}} \,\,\kappa_w\,\Bigg[ \gamma(w,x) + \frac{p(w, x)\,\mu(w,x)}{\varphi(w,x)} - \frac{p(w, x)\,\gamma(w,x)}{\varphi(w,x)} \Bigg]$$ equals $\theta[\widetilde{\mathbb{W}}, \kappa](x)$ if $\gamma \equiv \mu$ or $\varphi \equiv p$, and so AIPW has a ``double robustness'' property, i.e., it is ``doubly robust'' to misspecification of either $\varphi$ or $\gamma$ (but not both).
\end{definition}

\begin{remark}
    When $\mathbb{W} = \widetilde{\mathbb{W}} = \{0,1\}$ and $\kappa = (-1, 1)$, then $\theta[\widetilde{\mathbb{W}}, \kappa] \equiv \tau$ (CATE). For all $x\in\mathbb{X}$,
    $\tau(x) = \E[\,\mathcal{A}[\widetilde{\mathbb{W}}, \kappa; \gamma, \varphi, x] \mid X = x]$, where $\varphi$ is such that $1 - \varphi(0,\cdot) = \varphi(1,\cdot) \equiv \varphi_1(\cdot)$ and $$\mathcal{A}[\widetilde{\mathbb{W}}, \kappa; \gamma, \varphi, x] = \gamma(1,x) + \frac{W\,(Y - \gamma(1,x))}{\varphi_1(x)} - \gamma(0,x) - \frac{(1-W)\,(Y - \gamma(0,x))}{1 - \varphi_1(x)} $$
    $$\implies \mathcal{A}[\widetilde{\mathbb{W}}, \kappa; \gamma, \varphi, x] = [\gamma(1,x) - \gamma(0,x)] + \frac{[W - \varphi_1(x)]\,[Y - \gamma(W, x)]}{\varphi_1(x)[1-\varphi_1(x)]}.$$
\end{remark}

\vspace*{4mm}

To understand the implications of limited overlap (i.e., scenario where strict overlap does not necessarily hold) for IPW or AIPW pseudo-outcome regressions \citep[see, e.g.,][]{kennedy2022towards}, it is useful to consider the case with binary treatment rather than the multivalued treatment setting. This helps in keeping the focus on key ideas rather than cumbersome notation. Hence, I deliberately do not return to the (conceptually similar) multivalued treatment setting until later in Section \ref{section:spw}.

For flexible estimation of CATE in the binary treatment setting, the latest methods in the literature \citep[see, e.g.,][]{kennedy2022towards} involve nonparametric regressions of $\mathcal{A}[\{0, 1\}, (-1, 1); \widehat{\gamma}, \widehat{\varphi}, X_i]$ on $X_i$, where $\widehat{\gamma}$ and $\widehat{\varphi}$ are nonparametric estimates of the respective nuisance parameters that are constructed using a separate auxiliary sample. Convergence rates of such estimators crucially depend on the risk of the ``oracle estimator,'' which nonparametrically regresses $\mathcal{A}[\{0, 1\}, (-1, 1); \gamma, \varphi, X_i]$ on $X_i$. This ``oracle risk'' further depends on the conditional variance of $\mathcal{A}[\{0, 1\}, (-1, 1); \gamma, \varphi, X]$, which is the AIPW pseudo-outcome. Similarly, the conditional variance of the IPW pseudo-outcome $\mathcal{I}[\{0, 1\}, (-1, 1); \gamma, \varphi, X]$ determines the risk of the oracle estimator that nonparametrically regresses $\mathcal{I}[\{0, 1\}, (-1, 1); p, X_i]$ on $X_i$. However, without strict overlap, the conditional variances of the IPW and AIPW pseudo-outcomes may not necessarily be bounded, as discussed below. Thus, the associated oracle estimators may themselves be unstable under limited overlap.

\begin{remark}
    For all $x \in \mathbb{X}$, the conditional variance of the IPW pseudo-outcome is given by
    $$\mathbb{V}[\,\mathcal{I}[\{0, 1\}, (-1, 1); p, X] \mid X = x] = \mathbb{V}\Bigg[\frac{(W-e(X))Y}{e(X)[1-e(X)]} \,\Bigg\lvert X = x\Bigg] = \frac{\mathbb{V}[(W - e(X))Y \mid X = x]}{e(x)^2[1-e(x)]^2},$$
    where the numerator $\mathbb{V}[(W - e(X))Y \mid X = x]$ is uniformly bounded because $(W - e(X))Y$ is a bounded random variable by Assumptions \ref{assumption:unconfoundedness}--\ref{assumption:iid}. Letting $\sigma^2_w(x) = \mathbb{V}[Y^*_w \mid X = x]$ for $w \in \{0,1\}$, $\mathbb{V}[\,\mathcal{I}[\{0, 1\}, (-1, 1); p, X] \mid X = x] = \E[(W-e(X))Y/[e(X)(1-e(X))] \mid X = x]$ simplifies to
    $$\frac{e(x)[1-e(x)]^2 \E[(Y^*_1)^2 \mid X = x]}{e(x)^2[1-e(x)]^2} + \frac{[1-e(x)]e(x)^2\E[(Y^*_0)^2 \mid X = x]}{e(x)^2[1-e(x)]^2} - \mathbb{E}\Bigg[\frac{(W-e(X))Y}{e(X)[1-e(X)]} \,\Bigg\lvert X = x\Bigg]^2 $$
    $$= \frac{[1-e(x)]\,[\sigma_1^2(x) + \mu_1^2(x)] + e(x)\,[\sigma_0^2(x) + \mu_0^2(x)] - e(x)[1-e(x)]\tau(x)^2}{e(x)[1-e(x)]} \equiv \frac{\mathcal{C}_\mathcal{I}(x)}{e(x)[1-e(x)]}.$$
    The numerator $\mathcal{C}_\mathcal{I}(x)$ in the above expression is bounded above (due to Assumptions \ref{assumption:unconfoundedness}--\ref{assumption:iid}), but the reciprocal of the denominator $e(x)[1-e(x)]$ is not bounded, unless strict overlap holds. Therefore, in general, the conditional variance of the IPW pseudo-outcome may be arbitrarily large under limited overlap (unless obscure restrictions, such as $\mathcal{C}_\mathcal{I}(x) \propto e(x)[1-e(x)]$, hold). Similarly, for all $x\in\mathbb{X}$, the conditional variance of the AIPW pseudo-outcome
    $$\mathbb{V}[\,\mathcal{A}[\{0, 1\}, (-1, 1); \gamma, \varphi, X] \mid X = x] = \frac{\mathcal{C}_\mathcal{A}(x)}{\varphi_1(x)^2[1-\varphi_1(x)]^2}$$ may not necessarily be bounded if $\varphi$ is not uniformly bounded away from zero, even though the numerator $\mathcal{C}_\mathcal{A}(x) \equiv \mathbb{V}[ \varphi_1(X)[1-\varphi_1(X)][\gamma(1,X) - \gamma(0,X)] + [W - \varphi_1(X)]\,[Y - \gamma(W, X)] \mid X = x]$ itself is bounded (by Assumptions \ref{assumption:unconfoundedness}--\ref{assumption:iid} and the associated restrictions on $\varphi$ and $\gamma$). 
\end{remark}

The above arguments, related to the points made by \cite{khan2010irregular} and others, show that the (augmented) IPW technique is a double-edged sword, because even simple propensity score models \citep[see, e.g.,][]{heiler2021valid}, such as $e(X) = X = \Phi(2\,X^*)$ with $X^* \sim \mathcal{N}(0,1)$, violate strict overlap. The main component of the (A)IPW pseudo-outcome (i.e., the denominator equalling the conditional variance of the binary treatment indicator) is powerful and empirically useful (under strong overlap) because it allows for nonparametric estimation of PATE without strong restrictions on the heterogeneity in the unit-level treatment effects \citep{hirano2003efficient}. However, the inverse weighting component is also the reason for the potential instability of the (A)IPW technique in many empirically relevant scenarios (involving limited overlap). One need not look beyond a very simple setting to understand the issue: a situation where the propensity score function is known and the CATE has a linear form. My approach to the problem in this setting, as demonstrated in this section, illustrates and motivates my general framework in Section \ref{section:spw} that tackles the challenge from semi/non-parametric perspectives. Hence, throughout this section, I make the following assumption and also treat the propensity score function $e(\cdot)$ as known.

\begin{assumption}[Linearity of CATE]
    \label{assumption:linear_cate}
    For all $x \in \mathbb{X}$, CATE equals $\tau(x) = \beta'\widetilde{Z}(x)$, where $\beta \in \R^{d_Z}$ is an unknown $d_Z$-dimensional parameter and $\widetilde{Z}: \mathbb{X} \to \R^{d_Z}$ is a known function that generates a finite-dimensional basis vector in $\R^{d_Z}$ such that $Z \equiv \widetilde{Z}(X)$ is not perfectly collinear almost surely. In this case, note that $\beta$ coincides with the BATE parameter in Definition \ref{definition:bate}. 
\end{assumption}

I now define a class of ``General Probability Weighting'' (GPW) estimators that may be used to estimate the above parameter of interest $\beta$. This estimator class includes the usual IPW estimator but also ``Non-Inverse Probability Weighting'' (NPW) estimators that are defined below. As shown and demonstrated later, NPW estimators are asymptotically normal under Assumptions \ref{assumption:unconfoundedness}--\ref{assumption:iid} and Assumption \ref{assumption:linear_cate} even if the IPW estimator may not be $\sqrt{n}$-consistent and asymptotically normal.

\begin{definition}[GPW: General Probability Weighting]
    \label{definition:gpw_estimator} Let $e_i \equiv e(X_i)$ and $Z_i \equiv \widetilde{Z}(X_i)$ for all $i \in \{1,\dots,n\}$. Then, for any $\nu \in \R$, the associated GPW estimator $\widehat{\beta}_\nu$ (with index $\nu$) is given by
    $$\widehat{\beta}_\nu = \hE[(e_i(1-e_i))^{\nu + 1}Z_i\,Z'_i]^{-1}\hE[(e_i(1-e_i))^{\nu}Z_i(W_i-e_i)Y_i] = \hE[Z_i(\nu)Z_i(\nu)']^{-1}\hE[Z_i(\nu)Y_i(\nu)],$$
    where $\hE[\,\cdot\,] \equiv \frac{1}{n}\sum_{i = 1}^n [\,\cdot\,]$, $\displaystyle\,Y_i(\nu) \equiv \frac{(W_i-e_i)Y_i}{\sqrt{(e_i(1-e_i))^{1 - \nu}}}$, and $\,Z_i(\nu) \equiv Z_i\sqrt{(e_i(1-e_i))^{\nu + 1}}$.
\end{definition}

\begin{definition}[NPW: Non-Inverse Probability Weighting]
    \label{definition:npw_estimator} The class of NPW estimators is defined as $\{\widehat{\beta}_\nu: \nu \geq 0\}$, which includes every GPW estimator with a nonnegative index, i.e., $\widehat{\beta}_\nu$ with $\nu \geq 0$. Of these, the estimator $\widehat{\beta}_1$ is even more special because its index $\nu = 1$ is the lowest possible nonnegative index such that $Y_i(\nu)$ and $Z_i(\nu)$ (separately rather than just their product) do not involve any inverse weighting components. Thus, the NPW estimator $\widehat{\beta}_1$, which can be obtained by regressing $(W_i - e_i)Y_i$ on $e_i(1-e_i)\,Z_i$, may be called ``the'' NPW estimator.
\end{definition}

\begin{remark}
    Note that the GPW estimator $\widehat{\beta}_{-1}$ with index $\nu = -1$ is the ``usual'' IPW estimator. In addition, any GPW estimator can be expressed as a weighted version of the usual IPW estimator, and vice versa. Specifically, $\widehat{\beta}_\nu = \hE[\omega_i(\nu)\,Z_iZ'_i]^{-1}\hE\Big[\omega_i(\nu)Z_i\frac{(W-e_i)Y_i}{e_i(1-e_i)}\Big],$
    where the weights are given by $\omega_i(\nu) = (e_i(1-e_i))^{\nu + 1}$, for all $\nu \in \R$. Similarly, the IPW estimator is a weighted version of another GPW estimator: $\widehat{\beta}_{-1} = \hE\Big[\frac{1}{\omega_i(\nu)}Z_i(\nu)Z_i(\nu)'\Big]^{-1}\hE\Big[\frac{1}{\omega_i(\nu)}Z_i(\nu)Y_i(\nu)\Big]$ for any $\nu \in \R$.
\end{remark}

\begin{remark}
    As mentioned above, GPW estimators can be expressed as weighted versions of the IPW estimator. However, the GPW estimators should not be confused with Generalized Least Squares (GLS) estimators except in some very special cases (e.g., $\mathcal{C}_\mathcal{I}(x) = \sigma^2[e(x)(1-e(x))]^{-\nu}$ for all $x \in \mathbb{X}$ for some $\sigma^2 > 0$ so that the GPW estimator $\widehat{\beta}_\nu$ coincides with a GLS estimator).
\end{remark}

Even in the simple case where $Z_i = 1$ ``a.s.''~(almost surely), the IPW estimator $\widehat{\beta}_{-1}$ may not be $\sqrt{n}$-estimable \citep{khan2010irregular} under Assumptions \ref{assumption:unconfoundedness}--\ref{assumption:iid} and Assumption \ref{assumption:linear_cate}. Specifically, \cite{ma2020robust} show that the distribution of $\widehat{\beta}_{-1}$ depends on the distribution of the propensity scores and may thus be non-Gaussian (e.g., asymmetric L\'evy stable) asymptotically. More generally, the ``usual'' (multivariate) central limit theorems may not necessarily be applicable to the sample mean of $(e_i(1-e_i))^{\nu}Z_i(W_i-e_i)Y_i$ when $\nu < 0$, in which case the random variable may potentially have infinite variance (if overlap is weak\,/\,limited). In this case, generalized central limit theorems may need to be applied instead, but they typically imply asymptotic non-Gaussianity of $\hE[Z_i(\nu)Y_i(\nu)]$ (and thus of $\widehat{\beta}_\nu$) when $\nu < 0$ under limited overlap. However, I show in this section that the NPW estimators are asymptotically normal unlike the IPW estimators (or, more generally, GPW estimators with negative indices) using the following proposition.

\begin{proposition}
    \label{proposition:npw_cme}
    Under Assumptions \ref{assumption:unconfoundedness}--\ref{assumption:pop_overlap}, $\E[e(X)(1-e(X))\tau(X) - (W - e(X))Y \mid X] = 0$ a.s.
\end{proposition}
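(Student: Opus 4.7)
The plan is to prove the conditional moment identity by expanding $(W - e(X))Y$ using SUTVA together with the binary structure of $W$, and then applying unconfoundedness to split the resulting cross-expectations. First I would invoke Assumption \ref{assumption:sutva} to write $Y = W Y^*_1 + (1-W) Y^*_0$ almost surely, which gives
$$(W - e(X))\,Y = (W - e(X))\,W\,Y^*_1 + (W - e(X))(1-W)\,Y^*_0.$$
Because $W \in \{0,1\}$ almost surely, one has $W^2 = W$ and $W(1-W) = 0$, so $(W - e(X))W = W\,(1 - e(X))$ and $(W - e(X))(1-W) = -e(X)\,(1-W)$. Substituting these reductions collapses the previous display to
$$(W - e(X))\,Y = (1 - e(X))\,W\,Y^*_1 \,-\, e(X)\,(1-W)\,Y^*_0.$$

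Next I would take the conditional expectation given $X$. Since $e(X)$ is $X$-measurable, it factors out, yielding
$$\E[(W - e(X))\,Y \mid X] = (1 - e(X))\,\E[W\,Y^*_1 \mid X] \,-\, e(X)\,\E[(1-W)\,Y^*_0 \mid X].$$
By Assumption \ref{assumption:unconfoundedness}, $Y^*_w \indep W \mid X$ for each $w \in \{0,1\}$, so the conditional cross-expectations factor as products, giving $\E[W\,Y^*_1 \mid X] = e(X)\,\mu_1(X)$ and $\E[(1-W)\,Y^*_0 \mid X] = (1 - e(X))\,\mu_0(X)$. Plugging these in and collecting the common factor produces
$$\E[(W - e(X))\,Y \mid X] = e(X)(1 - e(X))\,[\mu_1(X) - \mu_0(X)] = e(X)(1 - e(X))\,\tau(X),$$
from which the stated identity follows by a single subtraction.

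The argument is purely algebraic once unconfoundedness is applied, so there is no real obstacle to overcome; the only subtlety worth flagging is that Assumption \ref{assumption:pop_overlap} plays no role, since nothing is ever divided by $e(X)$ or $1-e(X)$. This is precisely the point of casting the moment condition in ``non-inverse'' form: the identity is valid even when $e(X)(1-e(X))$ is arbitrarily close to zero, which is what will later allow the NPW estimator $\widehat{\beta}_1$ to remain $\sqrt{n}$-consistent and asymptotically normal in regimes where the IPW estimator $\widehat{\beta}_{-1}$ fails.
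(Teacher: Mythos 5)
Your proof is correct and follows essentially the same route as the paper's: the paper's one-line computation $\E[(W-e(X))Y \mid X=x] = e(x)(1-e(x))\mu_1(x) - (1-e(x))e(x)\mu_0(x) = e(x)(1-e(x))\tau(x)$ is exactly your SUTVA expansion plus unconfoundedness, just stated without the intermediate steps. Your added observation that Assumption \ref{assumption:pop_overlap} is never actually used (nothing is divided by $e(X)$ or $1-e(X)$) is accurate and consistent with the point of the non-inverse formulation.
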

\begin{proof}
    Note that $\E[(W- e(X))Y \mid X = x] = e(x)(1-e(x))\mu_1(x) + (1-e(x))(-e(x))\mu_0(x) = e(x)(1-e(x))\tau(x) \implies e(x)(1-e(x))\tau(x) - \E[(W- e(X))Y \mid X = x] = 0$ for all $x \in \mathbb{X}$.
\end{proof}

\begin{remark}[Connection to the Anderson--Rubin Test for Models with Weak Instruments] There is an interesting connection between the usual Anderson--Rubin test in the literature on weak instrumental variable models \citep[see, e.g.,][]{andrews2019weak} and the above conditional moment equality. As discussed in Definition \ref{definition:ipw}, for all $x \in \mathbb{X}$, $\tau(x) = \theta[\{0,1\},(-1,1)](x)$ in the binary treatment case is as follows: $\tau(x) = \E[W\,Y \mid X = x]/e(x) - \E[(1-W)\,Y \mid X = x]/(1 - e(x))$, which \cite{hahn1998role} uses to construct a semiparametrically efficient estimator of PATE, and so $$\tau(x) = \frac{(1-e(x))\E[W\,Y\mid X = x] - e(x)\E[(1-W)\,Y \mid X = x]}{e(x)[1-e(x)]} \equiv \frac{\tau_{\mathrm{numerator}}(x)}{\tau_{\mathrm{denominator}}(x)}$$
is a ratio with the numerator $\tau_{\mathrm{numerator}}(x) \equiv \E[W\,Y \mid X = x] - \E[W \mid X = x]\,\E[Y \mid X = x]$ and the denominator $\tau_{\mathrm{denominator}}(x) = \E[W \mid X = x]\E[1 - W \mid X = x]$. In the limited overlap case, this denominator can be arbitrarily close to zero. The main parameter in the just-identified (single) weak instrumental variable model can also be expressed as a ratio with a denominator that can be very close to zero; in this case, the parameter of interest is only ``weakly identified'' and conventional statistical inference procedures can be misleading. A remedy that is often used is the Anderson--Rubin test, which is robust to such ``weak identification'' in a particular sense. The main trick that the Anderson--Rubin test uses is to simply rearrange the equation, just as how the above equation for CATE can be rearranged as follows: $\tau(x)\,\tau_{\mathrm{denominator}}(x) - \tau_{\mathrm{numerator}}(x) = 0$, i.e., $\tau(x)[e(x)(1-e(x))] - \E[(W- e(X))Y \mid X = x] = 0$, for all $x \in \mathbb{X}$, as in Proposition \ref{proposition:npw_cme}.    
\end{remark}

\begin{proposition}[Asymptotic Normality of Non-Inverse Probability Weighting (NPW) Estimators with $\nu \geq 0$]
    \label{proposition:npw_asym_normality}
    Under Assumptions \ref{assumption:unconfoundedness}--\ref{assumption:iid} and \ref{assumption:linear_cate}, $\sqrt{n}(\widehat{\beta}_\nu - \beta) \rightsquigarrow \mathcal{N}(0, \Sigma_\nu)$ for any $\nu \geq 0$, where $\Sigma_\nu = \Sigma_{\nu,1}^{-1} \Sigma_{\nu,2} \Sigma_{\nu,1}^{-1}$ is the covariance matrix such that $\Sigma_{\nu,1} = \E[(e_i(1-e_i))^{\nu+1}Z_iZ'_i]$ and $\Sigma_{\nu,2} = \mathbb{E}[Z_iZ'_i\{e_i(1-e_i)\}^{2\nu}\{(W_i - e_i)Y_i - e_i(1-e_i)Z'_i\beta\}^2]$. If $\widehat{\Sigma}_\nu$ is the sample analog of $\Sigma_\nu$ (using $\hE$ instead of $\E$ and $\widehat{\beta}_\nu$ instead of $\beta$), then $\widehat{\Sigma}^{-1/2}_\nu \sqrt{n}(\widehat{\beta}_\nu - \beta) \rightsquigarrow \mathcal{N}(0, I_{d_Z})$ for any $\nu \geq 0$.
\end{proposition}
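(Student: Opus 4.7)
The plan is to cast $\widehat{\beta}_\nu$ as a linear M-estimator arising from a conditional moment restriction, and then exploit the fact that, when $\nu \geq 0$, the weighting factor $(e_i(1-e_i))^{\nu}\leq (1/4)^{\nu}$ is uniformly bounded, so classical LLN and CLT arguments apply without any special treatment for the tail behavior induced by small propensity scores. This is precisely the feature that fails for the IPW estimator ($\nu=-1$), and exploiting it is the heart of the argument.

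First, I would combine Proposition \ref{proposition:npw_cme} with Assumption \ref{assumption:linear_cate} to obtain the population score. Substituting $\tau(X) = Z'\beta$ gives $\E[(W - e(X))Y - e(X)(1-e(X))Z'\beta \mid X] = 0$ a.s. Multiplying by the $X$-measurable vector $(e(X)(1-e(X)))^{\nu}Z$ preserves the conditional mean-zero property and, after iterated expectations, yields the unconditional moment $\E[\xi_i] = 0$, where $\xi_i \equiv (e_i(1-e_i))^{\nu}Z_i\bigl\{(W_i-e_i)Y_i - e_i(1-e_i)Z_i'\beta\bigr\}$. Solving the sample analog of this moment condition gives the identity
\[
\sqrt{n}\bigl(\widehat{\beta}_\nu - \beta\bigr) \;=\; \widehat{\Sigma}_{\nu,1}^{-1}\,\sqrt{n}\,\hE[\xi_i], \qquad \widehat{\Sigma}_{\nu,1}\equiv\hE\bigl[(e_i(1-e_i))^{\nu+1}Z_iZ_i'\bigr],
\]
which reduces the problem to a standard LLN plus CLT argument.

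Next I would apply these tools. Under Assumptions \ref{assumption:bounded_outcomes}--\ref{assumption:iid}, $Y_i$ is bounded and $Z_i$ is bounded (as $\widetilde{Z}$ is a fixed basis on the compact set $\mathbb{X}$); together with $\nu \geq 0$, this makes $\xi_i$ a bounded, iid, mean-zero random vector with finite covariance $\Sigma_{\nu,2}$. Lindeberg--L\'evy therefore gives $\sqrt{n}\,\hE[\xi_i]\rightsquigarrow \mathcal{N}(0,\Sigma_{\nu,2})$. Similarly, $(e_i(1-e_i))^{\nu+1}Z_iZ_i'$ is bounded, so $\widehat{\Sigma}_{\nu,1} \overset{p}{\to} \Sigma_{\nu,1}$ by the LLN. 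Invertibility of $\Sigma_{\nu,1}$ follows because Assumption \ref{assumption:pop_overlap} guarantees $e_i(1-e_i) > 0$ almost surely and Assumption \ref{assumption:linear_cate} rules out $a'Z_i = 0$ a.s. for any nonzero $a \in \R^{d_Z}$; thus for such $a$, $\E[(e_i(1-e_i))^{\nu+1}(a'Z_i)^2] > 0$. Slutsky's lemma and the continuous mapping theorem then yield $\sqrt{n}(\widehat{\beta}_\nu - \beta)\rightsquigarrow \mathcal{N}(0,\Sigma_\nu)$ with $\Sigma_\nu = \Sigma_{\nu,1}^{-1}\Sigma_{\nu,2}\Sigma_{\nu,1}^{-1}$.

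For the feasible statement, I would show $\widehat{\Sigma}_{\nu,2}\overset{p}{\to}\Sigma_{\nu,2}$ by expanding the squared residual $\{(W_i-e_i)Y_i - e_i(1-e_i)Z_i'\widehat{\beta}_\nu\}^2$, applying the LLN to each $\widehat{\beta}_\nu$-free summand, and invoking $\widehat{\beta}_\nu\overset{p}{\to}\beta$ (a byproduct of the argument above) together with boundedness of the remaining factors. Combined with $\widehat{\Sigma}_{\nu,1}\overset{p}{\to}\Sigma_{\nu,1}$, the CMT delivers $\widehat{\Sigma}_\nu^{-1/2}\overset{p}{\to}\Sigma_\nu^{-1/2}$, and a final Slutsky step produces the pivotal statement $\widehat{\Sigma}_\nu^{-1/2}\sqrt{n}(\widehat{\beta}_\nu - \beta)\rightsquigarrow \mathcal{N}(0,I_{d_Z})$. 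I do not anticipate any serious obstacle in the $\nu\geq 0$ case: the only conceptual point that deserves emphasis is that positive $\nu$ (or more generally $\nu \geq 0$) is exactly what guarantees finite variance of $\xi_i$ without requiring strict overlap; the rest is textbook M-estimation.
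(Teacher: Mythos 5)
Your proposal is correct and follows essentially the same route as the paper's own proof: rewrite $\sqrt{n}(\widehat{\beta}_\nu-\beta)=\widehat{\Sigma}_{\nu,1}^{-1}\sqrt{n}\,\hE[\xi_i]$, use Proposition \ref{proposition:npw_cme} with the linear CATE to get the mean-zero score, note that $\nu\geq 0$ plus bounded outcomes and covariates makes $\xi_i$ bounded with finite covariance, and conclude by the LLN, Lindeberg--L\'evy CLT, and Slutsky, with invertibility of $\Sigma_{\nu,1}$ from no perfect collinearity and $e_i\in(0,1)$ a.s. Your treatment of the feasible studentized statement (consistency of $\widehat{\Sigma}_{\nu,2}$ via $\widehat{\beta}_\nu\cp\beta$ and boundedness) is just a slightly more explicit version of what the paper asserts.
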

\begin{proof}
    Note that $\widehat{\beta}_\nu = \hE[Z_i(\nu)Z_i(\nu)']^{-1}\hE[Z_i(\nu)Z_i(\nu)'\beta + Z_i(\nu)Y_i(\nu) - Z_i(\nu)Z_i(\nu)'\beta]$, and so
    $\sqrt{n}(\widehat{\beta}_\nu - \beta) = \hE[(e_i(1-e_i))^{\nu + 1}Z_i\,Z'_i]^{-1}\sqrt{n}\,\hE[(e_i(1-e_i))^{\nu}Z_i\{(W_i - e_i)Y_i - e_i(1-e_i)Z'_i\beta\}]$. Since $Z_i$ is not perfectly collinear (by Assumption \ref{assumption:linear_cate}) and $e_i > 0$ a.s. (by Assumption \ref{assumption:pop_overlap}), $Z_i(\nu)$ is also not perfectly collinear, and so $\hE[Z_i(\nu)Z_i(\nu)']$ is invertible. The random vector $(e_i(1-e_i))^{\nu}Z_i\{(W_i - e_i)Y_i - e_i(1-e_i)Z'_i\beta\}$ is bounded a.s.~with finite covariance matrix $\Sigma_{\nu,2}$ (by Assumptions \ref{assumption:bounded_outcomes}--\ref{assumption:bounded_covariates} and since $\mathbb{W} = \{0, 1\}$ so that $0 < e_i < 1 \implies 0 < (e_i(1-e_i))^\nu \leq 1$ a.s. for any $\nu \geq 0$) and also has mean zero (since $\E[(e_i(1-e_i))^{\nu}Z_i\,\E[\{(W_i - e_i)Y_i - e_i(1-e_i)Z'_i\beta\} \mid X_i]] = 0$ by Assumptions \ref{assumption:unconfoundedness}--\ref{assumption:pop_overlap} and Proposition \ref{proposition:npw_cme}). These results, together with Assumption \ref{assumption:iid}, imply that $\hE[Z_i(\nu)Z_i(\nu)']$ and thus $\widehat{\Sigma}_\nu$ converge in probability to $\Sigma_{\nu,1}$ and $\Sigma_{\nu}$, respectively, and therefore also imply that $\sqrt{n}(\widehat{\beta}_\nu - \beta) \rightsquigarrow \mathcal{N}(0, \Sigma_\nu)$ and $\widehat{\Sigma}^{-1/2}_\nu \sqrt{n}(\widehat{\beta}_\nu - \beta) \rightsquigarrow \mathcal{N}(0, I_{d_Z})$ for any $\nu \geq 0$.
\end{proof}

\begin{remark}
    Although the above result holds under limited overlap asymptotically, the matrix $\hE[Z_i(\nu)Z_i(\nu)']$ may (in some cases) be close to being non-invertible even in a reasonably large sample. This is essentially a ``weak identification'' problem. However, even in such scenarios, it is possible to construct confidence sets for $\beta$ that are robust to the strength of identification. For this purpose, one may invert, for example, the identification/singularity-robust Anderson--Rubin (SR-AR) tests \citep{andrews2019identification} in the context of the moment condition model $\E[Z_i\{e_i(1-e_i)\}^{\nu}\{(W_i - e_i)Y_i - e_i(1-e_i)Z'_i\beta\}] = 0$, which is the basis for all the GPW estimators.
\end{remark}

\begin{remark}
    Under Assumptions \ref{assumption:unconfoundedness}--\ref{assumption:iid} and \ref{assumption:linear_cate}, both the PATE $\tau^* = \E[\tau(X)] = \beta'\,\E[Z]$ and the EATE $\tau^\circ = \hE[\tau(X_i)] = \beta'\,\hE[Z_i]$ parameters can be consistently estimated using $\widehat{\beta}'_\nu\, \hE[Z_i]$, where $\nu \geq 0$, based on Proposition \ref{proposition:npw_asym_normality}. The delta method can be appropriately applied for inference. Alternatively, the method in Section 6.1 of \cite{rothe2017robust} can be adapted for inference on PATE.
\end{remark}

\begin{remark}
    Note that Proposition \ref{proposition:npw_asym_normality} does not make any claim regarding the efficiency of NPW estimators. In fact, in some cases, there exist more efficient estimators. For example, if $\mathbb{X}$ can be partitioned into strata such that the CATE is constant within each stratum, i.e., if $Z$ is a vector of binary indicators (for the strata), then the CATE within each stratum can be estimated using the ``overlap weighting'' estimator proposed by \cite{li2018balancing}. It has the smallest asymptotic variance when the potential outcomes exhibit homoskedasticity. In the simple case where $Z \equiv 1$, the overlap weighting estimator is just $\hE[(1-e_i)W_iY_i]/\hE[(1-e_i)W_i] - \hE[e_i(1-W_i)Y_i]/\hE[e_i(1-W_i)]$. When CATE is not constant over $\mathbb{X}$, both the overlap weighting estimator and the NPW estimator(s) using $Z \equiv 1$ do not consistently estimate PATE. Instead, they estimate a ``Weighted Average Treatment Effect'' (WATE). See \cite{hirano2003efficient} for further discussion on efficient estimation of WATE.
\end{remark}

Figures \ref{figure:ipw_vs_npw_example_coef1est_dist} through \ref{figure:npw_joint_density_example} in the Appendix illustrate Proposition \ref{proposition:npw_asym_normality} by contrasting the sampling distributions of the IPW estimator $\widehat{\beta}_{-1}$ and the NPW estimator $\widehat{\beta}_1$ when $n = 10^5$ (sample size) in the following simple setting with limited overlap: $X \sim \mathrm{Uniform}(0,1)$, $W \sim \mathrm{Bernoulli}(X^4)$, $(\upsilon_1, \upsilon_2) \sim \mathrm{Uniform}[(-2, 2)^2]$, $Y = 10\,(1-X^4) + X^4\,\upsilon_1 + W\,(\tau(X) + 2\,\upsilon_2)$, where $\tau(X) = \beta'Z$ with $\beta = (3, -2)$ and $Z = (1,X)$ so that $\tau^* = \E[\tau(X)] = 3 - 2\,\E[X] = 2$. As the figures show, the NPW estimator in this setting is approximately Gaussian but the IPW estimator is not.

The theoretical underpinning of GPW (and thus NPW) estimators is the moment condition $$\E[Z_i\{e_i(1-e_i)\}^{\nu}\{(W_i - e_i)Y_i - e_i(1-e_i)Z'_i\beta\}] = 0$$
for any $\nu \in \R$, since $\E[Z_i\{e_i(1-e_i)\}^{\nu}\,\E[(W_i - e_i)Y_i - e_i(1-e_i)Z'_i\beta \mid X_i]] = 0$ by Proposition~\ref{proposition:npw_cme}. However, an alternative moment condition is $\E[Z_i\{(W_i - e_i)Y_i - (W_i - e_i)^2Z'_i\beta\}] = 0$, which holds because $\E[(W_i - e_i)^2Z'_i\beta \mid X_i] = e_i(1-e_i)Z'_i\beta$ almost surely, resulting in the estimator
$\hE[(W_i - e_i)^2Z_iZ'_i]^{-1}\,\hE[Z_i(W_i-e_i)Y_i],$
which can be obtained by regressing $Y_i$ on $(W_i - e_i)Z_i$. (The connection between this estimator and the so-called ``\cite{robinson1988root} transformation'' is discussed in the next section.) The equality $\E[Z_i\{(W_i - e_i)Y_i - \frac{1}{2}[W_i(1-e_i)+e_i(1-W_i)]Z'_i\beta\}] = 0$ could also be exploited because $\E[\frac{1}{2}[W_i(1-e_i)+e_i(1-W_i)]Z'_i\beta \mid X_i] = e_i(1-e_i)Z'_i\beta$ almost surely, and so
$\hE[\frac{1}{2}[W_i(1-e_i)+e_i(1-W_i)]Z_iZ'_i]^{-1}\,\hE[Z_i(W_i-e_i)Y_i]$ is another possible estimator. Similar reasoning as in Proposition \ref{proposition:npw_asym_normality} can be used to establish the asymptotic normality of both of these alternative estimators even under limited overlap when Assumptions \ref{assumption:unconfoundedness}--\ref{assumption:iid} and \ref{assumption:linear_cate} hold. In addition, when one-sided overlap holds, e.g., if $1 - e_i$ is bounded away from zero almost surely, the estimator $\hE[W_i\,Z_iZ'_i]^{-1}\hE[Z_i(W_i - e_i)Y_i/(1-e_i)]$ is asymptotically normal under the same assumptions (even if may have bad properties in finite samples or under weak identification asymptotics) because $\E[W_i\tau(X_i) - (W_i - e_i)Y_i/(1-e_i) \mid X_i] = e_i\tau(X_i) - e_i(1-e_i)\tau(X_i)/(1-e_i) = 0$ almost surely. All of these estimators, including the NPW estimator(s), are based on a more general principle called ``Stable Probability Weighting'' (SPW), which I develop in the next section.

\clearpage

\section{Stable Probability Weighting for Estimation and Inference}
\label{section:spw}

Under Assumptions \ref{assumption:unconfoundedness}--\ref{assumption:pop_overlap}, the Conditional Average Treatment Effect (CATE) function $\tau(\cdot)$ is nonparametrically identified without assuming strict overlap and without using propensity scores because
$\E[Y \mid W = 1,\, X = x] - \E[Y \mid W = 0,\, X = x] = \E[Y^*_1 \mid X = x] - \E[Y^*_0 \mid X = x] = \mu_1(x) - \mu_0(x) = \mu(1,x) - \mu(0,x) = \tau(x)$ for all $x \in \mathbb{X}$. This suggests that one could potentially learn about CATE by simply estimating $\mu$ through a nonparametric regression of $Y$ on $(W, X)$ and then use a plug-in version of $\mu(1,x) - \mu(0,x)$ to estimate CATE. However, the resulting naive plug-in estimator may be too noisy even if the CATE function itself is smooth and even when strict overlap holds; see Section 2.2 of \cite{kennedy2022towards}. Such an approach may be especially problematic from an inferential perspective under weak overlap; see Section 3.2 in the Supporting Information of \cite{kunzel2019metalearners}. An alternative is to use a parametric (or semiparametric) model for $\mu$, but this approach may be highly sensitive to model specification. A more fruitful approach is to treat $\mu$ as a nuisance parameter and to focus on the target parameter $\tau$ in addition to exploiting the data structure (i.e., unconfoundedness) more fully. \cite{kennedy2022towards} and others (cited in Section \ref{section:introduction}) take such an approach under the assumption of strict overlap. However, to the best of my knowledge, the empirically relevant setting of limited overlap is largely ignored in the existing literature on heterogeneous causal effects. Hence, this section develops the relevant large-sample theory.

In Subsection \ref{subsection:basic_spw}, I develop a basic form of a general principle called ``Stable Probability Weighting'' (SPW) for learning about CATE under limited overlap. I then further develop a more robust version of it called Augmented SPW (ASPW) in Subsection \ref{subsection:aspw}. In Subsection \ref{subsection:spw_multivalued}, I generalize the (A)SPW framework for learning about the Conditional Average Contrast (CAC) function as well as distributional or quantile treatment effects in settings with multivalued treatments. In Subsection \ref{subsection:large_sample_spw}, I discuss methods for estimating and inferring parameters of interest from parametric, semiparametric, nonparametric, machine learning, and policy learning perspectives.

\subsection{Main Conditional Moment Restrictions Robust to Limited Overlap}
\label{subsection:basic_spw}

Without further ado, I define the main Stable Probability Weighting (SPW) principle for CATE.

\begin{definition}[SPW for CATE: Stable Probability Weighting for CATE]
    \label{definition:spw_cate}
    The generalized residual $\Psi(Y, X, W; \tau, e, \eta)$ satisfies the Stable Probability Weighting (SPW) principle for CATE $\tau(\cdot)$ if $$\Psi(Y, X, W; \tau, e, \eta) = \mathcal{S}(X, W; e)\,\tau(X) - \tilde{s}(X; e)\,(W - e(X))\,Y + \xi(X, W; e, \eta)$$
    such that $\mathcal{S}$ and $\xi$ are bounded real-valued functions on $\mathbb{X} \times \mathbb{W}$ that depend on $e$ and $\eta$ (a bounded vector-valued function on $\mathbb{X}$) and satisfy the following conditions: $\E[\xi(X, W; e, \eta) \mid X] = 0$ a.s.; $s(X; e) \equiv \E[\mathcal{S}(X, W; e) \mid X]$ and $0 < \tilde{s}(X; e) \equiv s(X; e)/[e(X)(1-e(X))] \leq M$ a.s.~for some $M > 0$.
\end{definition}

\begin{remark}
    In the above definition, the boundedness restrictions are imposed on the random variables $\mathcal{S}(X, W; e)$ and $\xi(X, W; e, \eta)$ for expositional ease, but it is possible to re-define SPW using milder restrictions, such as $\E[\mathcal{S}(X, W; e)^2 \mid X] < \infty$ a.s. and $\E[\xi(X, W; e, \eta)^2 \mid X] < \infty$ a.s.
\end{remark}

\begin{theorem}[Conditional Moments of the SPW-Based Generalized Residual]
    \label{theorem:spw_moments} If the generalized residual $\Psi(Y, X, W; \tau, e, \eta)$ satisfies the SPW principle (in Definition \ref{definition:spw_cate}) and Assumptions \ref{assumption:unconfoundedness}--\ref{assumption:bounded_covariates} hold, then $\E[\Psi(Y, X, W; \tau, e, \eta) \mid X] = 0$ and $\mathbb{V}[\Psi(Y, X, W; \tau, e, \eta) \mid X] < \infty$ almost surely.
\end{theorem}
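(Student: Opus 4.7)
The plan is to decompose $\Psi(Y, X, W; \tau, e, \eta)$ into its three additive pieces as specified in Definition \ref{definition:spw_cate} and handle the conditional mean and the conditional variance separately. For the mean, the whole point of the SPW construction will be to arrange a cancellation between the first two terms via Proposition \ref{proposition:npw_cme}; for the variance, it will suffice to show $\Psi$ is almost surely bounded.

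First, I would compute $\E[\,\cdot\mid X]$ for each summand. Since $\tau(X)$ is $X$-measurable,
$$\E[\mathcal{S}(X, W; e)\,\tau(X)\mid X]\,=\,\tau(X)\,\E[\mathcal{S}(X, W; e)\mid X]\,=\,s(X; e)\,\tau(X)$$
by the definition of $s$. Similarly, because $\tilde{s}(X;e)$ is $X$-measurable,
$$\E[\tilde{s}(X; e)\,(W - e(X))\,Y\mid X]\,=\,\tilde{s}(X; e)\,\E[(W - e(X))\,Y\mid X].$$
The crucial step is now to invoke Proposition \ref{proposition:npw_cme}, which gives $\E[(W-e(X))Y\mid X] = e(X)[1-e(X)]\,\tau(X)$ almost surely under Assumptions \ref{assumption:unconfoundedness}--\ref{assumption:pop_overlap}; combined with the identity $\tilde{s}(X;e)\cdot e(X)[1-e(X)] = s(X;e)$, this makes the second term equal to $s(X;e)\,\tau(X)$. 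The third term contributes zero by the defining property $\E[\xi(X, W; e, \eta)\mid X] = 0$. Adding the three gives $\E[\Psi \mid X] = s(X;e)\tau(X) - s(X;e)\tau(X) + 0 = 0$ almost surely.

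For the conditional variance, I would simply observe that $\Psi$ is an almost-surely bounded random variable, which immediately yields $\mathbb{V}[\Psi\mid X] < \infty$ a.s. The boundedness of each summand follows from: $\mathcal{S}$ and $\xi$ being bounded by the SPW definition; $|\tau(X)| = |\mu_1(X) - \mu_0(X)| \leq C$ a.s.~by Assumption \ref{assumption:bounded_outcomes}; $0 < \tilde{s}(X;e) \leq M$ a.s.~by the SPW definition; and $|(W - e(X))\,Y| \leq C/2$ a.s.~since $|W - e(X)| \leq 1$ (as $W \in \{0,1\}$ and $e(X) \in (0,1)$) and $|Y| \leq C/2$.

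The only nontrivial step is the invocation of Proposition \ref{proposition:npw_cme}; without that conditional moment identity, the two leading terms would not cancel and the design of $\tilde{s} = s/[e(1-e)]$ in Definition \ref{definition:spw_cate} would not serve its purpose. Everything else is routine bookkeeping with boundedness, and Assumption \ref{assumption:bounded_covariates} plays no active role beyond the background setup (so it could in principle be relaxed here).
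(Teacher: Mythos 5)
Your proof is correct and follows essentially the same route as the paper's: cancel the first two terms via Proposition \ref{proposition:npw_cme} together with the identity $\tilde{s}(X;e)\,e(X)[1-e(X)] = s(X;e)$, use $\E[\xi \mid X] = 0$ for the third term, and deduce finite conditional variance from almost-sure boundedness of $\Psi$. Your side observation that Assumption \ref{assumption:bounded_covariates} is not actively used here is also accurate; the paper simply cites the boundedness assumptions collectively.
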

\begin{proof}
    By Proposition \ref{proposition:npw_cme}, $\E[\tilde{s}(X; e)\,(W - e(X))\,Y \mid X] = s(X; e)\,[e(X)(1-e(X))]^{-1+1}\tau(X)$ a.s., and so the result $\E[\Psi(Y, X, W; \tau, e, \eta) \mid X] = 0$ (a.s.) follows because $\E[\xi(X, W; e, \eta) \mid X] = 0$ and $\E[\mathcal{S}(X, W; e)\,\tau(X) \mid X] = \E[\mathcal{S}(X, W; e) \mid X]\,\tau(X) = s(X; e)\,\tau(X)$ a.s.; in addition, the boundedness assumptions imply that (a.s.) $\mathbb{V}[\Psi(Y, X, W; \tau, e, \eta) \mid X] = \E[\Psi(Y, X, W; \tau, e, \eta)^2 \mid X] < \infty$.
\end{proof}

\begin{example}[IPW Under Strict Overlap Satisfies SPW Principle]
    If strict overlap (as per Definition \ref{definition:strict_overlap}) holds so that $0 < \underline{p}^* < e(X) < 1 - \underline{p}^* < 1$ a.s., then the IPW technique, which uses $\mathcal{S}\equiv 1$ and $\xi \equiv 0$, satisfies the SPW principle with $M = (\underline{p}^*)^{-2}$. Thus, SPW is broader than IPW.
\end{example}

\begin{example}[SPW-Based Generalized Residual Under One-Sided Limited Overlap]
    Suppose a weaker version of strict overlap holds, i.e., $e(X) < 1 - \underline{p}^* < 1$ a.s. without loss of generality. Since there is still the possibility of one-sided limited overlap (if $e(\cdot)$ is not bounded away from zero), IPW does not satisfy the SPW principle in this setting. However, an alternative that sets $\mathcal{S}(X, W; e) = W$ and $\xi \equiv 0$ satisfies the SPW principle with $M = (\underline{p}^*)^{-1}$ because $\E[W \mid X]/[e(X)(1-e(X))] = e(X)/[e(X)(1-e(X))] = 1/(1-e(X)) \leq (\underline{p}^*)^{-1}$. In this case, the generalized residual is given by $\Psi(Y, X, W; \tau, e, \xi) = W\,\tau(X) - (W - e(X))\,Y/(1-e(X))$. Another alternative that satisfies the SPW principle sets $\mathcal{S}(X, W; e) = W$ and $\xi(X, W; e, \mu_0) = (W - e(X))\mu_0(X)/(1-e(X))$, resulting in the generalized residual $\Psi(Y, X, W; \tau, e, \xi) = W\,\tau(X) - (W - e(X))(Y - \mu_0(X))/(1-e(X))$. This alternative is even better because it has a certain double robustness property discussed later.
\end{example}

\begin{theorem}[GNPW for CATE: Generalized Non-Inverse Probability Weighting for CATE]
    \label{theorem:gnpw_cate}
    Let $$e(X)^{\nu_1}(1-e(X))^{\nu_2}\Big[\big(\vartheta_1\,W+\vartheta_2\,e(X)+\vartheta_3\,W\,e(X)+\vartheta_4\,e(X)^2\big)\,\tau(X) - (W - e(X))\,Y\Big] + \xi(X, W; e, \eta)$$
    be the generalized residual $\Psi(Y, X, W; \tau, e, \eta)$, where $\xi: \mathbb{X} \times \mathbb{W} \to \R$ is a bounded function that depends on a bounded vector-valued nuisance function $\eta$ (on $\mathbb{X})$, such that $\E[\xi(X, W; e, \eta) \mid X] = 0$ a.s.~and the real numbers $(\nu_1, \nu_2, \vartheta_1, \vartheta_2, \vartheta_3, \vartheta_4)$ satisfy the following constraints: $\nu_1 \geq 0$, $\nu_2 \geq 0$, $\vartheta_1 + \vartheta_2 = 1$, and $\vartheta_3 + \vartheta_4 = -1$. Then, under Assumptions \ref{assumption:unconfoundedness}--\ref{assumption:bounded_covariates}, $\E[\Psi(Y, X, W; \tau, e, \eta) \mid X] = 0$ and $\mathbb{V}[\Psi(Y, X, W; \tau, e, \eta) \mid X] < \infty$ a.s. (even if strict overlap does not hold).
\end{theorem}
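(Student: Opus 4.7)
The plan is to recognize this theorem as a direct specialization of Theorem \ref{theorem:spw_moments}: I would cast the given $\Psi$ into the template of Definition \ref{definition:spw_cate} and then verify its three regularity conditions, after which the conclusion $\E[\Psi \mid X] = 0$ and $\mathbb{V}[\Psi \mid X] < \infty$ a.s.\ is automatic. Matching coefficients against the template $\Psi = \mathcal{S}(X, W; e)\,\tau(X) - \tilde{s}(X; e)\,(W - e(X))\,Y + \xi(X, W; e, \eta)$, I would read off the candidates
$$\mathcal{S}(X, W; e) = e(X)^{\nu_1}(1-e(X))^{\nu_2}\big(\vartheta_1 W + \vartheta_2 e(X) + \vartheta_3 W e(X) + \vartheta_4 e(X)^2\big), \qquad \tilde{s}(X; e) = e(X)^{\nu_1}(1-e(X))^{\nu_2},$$
and keep $\xi$ as given.

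The only substantive step is to verify the compatibility relation $\tilde{s}(X; e) = s(X; e)/[e(X)(1-e(X))]$, where $s(X; e) = \E[\mathcal{S}(X, W; e) \mid X]$. Using $\E[W \mid X] = e(X)$, I would compute
$$s(X; e) = e(X)^{\nu_1}(1-e(X))^{\nu_2}\big[(\vartheta_1+\vartheta_2)\,e(X) + (\vartheta_3+\vartheta_4)\,e(X)^2\big],$$
and observe that the linear constraints $\vartheta_1+\vartheta_2 = 1$ and $\vartheta_3+\vartheta_4 = -1$ are engineered precisely so that the bracket telescopes into $e(X) - e(X)^2 = e(X)(1-e(X))$, giving $s(X; e) = e(X)^{\nu_1+1}(1-e(X))^{\nu_2+1}$ and hence the required identity $s(X; e)/[e(X)(1-e(X))] = \tilde{s}(X; e)$. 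This is the one place where the hypothesized restrictions on $(\vartheta_1,\vartheta_2,\vartheta_3,\vartheta_4)$ actually do work, and it is really the only "obstacle"---though a very mild one.

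The remaining conditions are then immediate. Since $\nu_1, \nu_2 \geq 0$ and Assumption \ref{assumption:pop_overlap} forces $e(X) \in (0, 1)$ a.s., we have $0 < \tilde{s}(X; e) \leq 1$ a.s., so the constant $M = 1$ suffices; and $|\mathcal{S}(X, W; e)| \leq |\vartheta_1| + |\vartheta_2| + |\vartheta_3| + |\vartheta_4|$ a.s., so $\mathcal{S}$ is bounded. The conditions on $\xi$---boundedness and $\E[\xi \mid X] = 0$ a.s.---are assumed outright. Hence $\Psi$ satisfies the SPW principle of Definition \ref{definition:spw_cate}, and a direct invocation of Theorem \ref{theorem:spw_moments} completes the proof. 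Crucially, no lower bound on $e(X)$ or $1-e(X)$ is used, so strict overlap is nowhere required; this is exactly why the construction is valuable under limited overlap.
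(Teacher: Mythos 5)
Your proof is correct and follows essentially the same route as the paper: cast the residual into Definition \ref{definition:spw_cate} with $\mathcal{S}(X, W; e) = e(X)^{\nu_1}(1-e(X))^{\nu_2}\big(\vartheta_1 W + \vartheta_2 e(X) + \vartheta_3 W e(X) + \vartheta_4 e(X)^2\big)$, use $\E[W \mid X] = e(X)$ and the constraints $\vartheta_1+\vartheta_2=1$, $\vartheta_3+\vartheta_4=-1$ so that $\tilde{s}(X;e) = e(X)^{\nu_1}(1-e(X))^{\nu_2} \leq 1$ (so $M=1$), and invoke Theorem \ref{theorem:spw_moments}. Your write-up is merely a bit more explicit than the paper's about the boundedness checks and the compatibility of $\tilde{s}$ with the definition.
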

\begin{proof}
    The result follows by Theorem \ref{theorem:spw_moments} since $\Psi(Y, X, W; \tau, e, \eta)$ satisfies Definition \ref{definition:spw_cate} with $\mathcal{S}(X, W; e) = e(X)^{\nu_1}(1-e(X))^{\nu_2}\big(\vartheta_1\,W+\vartheta_2\,e(X)+\vartheta_3\,W\,e(X)+\vartheta_4\,e(X)^2\big)$ and $M = 1$: a.s., $\tilde{s}(X; e) = e(X)^{\nu_1-1}(1-e(X))^{\nu_2-1}\big[(\vartheta_1 + \vartheta_2)e(X)+(\vartheta_3+\vartheta_4)e(X)^2\big] = e(X)^{\nu_1}(1-e(X))^{\nu_2} \leq 1$.
\end{proof}

\begin{example}[NPW: Non-Inverse Probability Weighting]
    When $\nu_1 = \nu_2 = \nu \geq 0$, $\vartheta_1 = 0$, $\vartheta_2 = 1$, $\vartheta_3 = 0$, $\vartheta_4 = -1$, and $\xi \equiv 0$, the GNPW residual reduces to the Non-Inverse Probability Weighting (NPW) residual $[e(X)(1-e(X))]^{\nu}[e(X)(1-e(X))\tau(X) - (W - e(X))Y]$, which is the basis for the NPW estimators in Section \ref{section:alternatives}.
\end{example}

\begin{example}[Robinson Transformation]
    When $\nu_1 = \nu_2 = 0$, $\vartheta_1 = 1$, $\vartheta_2 = 0$, $\vartheta_3 = -2$, $\vartheta_4 = 1$, and $\xi(X, W; e, \eta) = (W-e(X))\eta(X)$ with $\eta(X) \equiv \E[Y \mid X]$, the corresponding GNPW residual reduces to $(W - e(X))^2\tau(X) - (W - e(X))(Y - \eta(X))$, which is the same residual based on ``Robinson transformation'' \citep{robinson1988root} used in machine learning-based causal inference.\footnote{The Robinson transformation-based residual is used in versions of the popular ``R-Learner'' \citep{kennedy2022towards, kennedy2022minimax, nie2021quasi, zhao2022selective, semenova2022estimation, chernozhukov2018double, oprescu2019orthogonal}. This suggests that the R-Learner can be made to work even under limited overlap (as discussed further in Subsection \ref{subsection:large_sample_spw}), although the existing literature \citep[see, e.g.,][]{kennedy2022minimax,kennedy2022towards, nie2021quasi} uses the assumption of strict overlap to derive the statistical properties of the R-Learner. Even if the R-Learner may be made robust to limited overlap, the R-Learner is not robust to misspecification of the propensity score function $e(\cdot)$, as discussed in Subsection \ref{subsection:aspw}. However, there are other (augmented) SPW-based alternatives (discussed later) that do not have that drawback. (Interestingly, \cite{wager2020causal} notes that the sample analog of the quantity $\E[(Y - \eta(X))(W - e(X))]/\E[(W - e(X))^2]$ is a practical choice for estimating PATE under limited overlap when there is no (or low) treatment effect heterogeneity. However, that estimator is inconsistent when there is substantial treatment effect heterogeneity. Nevertheless, as discussed in Subsection \ref{subsection:large_sample_spw}, the SPW principle can be used to consistently estimate BATE and PATE under limited overlap and heterogeneous treatment effects.)}
\end{example}

\begin{example}
    When $\nu_1 = \nu_2 = 0$, $\vartheta_1 = 0.5$, $\vartheta_2 = 0.5$, $\vartheta_3 = -1$, $\vartheta_4 = 0$, and $\xi \equiv 0$, the GNPW residual can be expressed as $0.5[W(1-e(X)) + e(X)(1-W)]\tau(X) - (W-e(X))Y$, which is the basis for another estimator discussed near the end of Section \ref{section:alternatives}. Alternatively, if $\vartheta_1 = 1$, $\vartheta_2 = 0$, $\vartheta_3 = -1$, and $\vartheta_4 = 0$, then the associated GNPW residual is $W(1-e(X))\tau(X) - (W - e(X))Y$. Another alternative GNPW residual is $e(X)(1-W)\tau(X) - (W - e(X))Y$, which results from setting $\vartheta_1 = 0$, $\vartheta_2 = 1$, $\vartheta_3 = -1$, and $\vartheta_4 = 0$. Of course, there are infinite other possibilities.
\end{example}

It would be desirable for the generalized residual $\Psi(Y, X, W; \tau, e, \eta)$ to satisfy an important property called ``Neyman orthogonality'' \citep{chernozhukov2017double, chernozhukov2018double, mackey2018orthogonal}, which enables application of some recent (but increasingly popular) nonparametric or machine learning methods, in addition to having zero conditional mean and finite conditional variance. The residual $\Psi(Y, X, W; \tau, e, \eta)$ satisfies Neyman orthogonality if the G\^ateaux derivative of its conditional expectation is zero almost surely with respect the last two (nuisance) arguments (evaluated at their true values $e$ and $\eta$) so that the conditional moment equality $\E[\Psi(Y, X, W; \tau, e, \eta) \mid X] = 0$ (a.s.) is insensitive to local deviations in the values of the nuisance parameters, as defined below.

\begin{definition}[Neyman Orthogonality of the Generalized Residual for CATE]
    \label{definition:neyman_orth} The generalized residual $\Psi(Y, X, W; \tau, e, \eta)$, where $(e, \eta)$ lie in the appropriate Banach space $\mathcal{B}_{(e,\eta)}$, satisfies the Neyman orthogonality condition if $\frac{d}{dt}\,\E[\Psi(Y, X, W; \tau, (e, \eta)+t(h_e, h_\eta)) \mid X]\,\big\lvert_{t= 0} = 0$ almost surely, where $t \in [0, 1)$ and $(h_e, h_\eta) \equiv (\tilde{e} - e, \tilde{\eta} - \eta) \in \mathcal{B}_{(e,\eta)}$, so that the residual is Neyman orthogonal.
\end{definition}

\begin{theorem}[Neyman Orthogonal Generalized Residual Under One-Sided Limited Overlap]
    \label{theorem:neyman_one_sided}
    Suppose that Assumptions \ref{assumption:unconfoundedness}--\ref{assumption:bounded_covariates} hold. If $0 < e(X) < 1 - \underline{p}^* < 1$ a.s., then the generalized residual $\Psi(Y, X, W; \tau, e, \mu_0) = W\,\tau(X) - (W - e(X))(Y - \mu_0(X))/(1- e(X))$ satisfies the SPW principle and Neyman orthogonality. If $0 < \underline{p}^* < e(X) < 1$ a.s., then an SPW principle-based Neyman orthogonal residual is $\Psi(Y, X, W; \tau, e, \mu_1) = (1-W)\,\tau(X) - (W - e(X))(Y - \mu_1(X))/e(X)$.
\end{theorem}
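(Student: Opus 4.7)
My strategy is to verify the two required properties (the SPW structure of Definition \ref{definition:spw_cate} and Neyman orthogonality as in Definition \ref{definition:neyman_orth}) by direct computation on the first residual; the second residual then follows by the symmetry that swaps treated and control units, i.e., $W \leftrightarrow 1-W$, $e \leftrightarrow 1-e$, $\mu_0 \leftrightarrow \mu_1$, with $\tau$ unchanged (noting that $-(W-e(X)) = (1-W)-(1-e(X))$).

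\textbf{SPW verification.} For the first residual, I match it against Definition \ref{definition:spw_cate} by taking $\mathcal{S}(X,W;e) = W$ and $\xi(X,W;e,\mu_0) = (W-e(X))\mu_0(X)/(1-e(X))$. Then $s(X;e) = \E[W\mid X] = e(X)$, so $\tilde{s}(X;e) = e(X)/[e(X)(1-e(X))] = 1/(1-e(X))$, which is bounded by $(\underline{p}^*)^{-1}$ under the one-sided overlap hypothesis; and $\E[\xi\mid X] = \mu_0(X)\E[W-e(X)\mid X]/(1-e(X)) = 0$ almost surely. Boundedness of $\mathcal{S}$ is immediate, and boundedness of $\xi$ follows from Assumption \ref{assumption:bounded_outcomes} combined with $(1-e(X))^{-1} \leq (\underline{p}^*)^{-1}$. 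The analogous choices for the second residual are $\mathcal{S}(X,W;e) = 1-W$ and $\xi(X,W;e,\mu_1) = (W-e(X))\mu_1(X)/e(X)$, giving $\tilde{s}(X;e) = 1/e(X) \leq (\underline{p}^*)^{-1}$.

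\textbf{Neyman orthogonality.} Fix directions $(h_e, h_{\mu_0})$ with $\|h_e\|_\infty$ small enough that $1 - e(X) - t h_e(X) \geq \underline{p}^*/2$ for $t$ in a neighborhood of $0$. Set $F(t) = \E[\Psi(Y,X,W;\tau, e + t h_e, \mu_0 + t h_{\mu_0})\mid X]$. Using $\E[W\mid X] = e(X)$, $\E[WY\mid X] = e(X)\mu_1(X)$, and $\E[Y\mid X] = e(X)\mu_1(X) + (1-e(X))\mu_0(X)$ (immediate from Assumptions \ref{assumption:unconfoundedness} and \ref{assumption:sutva}), write $F(t) = e(X)\tau(X) - N(t)/D(t)$ with $D(t) = 1 - e(X) - t h_e(X)$ and
$$N(t) = e(X)\mu_1(X) - (e(X)+t h_e(X))\bigl[e(X)\mu_1(X) + (1-e(X))\mu_0(X)\bigr] + t h_e(X)(\mu_0(X) + t h_{\mu_0}(X)).$$
Routine differentiation gives $N(0) = e(X)(1-e(X))\tau(X)$, $N'(0) = -h_e(X)\,e(X)\,\tau(X)$ (the $h_{\mu_0}$ contribution is suppressed by an overall factor of $t$), $D(0) = 1-e(X)$, and $D'(0) = -h_e(X)$. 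Plugging these into the quotient rule yields
$$F'(0) = -\frac{N'(0)D(0) - N(0)D'(0)}{D(0)^2} = -\frac{-h_e(X)e(X)\tau(X)(1-e(X)) + e(X)(1-e(X))\tau(X)h_e(X)}{(1-e(X))^2} = 0$$
almost surely, which is the Neyman orthogonality condition. The second residual is handled by a verbatim calculation under the symmetry above, so no separate argument is needed.

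\textbf{Main obstacle.} The delicate step is engineering the cancellation of the $h_e$ contributions in $F'(0)$: differentiating $N$ produces a term $-h_e\,e\,\tau$ while differentiating $1/D$ produces an opposing term of the same magnitude, and these two pieces match exactly only because the augmentation $\xi$ forces $N(0) = e(1-e)\tau$, which is the precise magnitude needed to offset $D'(0)/D(0)$. Without $\xi$, a residual term proportional to $\mu_0(X) h_e(X)$ would survive and orthogonality would fail. The $h_{\mu_0}$ direction is easy to handle because, at $t=0$, it enters $N$ only through the product $t h_e(X)(\mu_0 + t h_{\mu_0})$ and so is automatically annihilated upon differentiation—this is the usual mechanism by which doubly robust residuals attain Neyman orthogonality at the outcome-regression component.
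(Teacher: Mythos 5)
Your proposal is correct and takes essentially the same route as the paper: you verify the SPW principle with the identical choices $\mathcal{S}(X,W;e)=W$ (resp.\ $1-W$) and $\xi(X,W;e,\eta)=(W-e(X))\mu_0(X)/(1-e(X))$ (resp.\ $(W-e(X))\mu_1(X)/e(X)$), so $\tilde{s}(X;e)=1/(1-e(X))\leq(\underline{p}^*)^{-1}$ (resp.\ $1/e(X)\leq(\underline{p}^*)^{-1}$), and you establish Neyman orthogonality by the same G\^ateaux-derivative computation of $\E[\Psi \mid X]$ along $(e,\mu_0)+t(h_e,h_{\mu_0})$, merely organized through the quotient rule with $N(0)=e(1-e)\tau$, $N'(0)=-h_e e\tau$, $D(0)=1-e$, $D'(0)=-h_e$, rather than the paper's explicit simplification of $E_t$. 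One minor imprecision: the treated/control relabeling sends $\tau$ to $-\tau$, so your map ``with $\tau$ unchanged'' yields $(1-W)\tau(X)+(W-e(X))(Y-\mu_1(X))/e(X)$ rather than the stated second residual; this is harmless here because you verify the SPW decomposition for the second residual directly and the mirrored orthogonality computation (with $N(t)=(e+th_e)(1-e)\tau+t^2h_eh_{\mu_1}$, $D(t)=e+th_e$) goes through verbatim, negation being irrelevant to the zero-derivative condition.
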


\begin{proof}
    Suppose that $0 < e(X) < 1 - \underline{p}^* < 1$ almost surely. Then, the generalized residual $\Psi(Y, X, W; \tau, e, \mu_0) = W\,\tau(X) - (W - e(X))(Y - \mu_0(X))/(1- e(X))$ satisfies the SPW principle in Definition \ref{definition:spw_cate} with $\mathcal{S}(X, W; e) = W$, $\xi(X, W; e, \eta) = (W - e(X))\eta(X)/(1-e(X))$, $\eta \equiv \mu_0$, and $M = (\underline{p}^*)^{-1}$ because $\tilde{s}(X; e) = \E[W \mid X]/[e(X)(1-e(X))] = e(X)/[e(X)(1-e(X))] = 1/(1-e(X)) \leq (\underline{p}^*)^{-1}$ almost surely. Let $E_t(x) = \E[\Psi(Y, X, W; \tau, (e, \eta)+t(h_e, h_\eta)) \mid X = x]$, where $t \in [0, 1)$ and $(h_e, h_\eta) \in \mathcal{B}_{(e,\eta)}$, for all $x \in \mathbb{X}$. Note that $E_t$ depends on $(e, \eta, h_e, h_\eta)$, where $\eta \equiv \mu_0$, but that dependence is suppressed for expositional ease. It is straightforward to show that $E_t \equiv e\tau - e(\mu_1 - \mu_0 - th_\eta) - (1-e)(-e-th_e)(\mu_0 - \mu_0 - th_\eta)/(1-e-th_e)$, which simplifies to $E_t \equiv teh_\eta - (1-e)(teh_\eta + t^2h_eh_\eta)/(1-e-th_e)$, and so $\frac{d}{dt}E_t \big\lvert_{t = 0} \equiv eh_\eta - eh_\eta \equiv 0$, proving that the generalized residual is Neyman orthogonal. Similar reasoning can be used to prove the last statement in the theorem (for the case where $0 < \underline{p}^* < e(X) < 1$ almost surely).
\end{proof}

\begin{theorem}[Neyman Orthogonal Generalized Residual Under Two-Sided Limited Overlap]
    \label{theorem:neyman_two_sided}
    Suppose that Assumptions \ref{assumption:unconfoundedness}--\ref{assumption:bounded_covariates} hold. Let $\Psi(Y, X, W; \tau, e, (\mu_0, \mu_1))$ be the GNPW residual
    $$\mathcal{S}(X, W; e)\,\tau(X) - (W - e(X))\,\big[Y - \mu_0(X) - \big(\vartheta_2+\vartheta_4\,e(X)\big)\big(\mu_1(X)-\mu_0(X)\big)\big],$$
    where $\mathcal{S}(X, W; e) \equiv \vartheta_1\,W+\vartheta_2\,e(X)+\vartheta_3\,W\,e(X)+\vartheta_4\,e(X)^2$ such that $(\vartheta_1, \vartheta_2, \vartheta_3, \vartheta_4) \in \R^4$ with $\vartheta_1 + \vartheta_2 = 1$ and $\vartheta_3 + \vartheta_4 = -1$. Then, the generalized residual $\Psi(Y, X, W; \tau, e, (\mu_0, \mu_1))$ satisfies the SPW principle and Neyman orthogonality.
\end{theorem}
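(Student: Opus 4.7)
The plan is to verify the two claims separately: first, that the residual fits Definition 4.1 via Theorem 4.2 (yielding the SPW principle); second, that its conditional expectation is insensitive, to first order, to perturbations of the nuisances $(e, \mu_0, \mu_1)$ (yielding Neyman orthogonality).

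For the SPW part, I would rewrite the residual by distributing the bracket as
\[
\Psi = \mathcal{S}(X, W; e)\,\tau(X) - (W - e(X))\,Y + \xi(X, W; e, (\mu_0,\mu_1)),
\]
where $\xi(X, W; e, (\mu_0,\mu_1)) \equiv (W - e(X))\big[\mu_0(X) + (\vartheta_2 + \vartheta_4\, e(X))(\mu_1(X) - \mu_0(X))\big]$. This matches the GNPW template of Theorem 4.2 at $\nu_1 = \nu_2 = 0$ with the stated $(\vartheta_1,\vartheta_2,\vartheta_3,\vartheta_4)$; since the bracket in $\xi$ depends only on $X$ and $\E[W - e(X)\mid X] = 0$ a.s., we have $\E[\xi\mid X] = 0$ a.s., while $\xi$ is bounded under Assumptions \ref{assumption:bounded_outcomes}--\ref{assumption:bounded_covariates}. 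Theorem \ref{theorem:gnpw_cate} then delivers the SPW principle and its associated conditional-moment properties.

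For Neyman orthogonality, I would introduce perturbations $e_t = e + t h_e$, $\mu_{0,t} = \mu_0 + t h_{\mu_0}$, $\mu_{1,t} = \mu_1 + t h_{\mu_1}$, define the augmentation $A_t(x) \equiv \mu_{0,t}(x) + (\vartheta_2 + \vartheta_4\, e_t(x))(\mu_{1,t}(x) - \mu_{0,t}(x))$, and compute $E_t(x) \equiv \E[\Psi(Y,X,W;\tau,e_t,(\mu_{0,t},\mu_{1,t}))\mid X=x]$ in closed form using $\E[W\mid X] = e$, $\E[WY\mid X] = e\,\mu_1$, and $\E[Y\mid X] = e\mu_1 + (1-e)\mu_0$. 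One obtains
\[
E_t = \big[\vartheta_1 e + \vartheta_2 e_t + \vartheta_3 e\,e_t + \vartheta_4 e_t^2\big]\tau - e\mu_1 + e_t\big(e\mu_1 + (1-e)\mu_0\big) + A_t\,(e - e_t).
\]
The key structural observation is that $h_{\mu_0}$ and $h_{\mu_1}$ enter $E_t$ only through $A_t$, which is multiplied by $(e - e_t)$; product-rule differentiation at $t = 0$ therefore annihilates those variations via the $(e - e_t)\big\lvert_{t=0} = 0$ factor. Differentiating in $t$ at $t=0$ and collecting the surviving $h_e$ pieces (from $\mathcal{S}$, from the weight $(W - e_t)$, and from $A_0$), the coefficient of $h_e(x)\,\tau(x)$ simplifies to $e(x)\,[\vartheta_3 + \vartheta_4 + 1]$, which vanishes identically thanks to the constraint $\vartheta_3 + \vartheta_4 = -1$.

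The main obstacle is the bookkeeping: the propensity score appears in three distinct roles (inside $\mathcal{S}$, inside the weight $(W - e(X))$, and inside the augmentation $A$), so one must tabulate each $h_e$ contribution carefully and verify that they conspire to a multiple of $\vartheta_3 + \vartheta_4 + 1$. The conceptual leverage, however, comes from the single observation above --- that $A_t$ multiplies $(e - e_t)$ --- which simultaneously eliminates all $(h_{\mu_0}, h_{\mu_1})$ contributions and reduces the remaining Gâteaux derivative to a one-line algebraic identity.
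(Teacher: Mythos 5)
Your proposal is correct and follows essentially the same route as the paper: the SPW claim is verified by matching Theorem \ref{theorem:gnpw_cate} with $\nu_1=\nu_2=0$ and $\xi(X,W;e,\eta)=(W-e(X))\big[\mu_0(X)+(\vartheta_2+\vartheta_4 e(X))(\mu_1(X)-\mu_0(X))\big]$, and orthogonality is shown by computing $E_t$ along the linear perturbation of $(e,\mu_0,\mu_1)$ and checking the derivative vanishes at $t=0$. The only difference is presentational: the paper expands $E_t$ fully into $t^2$ and $t^3$ terms, whereas you keep the compact form with the factor $A_t(e-e_t)$, which makes the annihilation of the $(h_{\mu_0},h_{\mu_1})$ variations and the reduction of the $h_e$ coefficient to $e(\vartheta_3+\vartheta_4+1)\tau$ immediate.
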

\begin{proof}
    The generalized residual $\Psi(Y, X, W; \tau, e, (\mu_0, \mu_1))$ is a GNPW (generalized non-inverse probability weighting) residual that satisfies the SPW principle, because $\Psi(Y, X, W; \tau, e, (\mu_0, \mu_1))$ satisfies the conditions of Theorem \ref{theorem:gnpw_cate} with $\eta \equiv (\mu_0, \mu_1)$, $\nu_1 = \nu_2 = 0$, and $\xi(X, W; e, \eta) = (W-e(X))[\mu_0(X) + (\vartheta_2+\vartheta_4\,e(X))(\mu_1(X)-\mu_0(X))]$ such that (a.s.) $\E[\xi(X, W; e, \eta) \mid X] = 0$. To show Neyman orthogonality, let $E_t(x) = \E[\Psi(Y, X, W; \tau, (e, \mu_0, \mu_1)+t(h_e, h_{\mu_0}, h_{\mu_1})) \mid X = x]$, where $t \in [0, 1)$ and $(h_e, h_{\mu_0}, h_{\mu_1}) \in \mathcal{B}_{(e,\eta)}$, for all $x \in \mathbb{X}$. Note that the dependence of $E_t$ on $(e, \mu_0, \mu_1, h_e, h_{\mu_0}, h_{\mu_1})$ is suppressed for expositional ease. It is straightforward to show that $E_t \equiv [(\vartheta_1 + \vartheta_2)e + \vartheta_2th_e + (\vartheta_3 + \vartheta_4)e^2 + \vartheta_3teh_e + \vartheta_4(2teh_e + t^2h_e^2)](\mu_1 - \mu_0) - {e(1-e-th_e)\mu_1} + (1-e)(e+th_e)\mu_0 + th_e[-(\vartheta_2 + \vartheta_4e + \vartheta_4th_e)(\mu_1 + th_{\mu_1} - \mu_0 - th_{\mu_0}) - \mu_0 - th_{\mu_0}]$, which simplifies to $E_t \equiv t^2[\vartheta_4 h_e^2(\mu_1 - \mu_0) + (\vartheta_2 + \vartheta_4e)h_e(h_{\mu_1} - h_{\mu_0}) + h_e^2(\mu_1 - \mu_0) - h_eh_{\mu_0}] + t^3[\vartheta_4h_e^2(h_{\mu_1} - h_{\mu_0})]$ using the equalities $\vartheta_1 + \vartheta_2 = 1$ and $\vartheta_3 + \vartheta_4 = -1$. Thus, $\frac{d}{dt}E_t \big\lvert_{t = 0} \equiv 0$, proving that the GNPW residual $\Psi(Y, X, W; \tau, e, (\mu_0, \mu_1))$ is Neyman orthogonal while satisfying the SPW principle.
\end{proof}

\begin{example}[Orthogonal Non-Inverse Probability Weighting]
    Setting $\vartheta_2 = 1$ and $\vartheta_4 = -1$ (so that $\vartheta_1 = \vartheta_3 = 0$) in Theorem \ref{theorem:neyman_two_sided} results in the following Neyman orthogonal NPW residual: $\Psi(Y, X, W; \tau, e, (\mu_0, \mu_1)) = e(X)(1-e(X))\tau(X) - (W - e(X))[Y - (1-e(X))\mu_1(X) - e(X)\mu_0(X)]$.
\end{example}

\begin{example}[Robinson Transformation]
    Setting $\vartheta_1 = 1$, $\vartheta_2 = 0$, $\vartheta_3 = -2$, and $\vartheta_4 = 1$ gives $\Psi(Y, X, W; \tau, e, (\mu_0, \mu_1)) = (W - e(X))^2\tau(X) - (W - e(X))[Y - e(X)\mu_1(X) - (1-e(X))\mu_0(X)]$ using Theorem \ref{theorem:neyman_two_sided}. At the true values of the parameters, this residual coincides with the Robinson transformation-based residual $\Psi(Y, X, W; \tau, e, \breve{\eta}) = (W - e(X))^2\tau(X) - (W - e(X))(Y - \breve{\eta}(X))$, which uses the nuisance function $\breve{\eta}(X) \equiv \E[Y \mid X]$ instead of the vector-valued nuisance function $(\mu_0, \mu_1)$. Thus, the $\breve{\eta}$-based residual $\Psi(Y, X, W; \tau, e, \breve{\eta})$ is Neyman orthogonal. However, it is highly sensitive to misspecification of the propensity score function, as shown in Subsection \ref{subsection:aspw}. 
\end{example}

\begin{example}
    Setting $\vartheta_1 = 1$, $\vartheta_2 = 0$, $\vartheta_3 = -1$, and $\vartheta_4 = 0$ results in the orthogonalized GNPW residual $W(1-e(X))\tau(X) - (W - e(X))(Y - \mu_0(X))$. Alternatively, setting $\vartheta_1 = 0$, $\vartheta_2 = 1$, $\vartheta_3 = -1$, and $\vartheta_4 = 0$ gives the orthogonal residual $e(X)(1-W)\tau(X) - (W - e(X))(Y - \mu_1(X))$.
\end{example}

The discussion so far has focused on the SPW principle, which relies on the propensity score function $e(\cdot)$, but it is also possible to construct generalized residuals that are robust to limited overlap without using $e(\cdot)$. For this purpose, it is useful to define a broader ``Stable Residualization Principle'' (SRP), which nests the SPW principle.

\begin{definition}[SRP for CATE: Stable Residualization Principle for CATE]
    \label{definition:srp_cate}
    The generalized residual $\Psi(Y, X, W; \tau, e, \eta)$ satisfies the Stable Residualization Principle (SRP) for CATE if
    $$\Psi(Y, X, W; \tau, e, \eta) = \Psi_1(X, W; e, \eta)\,\tau(X) - \Psi_2(X, W; e, \eta)\,Y - \Psi_3(X, W; e, \eta)$$
    such that $\Psi_1, \Psi_2, \Psi_3$ are bounded real-valued functions on $\mathbb{X} \times \mathbb{W}$ that depend on $e$ and $\eta$ (a bounded vector-valued function on $\mathbb{X}$) such that the following hold a.s.: $\E[\Psi_1(X, W; e, \eta) \mid X] \neq 0$; and $\E[\Psi_1(X, W; e, \eta) \mid X]\,\tau(X) = \E[\Psi_2(X, W; e, \eta)\,\mu(W,X) \mid X] + \E[\Psi_3(X, W; e, \eta) \mid X]$.
\end{definition}

\begin{theorem}[Conditional Moments of the SRP-Based Generalized Residual]
    If the generalized residual $\Psi(Y, X, W; \tau, e, \eta)$ satisfies the SRP condition (in Definition \ref{definition:srp_cate}) and Assumptions \ref{assumption:unconfoundedness}--\ref{assumption:bounded_covariates} hold, then $\E[\Psi(Y, X, W; \tau, e, \eta) \mid X] = 0$ and $\mathbb{V}[\Psi(Y, X, W; \tau, e, \eta) \mid X] < \infty$ almost surely.
\end{theorem}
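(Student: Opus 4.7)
The plan is to mirror the structure of the proof of Theorem \ref{theorem:spw_moments}: decompose the conditional expectation of $\Psi$ by linearity, reduce the $Y$-containing piece to something involving $\mu(W,X)$ by means of SUTVA plus unconfoundedness, and then invoke the defining SRP identity to collapse everything to zero. Finite conditional variance follows from the boundedness hypotheses baked into Definition \ref{definition:srp_cate} together with Assumption \ref{assumption:bounded_outcomes}.

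Concretely, I would first write
\begin{equation*}
\E[\Psi(Y,X,W;\tau,e,\eta)\mid X] \;=\; \tau(X)\,\E[\Psi_1(X,W;e,\eta)\mid X] \;-\; \E[\Psi_2(X,W;e,\eta)\,Y\mid X] \;-\; \E[\Psi_3(X,W;e,\eta)\mid X],
\end{equation*}
using that $\tau(X)$ is $\sigma(X)$-measurable so it passes through the conditional expectation in the first term. The non-trivial step is rewriting the middle term. By the tower property, $\E[\Psi_2\,Y\mid X] = \E[\,\E[\Psi_2\,Y\mid X,W]\mid X] = \E[\Psi_2\,\E[Y\mid X,W]\mid X]$, where $\Psi_2$ is pulled out because it is $\sigma(X,W)$-measurable. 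Under Assumption \ref{assumption:sutva}, $Y = Y^*_W$ almost surely, so $\E[Y\mid X,W] = \E[Y^*_W\mid X,W]$ a.s.; and under Assumption \ref{assumption:unconfoundedness}, $\E[Y^*_w\mid X,W=w] = \E[Y^*_w\mid X] = \mu(w,X)$, hence $\E[Y\mid X,W] = \mu(W,X)$ a.s. Substituting gives $\E[\Psi_2\,Y\mid X] = \E[\Psi_2\,\mu(W,X)\mid X]$ almost surely, and the SRP identity from Definition \ref{definition:srp_cate} then yields $\E[\Psi\mid X] = 0$ a.s.

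For the conditional variance bound, the functions $\Psi_1,\Psi_2,\Psi_3$ are bounded by hypothesis in Definition \ref{definition:srp_cate}; Assumption \ref{assumption:bounded_outcomes} gives $|Y|\le C/2$ a.s.~and also implies $|\tau(X)| = |\mu_1(X)-\mu_0(X)| \le C$ a.s. Consequently $\Psi$ is bounded by a deterministic constant almost surely, so $\mathbb{V}[\Psi\mid X]\le \E[\Psi^2\mid X]<\infty$ a.s.

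I do not expect any genuine obstacle: the proof is a routine extension of the argument for Theorem \ref{theorem:spw_moments}, the only subtlety being the explicit appeal to SUTVA and unconfoundedness to justify $\E[Y\mid X,W] = \mu(W,X)$, which is exactly the place where the assumptions enter. If one wished to relax the boundedness hypothesis on $\Psi_1,\Psi_2,\Psi_3$ in Definition \ref{definition:srp_cate} to a second-moment condition (as the remark following Definition \ref{definition:spw_cate} suggests for the SPW case), one would need a Cauchy--Schwarz step to conclude $\E[\Psi^2\mid X]<\infty$; but under the stated boundedness hypotheses, no such refinement is required.
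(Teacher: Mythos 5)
Your proof is correct and follows essentially the same route as the paper's: the tower property plus unconfoundedness and SUTVA reduce $\E[\Psi_2\,Y\mid X]$ to $\E[\Psi_2\,\mu(W,X)\mid X]$, the SRP identity then gives zero conditional mean, and boundedness yields the finite conditional variance. Your write-up just spells out the boundedness of $\tau(X)$ and $Y$ a bit more explicitly than the paper does.
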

\begin{proof}
    Note that $\E[\Psi(Y, X, W; \tau, e, \eta) \mid X] = \E[\Psi_2(X, W; e, \eta)\,(Y - \mu(W,X)) \mid X] = 0$ a.s.~because $\E[\Psi_2(X, W; e, \eta)\,Y \,|\, X] = \E[\E[\Psi_2(X, W; e, \eta)\,Y \,|\, X, W] \,|\, X] = \E[\Psi_2(X, W; e, \eta)\,\E[Y \,|\, X, W] \,|\, X]$, where $\E[Y \,|\, X, W] = \E[Y^*_W \,|\, X, W] = \E[Y^*_W \,|\, X] = \mu(W, X)$ by Assumptions \ref{assumption:unconfoundedness}--\ref{assumption:pop_overlap}. In addition, $\mathbb{V}[\Psi(Y, X, W; \tau, e, \eta) \mid X] = \E[\Psi(Y, X, W; \tau, e, \eta)^2 \mid X] < \infty$ a.s.~by Assumptions \ref{assumption:bounded_outcomes}--\ref{assumption:bounded_covariates}.
\end{proof}

\begin{example}[SPW Satisfies SRP]
    Under Assumptions \ref{assumption:unconfoundedness}--\ref{assumption:bounded_covariates}, if the residual $\Psi(Y, X, W; \tau, e, \eta)$ obeys the SPW principle, then it also satisfies the SRP principle with $\Psi_1(X, W; e, \eta) = \mathcal{S}(X, W; e)$, $\Psi_2(X, W; e, \eta) = \tilde{s}(X; e)\,(W-e(X))$, and $\Psi_3(X, W; e, \eta) = -\xi(X, W; e, \eta)$.
\end{example}

\begin{example}[Generalized Robinson Transformation Satisfies SRP]
    Suppose $\eta(X) \equiv \E[Y \mid X]$, $\Psi_1(X, W; e, \eta) = |W - e(X)|^\nu$, $\Psi_2(X, W; e, \eta) =\mathrm{sgn}(2W - 1)\,|W - e(X)|^{\nu-1}$, and  $\Psi_3(X, W; e, \eta) =-\Psi_2(X, W; e, \eta)\,\eta(X)$ so that $|W - e(X)|^\nu\tau(X) - \mathrm{sgn}(2W - 1)\,|W - e(X)|^{\nu-1}(Y - \eta(X))$ is the generalized residual $\Psi(Y, X, W; \tau, e, \eta)$. When $\nu = 2$, it equals the usual Robinson transformation-based residual, which underpins the ``R-Learner'' \citep{kennedy2022minimax}, and is thus nested within the SPW framework. More generally, when $\nu \geq 1$, the residual $\Psi(Y, X, W; \tau, e, \eta)$, based on a generalized version of Robinson transformation, satisfies the SRP condition under Assumptions \ref{assumption:unconfoundedness}--\ref{assumption:bounded_covariates} (even without strict overlap). However, when strict overlap holds, $\Psi(Y, X, W; \tau, e, \eta)$ satisfies SRP for any $\nu \in \R$; e.g., setting $\nu = 0$ gives the ``U-Learner'' based on $\tau(X) - (Y - \eta(X))/(W - e(X))$.
\end{example}

\begin{example}[Residualization Satisfying SRP Without Propensity Scores]
    Suppose $\eta \equiv \mu_0$, $\Psi_1(X, W; e, \eta) = W$, $\Psi_2(X, W; e, \eta) = 1$, and  $\Psi_3(X, W; e, \eta) =-\eta(X)$ so that the generalized residual $\Psi(Y, X, W; e, \eta) = W\,\tau(X) - (Y - \mu_0(X))$ does not depend on $e(\cdot)$. Under Assumptions \ref{assumption:unconfoundedness}--\ref{assumption:bounded_covariates}, $\E[\Psi(Y, X, W; e, \eta) \mid X] = e(X)\tau(X) - e(X)\mu_1(X) - (1-e(X))\mu_0(X) + \mu_0(X) = 0$ almost surely with finite conditional variance. Similar reasoning can be used to show that the generalized residual $\Psi(Y, X, W; e, \eta) = (W-1)\,\tau(X) - (Y - \mu_1(X))$ with $\eta \equiv \mu_1$ also satisfies SRP under the same assumptions. More generally, when $\eta \equiv (\mu_0, \mu_1)$, the generalized residual 
    $$\Psi(Y, X, W; e, (\mu_0, \mu_1)) = [\vartheta_1\,W + \vartheta_2(W-1)]\,\tau(X) - (\vartheta_1+\vartheta_2)\,Y + \vartheta_1\,\mu_0(X)) + \vartheta_2\,\mu_1(X)$$
    satisfies the SRP condition under Assumptions \ref{assumption:unconfoundedness}--\ref{assumption:bounded_covariates} for any $(\vartheta_1, \vartheta_2) \in \R^2_{\geq 0}\setminus \{(0,0)\}$. Thus, the above residual, which does not rely on the propensity score function, is robust to limited overlap but does not have Neyman orthogonality or robustness to misspecification of the nuisance functions.
\end{example}

The generalized residuals discussed so far treat the CATE function $\tau(\cdot)$ as the target parameter and the functions $(\mu_0(\cdot), \mu_1(\cdot))$ as nuisance parameters (in addition to $e(\cdot)$, the propensity score function). However, when $(\mu_0(\cdot), \mu_1(\cdot))$ is itself of interest, another SPW principle can be used.

\begin{definition}[Stable Probability Weighting for Conditional Treatment and Control Means]
    \label{definition:spw_ctm_ccm}
    The generalized residual vector $\Psi(Y, X, W; \mu, e, \eta) = (\widetilde{\Psi}_w(Y, X, W; \mu_w, e, \eta))_{w \,\in\, \{0, 1\}}$ satisfies the SPW principle for CTM and CCM (Conditional Treatment\,/\,Control Mean) $(\mu_w(\cdot))_{w\, \in\, \{0, 1\}}$ if $$\widetilde{\Psi}_w(Y, X, W; \mu_w, e, \eta) = \mathcal{S}_w(X, W; e)\,\mu_w(X) - \tilde{s}_w(X; e)\,\mathbb{I}\{W = w \}Y + \xi_w(X, W; e, \eta)$$
    such that $\mathcal{S}_w$ and $\xi_w$ are bounded real-valued functions on $\mathbb{X} \times \mathbb{W}$ that depend on $e$ and $\eta$ (a bounded vector-valued function on $\mathbb{X}$) and satisfy the following conditions for each $w\in \{0,1\}$: $\E[\xi_w(X, W; e, \eta) \mid X] = 0$ almost surely; and $\tilde{s}_w(X; e) \equiv s_w(X; e)/[e(X)^w(1-e(X))^{1-w}]$, where $s_w(X; e) \equiv \E[\mathcal{S}_w(X, W; e) \mid X]$, such that $0 < \tilde{s}_w(X; e) \leq M_w$ almost surely for some $M_w > 0$.
\end{definition}

\begin{example}
    The residual vector $(\mu_0(X) - (1-W)Y/(1-e(X)), \mu_1(X) - W\,Y/e(X))$ satisfies the SPW principle for CTM and CCM with $\xi_w \equiv 0$, $\mathcal{S}_w(X, W; e) = 1$, and $M_w = (\underline{p}^*)^{-1}$ for $w \in \{0, 1\}$ under strict overlap, in which case $1/e(X) \leq (\underline{p}^*)^{-1}$ and $1/[1-e(X)] \leq (\underline{p}^*)^{-1}$. A more efficient method would use $\mathcal{S}_w(X, W; e) = w\,\overline{e}\,W/e(X) + (1-w) (1-\overline{e})(1-W)/(1-e(X))$, where $\overline{e} = \E[e(X)]$, so that $\tilde{s}_w(X; e) = w\,\overline{e}/e(X) + (1-w) (1-\overline{e})/(1-e(X)) \leq M_w$, where $M_w = w\,\overline{e}/\underline{p}^* + (1-w)(1-\overline{e})/\underline{p}^* < (\underline{p}^*)^{-1}$ for $w \in \{0,1\}$. In the literature, $\mathcal{S}_w(X, W; e)$ is called a ``stabilized weight'' \citep[see][]{ai2021unified,hernan2020causal} in the context of ``marginal structural models'' under strict overlap. This choice, along with $\xi_0 \equiv \xi_1 \equiv 0$, leads to the residual vector $\Psi(Y, X, W; \mu, e, \eta) = ((1-\overline{e})(1-W)(\mu_0(X) - Y)/(1-e(X)), \overline{e}\,W\,(\mu_1(X) - Y)/e(X))$.
\end{example}

\begin{example}[Overlap Weights Satisfy SPW Principle for CTM\,/\,CCM Under Limited Overlap]
    \cite{li2019propensity} and \cite{li2018balancing} propose ``overlap weights'' for balancing covariates between the treatment and control groups and for estimating a (weighted) ATE for the ``overlap population.'' The overlap weights are given by $\mathcal{S}_w(X, W; e) = w\,W(1-e(X)) + (1-w)(1-W)\,e(X)$ for $w \in \{0,1\}$. Then, $\tilde{s}_w(X; e) = w\,(1-e(X)) + (1-w)\,e(X) \leq 1$ a.s.~(even under limited overlap), leading to the residuals $(e(X)(1-W)(\mu_0(X) - Y), W(1-e(X))(\mu_1(X) - Y))$ when $\xi_0 \equiv \xi_1 \equiv 0$. 
\end{example}

\begin{example}[Alternative Stable Probability Weights for CTM\,/\,CCM Under Limited Overlap]
    Under limited overlap, the simpler weights $\mathcal{S}_w(X, W; e) = w\,W + (1-w)(1-W)$ for $w \in \{0,1\}$ also satisfy the SPW principle because $\tilde{s}_w(X; e) = w\cdot 1 + (1-w)\cdot 1 \leq 1$ a.s., leading to the residual vector $((1-W)(\mu_0(X) - Y), W(\mu_1(X) - Y))$ when $\xi_0 \equiv \xi_1 \equiv 0$. Another possibility is to use the weights $\mathcal{S}_w(X, W; e) = w\,e(X) + (1-w)(1-e(X))$ for $w \in \{0,1\}$ along with $\xi_w(X, W; e, \eta) =w(W-e(X))\eta_1(X) + (1-w)\,(e(X)-W)\eta_0(X)]$ for $w \in \{0, 1\}$ such that ${\eta(X) \equiv (\eta_0(X), \eta_1(X)) \equiv (\E[Y \mid W = 0, X], \E[Y \mid W = 1, X])}$, resulting in the residual vector $((1-e(X))\mu_0(X) - (1-W)Y + (e(X)-W)\eta_0(X),\,\, e(X)\mu_1(X) - W\,Y + (W-e(X))\eta_1(X))$.
\end{example}

\subsection{Augmented Conditional Moment Restrictions for Double Robustness}
\label{subsection:aspw}

I now return to case where CATE is of interest and CTM\,/\,CCM are treated as nuisance components. Robustifying the conditional moment restrictions (for CATE) to the nuisance parameters is an important consideration in the context of semi/non-parametric (as well as parametric) statistical methods. While the Neyman orthogonality property discussed in the previous subsection provides a certain form of ``local'' robustness (i.e., low sensitivity to marginally minute errors in the nuisance components around their true values), it does not cover the case where there may be systematic mistakes in one of the nuisance functions. It would be desirable to construct conditional moment restrictions that have ``double robustness'' to misspecification of one of the nuisance components. There is a basic version of such property (assessed at the true value of CATE) and also a more global version (assessed in the space where CATE lies). Both versions are formalized below.

\begin{definition}[BDR: Basic Double Robustness]
    \label{definition:bdr}
        The generalized residual $\Psi(Y, X, W; \theta, \zeta, \eta)$, where $\theta \in \mathcal{B}_\theta$ is the target parameter and $(\zeta, \eta) \in \mathcal{B}_\zeta \times \mathcal{B}_\eta$ are nuisance parameters in the appropriate parameter spaces, satisfies BDR and is thus ``doubly robust'' (in a basic sense) if
        $$\tilde{\theta} = \theta \iff \E[\Psi(Y, X, W; \tilde{\theta}, \tilde{\zeta}, \eta) \mid X] = \E[\Psi(Y, X, W; \tilde{\theta}, \zeta, \tilde{\eta}) \mid X] = 0$$
        almost surely for all $\tilde{\theta} \in \mathcal{B}_\theta$, $\tilde{\zeta} \in \mathcal{B}_\zeta$, and $\tilde{\eta} \in \mathcal{B}_\eta$, where $\theta$, $\zeta$, and $\eta$ are the true parameter values.
\end{definition}

\begin{definition}[GDR: Global Double Robustness]
    \label{definition:gdr}
    The generalized residual $\Psi(Y, X, W; \theta, \zeta, \eta)$, where $\theta \in \mathcal{B}_\theta$ is the target parameter and $(\zeta, \eta) \in \mathcal{B}_\zeta \times \mathcal{B}_\eta$ are nuisance parameters in the appropriate parameter spaces, satisfies GDR and is thus ``globally double robust'' if
    $$\E[\Psi(Y, X, W; \tilde{\theta}, \zeta, \eta) \mid X] = \E[\Psi(Y, X, W; \tilde{\theta}, \tilde{\zeta}, \eta) \mid X] = \E[\Psi(Y, X, W; \tilde{\theta}, \zeta, \tilde{\eta}) \mid X]$$
    almost surely for all $\tilde{\theta} \in \mathcal{B}_\theta$, $\tilde{\zeta} \in \mathcal{B}_\zeta$, and $\tilde{\eta} \in \mathcal{B}_\eta$ in addition to satisfying the BDR property.
\end{definition}

The BDR property in Definition \ref{definition:bdr} is simply a conditional version of the unconditional version usually defined in the literature on ``doubly robust'' methods \citep{robins2001comment,rothe2019properties,chaudhuri2019review}. Similarly, the GDR property is a generalization of the unconditional version defined by \cite{chernozhukov2022locally} and is much stronger than BDR.

\begin{definition}[ASPW for CATE: Augmented Stable Probability Weighting for CATE]
    \label{definition:aspw_cate}
    The generalized residual $\Psi(Y, X, W; \tau, e, \eta)$ satisfies the Augmented SPW (ASPW) principle for CATE $\tau(\cdot)$ if $\Psi(Y, X, W; \tau, e, \eta)$ satisfies the SPW principle as well as the BDR principle.
\end{definition}

\begin{remark}[Robinson Transformation-Based Residual Does Not Have Double Robustness]
    \label{remark:robinson_not_dr}
    Recall that $\Psi(Y, X, W; \tau, e, \eta) = (W - e(X))^2\tau(X) - (W - e(X))(Y - \eta(X))$, where $\eta(X) \equiv \E[Y \mid X]$, is the widely used residual based on Robinson transformation. Under Assumptions \ref{assumption:unconfoundedness}--\ref{assumption:bounded_covariates}, it is easy to show that $\E[\Psi(Y, X, W; \tau, e, \tilde{\eta}) \mid X] = 0$ a.s.~because $\E[(W - e(X))\tilde{\eta}(X) \mid X] = (e(X) - e(X))\tilde{\eta}(X) = 0$ a.s.~for any $\tilde{\eta} \in \mathcal{B}_\eta$, and so the Robinson transformation is robust to misspecification of $\eta$. However, the residual is not robust to misspecification of the propensity score function $e(\cdot)$. To see this, let $E_{\tilde{e}}(X) \equiv \E[\Psi(Y, X, W; \tau, \tilde{e}, \eta) \mid X]$ for any $\tilde{e} \in \mathcal{B}_e$. Then, $E_{\tilde{e}} \equiv [e(1-\tilde{e})^2 + (1-e)\tilde{e}^2]\tau - e(1-\tilde{e})(\mu_1 - \mu_0 - e\,\tau) + (1-e)\tilde{e}(\mu_0 - \mu_0 - e\,\tau)$, which simplifies to $E_{\tilde{e}} \equiv [e - 2e\tilde{e} + \tilde{e}^2]\tau - e(1-\tilde{e})(1-e)\tau - e(1-e)\tilde{e}\,\tau \equiv [e - 2e\tilde{e} + \tilde{e}^2]\tau - [e - e^2]\tau$, and so $E_{\tilde{e}} \equiv (e - \tilde{e})^2\tau \not\equiv 0$ in general if $e \not\equiv \tilde{e}$ (unless $\tau \equiv 0$). Therefore, the Robinson transformation-based residual does not satisfy BDR in general by not being robust to systematic (i.e., non-local) mistakes in the argument dedicated to $e(\cdot)$, even though it is robust to misspecification of $\eta(\cdot)$. This fact helps explain why ``there seems to be some important asymmetry in the role of these two nuisance functions for the CATE'' \citep{kennedy2022minimax} when using the Robinson transformation.
\end{remark}

The above remark shows that the Neyman orthogonal generalized residual based on the Robinson transformation does not satisfy the ASPW principle, even under strict overlap. When overlap is strong, the AIPW residual (defined as CATE minus the AIPW pseudo-outcome) can be used instead; see Definition \ref{definition:aipw} and the remark below it. However, even under limited overlap, there is a simple SPW-based way to fix the usual form of Robinson transformation in order to achieve BDR. There are also several other ways to construct generalized residuals that satisfy the ASPW principle under one-sided or two-sided limited overlap, as discussed below.

\begin{theorem}[Globally Double Robust Generalized Residual Under One-Sided Limited Overlap]
    \label{theorem:gdr_one_sided}
    Suppose that Assumptions \ref{assumption:unconfoundedness}--\ref{assumption:bounded_covariates} hold. If $0 < e(X) < 1 - \underline{p}^* < 1$ a.s., then the generalized residual $\Psi(Y, X, W; \tau, e, \mu_0) = W\,\tau(X) - (W - e(X))(Y - \mu_0(X))/(1- e(X))$ satisfies the ASPW principle and GDR in addition to Neyman orthogonality. If $0 < \underline{p}^* < e(X) < 1$ a.s., then $\Psi(Y, X, W; \tau, e, \mu_1) = (1-W)\,\tau(X) - (W - e(X))(Y - \mu_1(X))/e(X)$ has those properties.
\end{theorem}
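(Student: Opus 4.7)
The plan is to leverage Theorem \ref{theorem:neyman_one_sided}, which already establishes that the stated residuals satisfy the SPW principle and Neyman orthogonality. Hence the only new content is global double robustness (GDR), which by Definition \ref{definition:gdr} entails BDR and therefore also establishes the ASPW property by Definition \ref{definition:aspw_cate}. I will verify GDR directly by computing the conditional expectation of the residual under each of the two possible misspecifications and showing it equals the conditional expectation under correct specification.

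Focus on the first case ($e(X) < 1 - \underline{p}^*$). For arbitrary $\tilde{\tau} \in \mathcal{B}_\tau$, first compute $E^{(\tilde{\eta})}(x) \equiv \E[\Psi(Y, X, W; \tilde{\tau}, e, \tilde{\mu}_0) \mid X = x]$. Under Assumptions \ref{assumption:unconfoundedness}--\ref{assumption:pop_overlap}, the cross term with $\tilde{\mu}_0$ obeys $\E[(W - e(X))\tilde{\mu}_0(X) \mid X = x] = 0$ a.s., so combined with Proposition \ref{proposition:npw_cme} one gets $E^{(\tilde{\eta})}(x) = e(x)\tilde{\tau}(x) - e(x)\tau(x)$, independent of $\tilde{\mu}_0$. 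Next, compute $E^{(\tilde{e})}(x) \equiv \E[\Psi(Y, X, W; \tilde{\tau}, \tilde{e}, \mu_0) \mid X = x]$ by expanding and applying Assumption \ref{assumption:unconfoundedness}: a direct calculation yields $\E[(W - \tilde{e}(x))(Y - \mu_0(x)) \mid X = x] = e(x)(1 - \tilde{e}(x))\tau(x)$, because the $\mu_0$ and $\tilde{e}\mu_0$ contributions cancel the $(1-e(x))\mu_0$ piece of $\E[Y \mid X = x]$. Dividing by $(1 - \tilde{e}(x))$ makes the $\tilde{e}$-dependence disappear, giving $E^{(\tilde{e})}(x) = e(x)\tilde{\tau}(x) - e(x)\tau(x) = E^{(\tilde{\eta})}(x)$.

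Since both perturbations produce the same conditional expectation $e(x)(\tilde{\tau}(x) - \tau(x))$ (which, setting $\tilde{\tau} = \tau$, also coincides with the value at truth), GDR is established. BDR then follows immediately: setting the conditional expectation to zero a.s.\ gives $e(X)(\tilde{\tau}(X) - \tau(X)) = 0$ a.s., and Assumption \ref{assumption:pop_overlap} ($e(X) > 0$ a.s.)\ forces $\tilde{\tau}(X) = \tau(X)$ a.s. Combined with Theorem \ref{theorem:neyman_one_sided}, this delivers the ASPW principle, GDR, and Neyman orthogonality simultaneously.

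The second case is handled by a symmetric argument: swapping the roles of treatment and control (replacing $W$ by $1-W$, $e$ by $1-e$, and $\mu_0$ by $\mu_1$) turns the residual $(1-W)\tau(X) - (W - e(X))(Y - \mu_1(X))/e(X)$ into the analogue of the first residual, and the same algebra shows both perturbations yield $(1-e(x))(\tilde{\tau}(x) - \tau(x))$; positivity of $1 - e(X)$ under Assumption \ref{assumption:pop_overlap} completes BDR. The main (and only mildly delicate) step is the algebraic cancellation that forces $E^{(\tilde{e})}(x)$ to be free of $\tilde{e}$; everything else is bookkeeping driven by unconfoundedness and Proposition \ref{proposition:npw_cme}.
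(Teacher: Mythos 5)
Your proposal is correct and follows essentially the same route as the paper: cite Theorem \ref{theorem:neyman_one_sided} for SPW and Neyman orthogonality, then compute the conditional expectation of the residual under each single misspecification, find that both reduce to $e(X)(\tilde{\tau}(X)-\tau(X))$, and use $e(X)>0$ to get BDR (hence ASPW) and GDR, with the second residual handled symmetrically. The only cosmetic difference is that the paper evaluates the fully general expression $E[\tilde{\tau},\tilde{e},\tilde{\mu}_0]$ once and then specializes, whereas you compute the two perturbed expectations separately, which is equivalent bookkeeping.
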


\begin{proof}
    Suppose that $0 < e(X) < 1 - \underline{p}^* < 1$ almost surely. Then, the generalized residual $\Psi(Y, X, W; \tau, e, \mu_0) = W\,\tau(X) - (W - e(X))(Y - \mu_0(X))/(1- e(X))$ satisfies SPW and Neyman orthogonality by Theorem \ref{theorem:neyman_one_sided}. To show GDR, let $E[\tilde{\tau}, \tilde{e}, \tilde{\mu}_0](X) \equiv \E[\Psi(Y, X, W; \tilde{\tau}, \tilde{e}, \tilde{\mu}_0) \mid X]$ for any $\tilde{\tau} \in \mathcal{B}_\tau$, $\tilde{e} \in \mathcal{B}_e$, and $\tilde{\eta} \in \mathcal{B}_\eta$. Then, $E[\tilde{\tau}, \tilde{e}, \tilde{\mu}_0] \equiv e\,\tilde{\tau} - e(\mu_1 - \tilde{\mu}_0) + (1-e)\tilde{e}(\mu_0 - \tilde{\mu}_0)/(1 - \tilde{e})$, and so it is straightforward to see that $E[\tilde{\tau}, e, \tilde{\mu}_0] \equiv E[\tilde{\tau}, \tilde{e}, \mu_0] \equiv E[\tilde{\tau}, e, \mu_0] \equiv e(\tilde{\tau} - \tau) \equiv 0$ if and only if $\tilde{\tau} \equiv \tau$ because $e$ is positive (by Assumption \ref{assumption:pop_overlap}). Therefore, the generalized residual also satisfies GDR and the ASPW principle. Similar reasoning can be used to prove the last
    statement in the theorem (for the case where $0 < \underline{p}^* < e(X) < 1$ almost surely).
\end{proof}

\begin{theorem}[Doubly Robust GNPW Generalized Residual Under Two-Sided Limited Overlap]
    \label{theorem:gnpw_dr}
    Suppose that Assumptions \ref{assumption:unconfoundedness}--\ref{assumption:bounded_covariates} hold. Let $\Psi(Y, X, W; \tau, e, (\mu_0, \mu_1))$ be the GNPW residual
    $$\mathcal{S}(X, W; e)\,\tau(X) - (W - e(X))\,\big[Y - \mu_0(X) - \big(\vartheta_2+\vartheta_4\,e(X)\big)\big(\mu_1(X)-\mu_0(X)\big)\big],$$
    where $\mathcal{S}(X, W; e) \equiv \vartheta_1\,W+\vartheta_2\,e(X)+\vartheta_3\,W\,e(X)+\vartheta_4\,e(X)^2$ such that $(\vartheta_1, \vartheta_2, \vartheta_3, \vartheta_4) \in \R^4$ with $\vartheta_1 + \vartheta_2 = 1$ and $\vartheta_3 + \vartheta_4 = -1$. Then, the generalized residual $\Psi(Y, X, W; \tau, e, (\mu_0, \mu_1))$ is doubly robust and satisfies the ASPW principle in addition to Neyman orthogonality.
\end{theorem}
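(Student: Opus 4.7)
The plan is to reuse Theorem~\ref{theorem:neyman_two_sided}, which already establishes that this exact residual satisfies the SPW principle and Neyman orthogonality, and to concentrate all new work on verifying the Basic Double Robustness (BDR) property of Definition~\ref{definition:bdr}; once BDR is in hand, the ASPW principle of Definition~\ref{definition:aspw_cate} follows automatically (ASPW = SPW + BDR). I would therefore reduce the proof to two conditional-expectation computations: first with the propensity score misspecified as $\tilde{e}$ but $(\mu_0,\mu_1)$ correct, and second with $(\mu_0,\mu_1)$ misspecified as $(\tilde{\mu}_0,\tilde{\mu}_1)$ but $e$ correct. In each case my target is to show the expectation reduces to a strictly positive (a.s.) scalar multiplied by $(\tilde{\tau}(X) - \tau(X))$, which delivers the required ``iff''.

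I would handle the $(\tilde{\mu}_0,\tilde{\mu}_1)$-misspecified case first, since it is the easier one. Because $\E[W - e(X) \mid X] = 0$ a.s., every augmentation term multiplying $(W - e(X))$ vanishes in expectation regardless of $(\tilde{\mu}_0,\tilde{\mu}_1)$; the conditional expectation collapses to $\E[\mathcal{S}(X, W; e) \mid X]\,\tilde{\tau}(X) - \E[(W - e(X))\,Y \mid X]$. A short expansion of $\mathcal{S}$ using $\vartheta_1+\vartheta_2=1$ and $\vartheta_3+\vartheta_4=-1$ yields $\E[\mathcal{S}(X, W; e) \mid X] = e(X)(1-e(X))$, and Proposition~\ref{proposition:npw_cme} gives $\E[(W - e(X))\,Y \mid X] = e(X)(1-e(X))\,\tau(X)$. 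The difference is $e(X)(1-e(X))\,(\tilde{\tau}(X) - \tau(X))$, which under Assumption~\ref{assumption:pop_overlap} read two-sidedly (since $p(0, X), p(1, X) > 0$ a.s.) vanishes a.s.~iff $\tilde{\tau}\equiv \tau$ a.s.

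The $\tilde{e}$-misspecified case requires more careful bookkeeping. I would first compute $\E[\mathcal{S}(X, W; \tilde{e}) \mid X] = \vartheta_1 e + \vartheta_2 \tilde{e} + \vartheta_3 e \tilde{e} + \vartheta_4 \tilde{e}^2 \equiv A(X)$. Then, using $\E[W Y \mid X] = e\mu_1$ and $\E[(1-W)Y \mid X] = (1-e)\mu_0$, I would expand $\E[(W - \tilde{e})(Y - \mu_0 - (\vartheta_2 + \vartheta_4 \tilde{e})(\mu_1 - \mu_0)) \mid X]$ and simplify to $e(1-\tilde{e})\tau - (\vartheta_2 + \vartheta_4 \tilde{e})(e - \tilde{e})\tau$. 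Combining with the $\tilde{\tau}$-term and invoking the identities $\vartheta_1+\vartheta_2=1$ and $\vartheta_3+\vartheta_4=-1$, the residual cross-terms telescope to give exactly $\E[\Psi(Y, X, W; \tilde{\tau}, \tilde{e}, (\mu_0, \mu_1)) \mid X] = A(X)\,(\tilde{\tau}(X) - \tau(X))$.

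The main obstacle will be this last algebraic cancellation: the augmentation $\mu_0 + (\vartheta_2 + \vartheta_4 e)(\mu_1 - \mu_0)$ inside $\Psi$ is engineered precisely so that the two scalar constraints on $\vartheta$ eliminate all terms not proportional to $(\tilde{\tau} - \tau)$, and tracking which factors carry $e$ versus $\tilde{e}$ is the delicate step. A secondary issue is closing the ``iff'': one must verify $A(X) \ne 0$ a.s.~for the $\vartheta$ choices of interest. This holds for each example following Theorem~\ref{theorem:neyman_two_sided} under two-sided overlap, e.g., $A = \tilde{e}(1-\tilde{e})$ for the orthogonal NPW choice and $A = e(1-\tilde{e})^2 + (1-e)\tilde{e}^2$ for the Robinson choice; the backward direction of BDR then follows by specializing to $\tilde{e} = e$ and $(\tilde{\mu}_0, \tilde{\mu}_1) = (\mu_0, \mu_1)$ and invoking SPW identification. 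BDR thus holds, and combined with the SPW and Neyman-orthogonality conclusions imported from Theorem~\ref{theorem:neyman_two_sided} yields the full claim.
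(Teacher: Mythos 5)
Your proposal follows essentially the same route as the paper: import SPW and Neyman orthogonality from Theorem \ref{theorem:neyman_two_sided}, then verify BDR by computing the two one-nuisance-misspecified conditional expectations, and indeed your algebra reproduces the paper's expressions exactly, $\E[\Psi(Y,X,W;\tilde{\tau},e,(\tilde{\mu}_0,\tilde{\mu}_1))\mid X]=e(X)(1-e(X))(\tilde{\tau}(X)-\tau(X))$ and $\E[\Psi(Y,X,W;\tilde{\tau},\tilde{e},(\mu_0,\mu_1))\mid X]=(\vartheta_1 e+\vartheta_2\tilde{e}+\vartheta_3 e\tilde{e}+\vartheta_4\tilde{e}^2)(\tilde{\tau}(X)-\tau(X))$.

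One caveat in your closing paragraph: you do not need $A(X)\neq 0$ a.s., and insisting on it would actually restrict the result, since for general $(\vartheta_1,\vartheta_2,\vartheta_3,\vartheta_4)$ obeying only the two linear constraints $A$ can vanish for some $\tilde{e}\in(0,1)$ (it is guaranteed positive only for special choices like the orthogonal NPW or Robinson weights you list). Definition \ref{definition:bdr} requires the equivalence of $\tilde{\tau}\equiv\tau$ with \emph{both} conditional moments vanishing; the forward direction is immediate because both expressions are multiples of $(\tilde{\tau}-\tau)$, and the backward direction is already delivered by your correct-$e$ computation together with $e(X)(1-e(X))>0$ a.s.\ under Assumption \ref{assumption:pop_overlap}---which is exactly how the paper closes the argument. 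Likewise, the remark about ``specializing to $\tilde{e}=e$ and $(\tilde{\mu}_0,\tilde{\mu}_1)=(\mu_0,\mu_1)$'' is not how BDR's backward implication runs (the misspecified nuisances are arbitrary and fixed); dropping that detour leaves a complete and correct proof.
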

\begin{proof}
    The generalized residual $\Psi(Y, X, W; \tau, e, \eta)$, where $\eta \equiv (\mu_0, \mu_1)$, is a GNPW (generalized non-inverse probability weighting) residual that satisfies the SPW principle and Neyman orthogonality by Theorem \ref{theorem:neyman_two_sided}. To show BDR, let $E[\tilde{\tau}, \tilde{e}, \tilde{\eta}](X) \equiv \E[\Psi(Y, X, W; \tilde{\tau}, \tilde{e}, \tilde{\eta}) \mid X]$ for any $\tilde{\tau} \in \mathcal{B}_\tau$, $\tilde{e} \in \mathcal{B}_e$, and $\tilde{\eta} \in \mathcal{B}_\eta$. Then, it follows that $E[\tilde{\tau}, \tilde{e}, \tilde{\eta}] \equiv (\vartheta_1e + \vartheta_2 \tilde{e} + \vartheta_3 e\tilde{e} + \vartheta_4 \tilde{e}^2)\tilde{\tau} - e(1-\tilde{e})[\mu_1 - \tilde{\mu}_0 - (\vartheta_2 + \vartheta_4\tilde{e})(\tilde{\mu}_1 - \tilde{\mu}_0)] + (1-e)\tilde{e}[\mu_0 - \tilde{\mu}_0 - (\vartheta_2 + \vartheta_4 \tilde{e})(\tilde{\mu}_1 - \tilde{\mu}_0)]$, implying that $E[\tilde{\tau}, e, \tilde{\eta}] \equiv e(1-e)(\tilde{\tau} - \tau)$ and $E[\tilde{\tau}, \tilde{e}, \eta] \equiv (\vartheta_1 e + \vartheta_2 \tilde{e} + \vartheta_3 e \tilde{e} + \vartheta_4 \tilde{e}^2)(\tilde{\tau} - \tau)$. Since $e \in (0, 1)$ and $\tilde{e} \in (0, 1)$, $E[\tilde{\tau}, e, \tilde{\eta}] \equiv E[\tilde{\tau}, \tilde{e}, \eta] \equiv 0 \iff \tilde{\tau} \equiv \tau$, satisfying the ASPW principle.
\end{proof}

The above proof not only shows that GNPW residual has basic double robustness (BDR) but also implicitly shows that it does not have global double robustness (GDR) in general. Nevertheless, it does satisfy the ASPW principle. The above theorem also provides a way to fix the Robinson transformation-based residual (RT residual) to make it doubly robust. The generalized residual $(W - e(X))^2\tau(X) - (W - e(X))(Y - e(X)\mu_1(X) - (1-e(X))\mu_0(X))$ is the GNPW residual with $(\vartheta_1, \vartheta_2, \vartheta_3, \vartheta_4) = (1, 0, -2, 1)$ and has a vector-valued nuisance function $(\mu_0(\cdot), \mu_1(\cdot))$ in addition to $e(\cdot)$. Since this residual satisfies the ASPW principle, it can be used instead of the usual RT residual given by $(W - e(X))^2\tau(X) - (W - e(X))(Y - \eta(X))$, which only uses one nuisance function $\eta(X) \equiv \E[Y \mid X]$ in addition to $e(\cdot)$. As argued in Remark \ref{remark:robinson_not_dr}, the RT residual does not have double robustness because correctly specifying $\eta(\cdot)$, which implicitly involves the true value of the propensity score function, while misspecifying $e(\cdot)$ leads to a systematic imbalance that leads to a nonzero conditional expectation of the RT residual in that case. Thus, surprisingly, when $e(\cdot)$ is misspecified as $\tilde{e}(\cdot)$, it is better to use the residual $(W - \tilde{e}(X))^2\tau(X) - (W - \tilde{e}(X))(Y - \tilde{e}(X)\mu_1(X) - (1-\tilde{e}(X))\mu_0(X))$ than to use $(W - \tilde{e}(X))^2\tau(X) - (W - \tilde{e}(X))(Y - \eta(X))$, where $\eta(X) \equiv e(X)\mu_1(X) + (1-e(X))\mu_0(X)$ is the correct conditional expectation rather than the incorrect version $\tilde{e}(X)\mu_1(X) + (1-\tilde{e}(X))\mu_0(X)$; the latter is nevertheless deliberately exploited by the GNPW residual to achieve double robustness. Another generalized residual that satisfies the ASPW principle (but not GDR) is given below.

\begin{theorem}[Weighted AIPW Residual Satisfies Neyman Orthogonality and ASPW Principle]
    \label{theorem:aspw_bdr}
    If Assumptions \ref{assumption:unconfoundedness}--\ref{assumption:bounded_covariates}, then, the weighted AIPW residual $\Psi(Y, X, W; \tau, e, (\mu_0, \mu_1))$ equal to $$e(X)(1-e(X))\tau(X) - e(X)(1-e(X))(\mu_1(X) - \mu_0(X)) - (W - e(X))(Y - W\mu_1(X) - (1-W)\mu_0(X))$$ is doubly robust and satisfies the ASPW principle in addition to Neyman orthogonality.
\end{theorem}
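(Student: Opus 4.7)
My plan is to verify the three claims---SPW, ASPW (equivalently, SPW plus BDR), and Neyman orthogonality---in that order, and then to observe that Neyman orthogonality essentially falls out of the BDR computation with no extra work. First, to verify the SPW principle of Definition \ref{definition:spw_cate}, I would read the decomposition directly off the given form of the residual: set $\mathcal{S}(X, W; e) = e(X)(1 - e(X))$ (which does not depend on $W$), so that $s(X; e) = e(X)(1 - e(X))$ and $\tilde{s}(X; e) \equiv 1 \leq 1$; the middle term $-(W - e(X)) Y$ already fits the SPW template with coefficient $\tilde{s} \equiv 1$; and the remainder is
\[
\xi(X, W; e, (\mu_0, \mu_1)) \;=\; -e(X)(1-e(X))(\mu_1(X) - \mu_0(X)) + (W - e(X))\bigl(W\mu_1(X) + (1-W)\mu_0(X)\bigr).
\]
Using $(W - e(X))W = W(1 - e(X))$ and $(W - e(X))(1 - W) = -e(X)(1 - W)$ (valid because $W \in \{0,1\}$), one gets $\E[(W - e(X))(W\mu_1(X) + (1 - W)\mu_0(X)) \mid X] = e(X)(1-e(X))(\mu_1(X) - \mu_0(X))$ a.s., which cancels the first piece of $\xi$, so $\E[\xi \mid X] = 0$ a.s.

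For BDR (Definition \ref{definition:bdr}), let $E[\tilde{\tau}, \tilde{e}, \tilde{\eta}](X) \equiv \E[\Psi(Y, X, W; \tilde{\tau}, \tilde{e}, \tilde{\eta}) \mid X]$, in the notation of the proofs of Theorems \ref{theorem:gdr_one_sided}--\ref{theorem:gnpw_dr}. In the first sub-case (true $\eta$, perturbed $\tilde{e}$), Assumptions \ref{assumption:unconfoundedness}--\ref{assumption:sutva} imply $\E[Y - W\mu_1(X) - (1-W)\mu_0(X) \mid X, W] = 0$ a.s., so the augmentation term vanishes by the tower property and one obtains $E[\tilde{\tau}, \tilde{e}, \eta] \equiv \tilde{e}(1 - \tilde{e})(\tilde{\tau} - \tau)$. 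In the second sub-case (true $e$, perturbed $\tilde{\eta}$), Proposition \ref{proposition:npw_cme} gives $\E[(W - e(X)) Y \mid X] = e(X)(1 - e(X)) \tau(X)$ a.s., while the same $W \in \{0,1\}$ sign accounting as above yields $\E[(W - e(X))(W \tilde{\mu}_1(X) + (1 - W)\tilde{\mu}_0(X)) \mid X] = e(X)(1 - e(X))(\tilde{\mu}_1(X) - \tilde{\mu}_0(X))$ a.s.; the two $\tilde{\eta}$-dependent pieces then cancel and $E[\tilde{\tau}, e, \tilde{\eta}] \equiv e(1 - e)(\tilde{\tau} - \tau)$. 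By Assumption \ref{assumption:pop_overlap}, $e(1 - e) > 0$ a.s.\ (and likewise $\tilde{e}(1 - \tilde{e}) > 0$ a.s.\ for any $\tilde{e}$ in the propensity-score space), so both expressions vanish almost surely if and only if $\tilde{\tau} \equiv \tau$. This is BDR, and combined with SPW it gives ASPW.

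Finally, Neyman orthogonality (Definition \ref{definition:neyman_orth}) follows from the BDR computation without further calculation. Evaluating the two formulas just derived at the truth $\tilde{\tau} = \tau$ shows that $E[\tau, e + t h_e, \eta] \equiv 0$ and $E[\tau, e, \eta + t h_\eta] \equiv 0$ identically in $t \in [0, 1)$, so the partial Gateaux derivatives of $E[\tau, \cdot, \cdot]$ in the $e$-direction and in the $\eta$-direction (separately, at the true values) are both zero. Since the joint perturbation $(e + t h_e, \eta + t h_\eta)$ is affine in $t$, the directional derivative appearing in Definition \ref{definition:neyman_orth} equals the sum of these two partials and is therefore also zero almost surely. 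The only non-routine ingredient in the whole argument is the algebraic bookkeeping for $\E[(W - e(X))(W\mu_1(X) + (1 - W)\mu_0(X)) \mid X]$; everything collapses neatly because $W \in \{0, 1\}$ forces $W(1 - W) = 0$ and kills the cross term, making the rest of the proof essentially a one-line invocation of Proposition \ref{proposition:npw_cme} and the tower property.
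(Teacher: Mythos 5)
Your verification of the SPW property and of basic double robustness is correct and substantively the same as the paper's: the paper derives the single master expression $E[\tilde\tau,\tilde e,\tilde\eta] \equiv \tilde e(1-\tilde e)(\tilde\tau-\tilde\mu_1+\tilde\mu_0) - e(1-\tilde e)(\mu_1-\tilde\mu_1) + (1-e)\tilde e(\mu_0-\tilde\mu_0)$ and specializes it, whereas you obtain the two specializations $E[\tilde\tau,\tilde e,\eta]\equiv\tilde e(1-\tilde e)(\tilde\tau-\tau)$ and $E[\tilde\tau,e,\tilde\eta]\equiv e(1-e)(\tilde\tau-\tau)$ directly via the tower property and Proposition \ref{proposition:npw_cme}; this is a cosmetic difference, and your identification of $\mathcal{S}(X,W;e)=e(X)(1-e(X))$, $\tilde s\equiv 1$, and $\xi$ agrees with the paper's (you should also note, in passing, that $\xi$ is bounded by Assumption \ref{assumption:bounded_outcomes}, as Definition \ref{definition:spw_cate} requires).

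Where you genuinely depart from the paper is Neyman orthogonality, and that is also where your argument has a gap. The paper expands the jointly perturbed conditional mean and finds $E_t\equiv t^2[(e-1)h_eh_{\mu_1}-eh_eh_{\mu_0}]+t^3[h_e^2(h_{\mu_1}-h_{\mu_0})]$, so $\frac{d}{dt}E_t\big\lvert_{t=0}\equiv 0$ by inspection. You instead note that the two single-direction paths are identically zero in $t$ (true, by your BDR formulas at $\tilde\tau=\tau$) and then assert that the joint directional derivative ``equals the sum of these two partials'' because the perturbation is affine in $t$. Affineness of the path is not a reason: the path in Definition \ref{definition:neyman_orth} is always affine in $t$, and vanishing of the two partial G\^ateaux derivatives does not by itself imply vanishing of the joint directional derivative unless the map $(\tilde e,\tilde\mu_0,\tilde\mu_1)\mapsto E[\tau,\tilde e,\tilde\eta](x)$ is jointly differentiable, so that the cross terms in $h_e$ and $h_{\mu_w}$ enter only at order $t^2$. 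In the present case this is true---the conditional mean is an explicit quadratic polynomial in the pointwise values of the three perturbed functions, so the mixed terms are $O(t^2)$---but your proof needs to say this (or simply expand $E_t$ along the joint path, as the paper does); as written, the chain-rule step is asserted rather than justified. With that one line added, your argument is complete and delivers the same conclusion, with the attractive feature that orthogonality is read off the double-robustness computation rather than computed separately.
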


\begin{proof}
    Let $E[\tilde{\tau}, \tilde{e}, \tilde{\eta}](X) \equiv \E[\Psi(Y, X, W; \tilde{\tau}, \tilde{e}, \tilde{\eta}) \mid X]$, where $\tilde{\eta} \equiv (\tilde{\mu}_0, \tilde{\mu}_1)$, for any $\tilde{\tau} \in \mathcal{B}_\tau$, $\tilde{e} \in \mathcal{B}_e$, and $\tilde{\eta} \in \mathcal{B}_\eta$. Then, $E[\tilde{\tau}, \tilde{e}, \tilde{\eta}] \equiv \tilde{e}(1-\tilde{e})(\tilde{\tau} - \tilde{\mu}_1 + \tilde{\mu}_0) - e(1-\tilde{e})(\mu_1 - \tilde{\mu}_1) + (1-e)\tilde{e}(\mu_0 - \tilde{\mu}_0)$. Note that $E[\tilde{\tau}, e, \tilde{\eta}] \equiv e(1-e)(\tilde{\tau} - \tau)$ and $E[\tilde{\tau}, \tilde{e}, \eta] \equiv \tilde{e}(1-\tilde{e})(\tilde{\tau} - \tau)$. Since both $e(1-e)$ and $\tilde{e}(1-\tilde{e})$ lie in $(0, 0.25)$, $E[\tilde{\tau}, e, \tilde{\eta}] \equiv E[\tilde{\tau}, \tilde{e}, \eta] \equiv 0 \iff \tilde{\tau} \equiv \tau$, satisfying the BDR property. To show Neyman orthogonality, let $E_t(x) = \E[\Psi(Y, X, W; \tau, (e, \mu_0, \mu_1)+t(h_e, h_{\mu_0}, h_{\mu_1})) \mid X = x]$, where $t \in [0, 1)$ and $(h_e, h_{\mu_0}, h_{\mu_1}) \in \mathcal{B}_{(e,\eta)}$, for all $x \in \mathbb{X}$. Note that the dependence of $E_t$ on $(e, \mu_0, \mu_1, h_e, h_{\mu_0}, h_{\mu_1})$ is suppressed for expositional ease. It is straightforward to show that $E_t \equiv [(e+th_e) - (e + th_e)^2][\tau - \mu_1 - th_{\mu_1} + \mu_0 + th_{\mu_0}] - e(1-e-th_e)(- th_{\mu_1}) +(1-e)(e + th_e)(- th_{\mu_0})$, which simplifies to $E_t \equiv t^2[(e-1)h_eh_{\mu_1} - eh_eh_{\mu_0}] + t^3[h_e^2(h_{\mu_1} - h_{\mu_0})]$. Thus, $\frac{d}{dt}E_t \big\lvert_{t = 0} \equiv 0$, proving that the weighted AIPW residual $\Psi(Y, X, W; \tau, e, (\mu_0, \mu_1))$ is also Neyman orthogonal. Of course, it also satisfies the SPW principle with $\mathcal{S}(X, W; e) = e(X)(1-e(X))$ so that $\tilde{s}(X; e) = 1$ a.s.~and $\xi(X, W; e, \eta) = (W - e(X))(W\mu_1(X) + (1-W)\mu_0(X)) - e(X)(1-e(X))(\mu_1(X) - \mu_0(X))$.
\end{proof}

The residuals discussed in Theorems \ref{theorem:gnpw_dr} and \ref{theorem:aspw_bdr} are valid under two-sided limited overlap and satisfy the BDR condition but not the GDR condition in general. The reason is that they multiply the main parameter of interest $\tau(\cdot)$ by a function of $e(\cdot)$. However, there are two alternatives that do satisfy the GDR principle and the ASPW principle in addition to satisfying Neyman orthogonality.

\begin{theorem}[Stablizing AIPW Produces a Neyman Orthogonal SPW-Based Residual with GDR]
    \label{theorem:aspw_gdr}
    If Assumptions \ref{assumption:unconfoundedness}--\ref{assumption:bounded_covariates}, then, the stabilized AIPW residual $\Psi(Y, X, W; \tau, e, (\mu_0, \mu_1))$ equal to $$r(X)(1-r(X))\big[\tau(X) - ([\mu_1 - \mu_0](X)) - (W - e(X))(Y - W\mu_1(X) - (1-W)\mu_0(X))/[e(X)(1-e(X))]\big],$$ where $r: \mathbb{X} \to (0, 1)$ is some known function such that $r(X)(1-r(X))/[e(X)(1-e(X))] \leq M$ a.s.~for some $M > 0$, satisfies the GDR condition, the ASPW principle, and Neyman orthogonality.
\end{theorem}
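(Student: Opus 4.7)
The plan is to verify the four required properties—SPW, GDR, BDR (hence ASPW), and Neyman orthogonality—in sequence, each by direct substitution into the relevant definition. The whole argument rests on a single algebraic identity: because $W^2 = W$ for binary $W$, we have $(W-\tilde e)W = W(1-\tilde e)$ and $(W-\tilde e)(1-W) = -\tilde e(1-W)$, which together with Assumptions \ref{assumption:unconfoundedness}--\ref{assumption:pop_overlap} (giving $\E[WY\mid X]= e\mu_1$ and $\E[(1-W)Y\mid X] = (1-e)\mu_0$) yield the master formula
$$\E[(W-\tilde e)(Y - W\tilde\mu_1 - (1-W)\tilde\mu_0)\mid X] = e(1-\tilde e)(\mu_1-\tilde\mu_1) - \tilde e(1-e)(\mu_0-\tilde\mu_0) \quad \text{a.s.}$$
Once this is in hand, every claim drops out of Definitions \ref{definition:spw_cate}--\ref{definition:gdr} with minimal extra work.

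For the SPW principle I would identify $\mathcal{S}(X,W;e) = r(X)(1-r(X))$, which is $W$-free so that $s(X;e) = r(X)(1-r(X))$ and $\tilde s(X;e) = r(X)(1-r(X))/[e(X)(1-e(X))] \leq M$ by the hypothesis on $r$; the coefficient of $-(W-e(X))Y$ in the residual equals precisely this $\tilde s(X;e)$, and the remainder $\xi(X,W;e,\eta) = -r(1-r)(\mu_1-\mu_0) + r(1-r)(W-e)(W\mu_1 + (1-W)\mu_0)/[e(1-e)]$ is bounded (by the bound on $r(1-r)/[e(1-e)]$ together with Assumption \ref{assumption:bounded_outcomes}) and has zero conditional mean by the master formula with $\tilde e = e$, $\tilde\mu_j = 0$.

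For GDR and BDR I would compute $E[\tilde\tau,\tilde e,\tilde\eta](X) \equiv \E[\Psi(Y,X,W;\tilde\tau,\tilde e,\tilde\eta)\mid X]$ using the master formula, obtaining
$$E[\tilde\tau,\tilde e,\tilde\eta] \equiv r(1-r)\tilde\tau - r(1-r)(\tilde\mu_1 - \tilde\mu_0) - \frac{r(1-r)\bigl[e(1-\tilde e)(\mu_1-\tilde\mu_1) - \tilde e(1-e)(\mu_0-\tilde\mu_0)\bigr]}{\tilde e(1-\tilde e)}.$$
Setting $\tilde e = e$ collapses the bracketed term to $e(1-e)[(\mu_1-\tilde\mu_1)-(\mu_0-\tilde\mu_0)]$, so $E[\tilde\tau,e,\tilde\eta] \equiv r(1-r)(\tilde\tau-\tau)$; setting $\tilde\eta = \eta$ kills the bracket entirely, so $E[\tilde\tau,\tilde e,\eta] \equiv r(1-r)(\tilde\tau-\tau)$; and $E[\tilde\tau, e, \eta]$ gives the same expression. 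Since $r(1-r) > 0$ because $r$ takes values in $(0,1)$, these three conditional expectations are identical and vanish a.s.\ iff $\tilde\tau \equiv \tau$, which is precisely GDR plus BDR, hence the ASPW principle. For Neyman orthogonality, I would set $\tilde\tau = \tau$ and $(\tilde e,\tilde\mu_0,\tilde\mu_1) = (e,\mu_0,\mu_1) + t(h_e, h_{\mu_0}, h_{\mu_1})$ in the same expression; the constant-in-$t$ part cancels using $\tau = \mu_1 - \mu_0$, and what remains carries an overall factor of $t$, so writing $E_t = t\,r(1-r)[f(t) - (h_{\mu_1}-h_{\mu_0})]$ with a readily computable $f(t)$ satisfying $f(0) = h_{\mu_1}-h_{\mu_0}$, one gets $\tfrac{d}{dt}E_t\big|_{t=0} = r(1-r)[f(0) - (h_{\mu_1}-h_{\mu_0})] = 0$ a.s.

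The main obstacle is purely bookkeeping: keeping track of signs in the binary-$W$ reductions and algebraically verifying that the division by $\tilde e(1-\tilde e)$ cleanly cancels against $e(1-\tilde e)$ and $\tilde e(1-e)$ when either $\tilde e = e$ or $\tilde\eta = \eta$. There are no genuinely delicate analytic issues—boundedness is built in by the hypothesis on $r$, and all conditional expectations exist under Assumptions \ref{assumption:unconfoundedness}--\ref{assumption:bounded_covariates}. The conceptual payoff is that multiplying by the \emph{known} stabilizer $r(1-r)$ rather than by the unknown $e(1-e)$ decouples the target parameter from the nuisance $e$, which is exactly what Theorems \ref{theorem:gnpw_dr} and \ref{theorem:aspw_bdr} failed to do and is the reason GDR holds here but not there.
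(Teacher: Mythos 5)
Your proposal is correct and follows essentially the same route as the paper: compute $\E[\Psi(Y,X,W;\tilde\tau,\tilde e,\tilde\eta)\mid X]$ in closed form, observe that it collapses to $r(1-r)(\tilde\tau-\tau)$ whenever at most one nuisance is misspecified (giving GDR and BDR since $r(1-r)>0$), verify SPW with $\mathcal{S}=r(1-r)$, $\tilde s=r(1-r)/[e(1-e)]$ and the stated $\xi$, and check Neyman orthogonality by differentiating the path expectation $E_t$ at $t=0$. Your factoring $E_t=t\,r(1-r)[f(t)-(h_{\mu_1}-h_{\mu_0})]$ with $f(0)=h_{\mu_1}-h_{\mu_0}$ is a slightly cleaner way to organize the same derivative computation the paper performs explicitly.
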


\begin{proof}
    Let $E[\tilde{\tau}, \tilde{e}, \tilde{\eta}](X) \equiv \E[\Psi(Y, X, W; \tilde{\tau}, \tilde{e}, \tilde{\eta}) \mid X]$, where $\tilde{\eta} \equiv (\tilde{\mu}_0, \tilde{\mu}_1)$, for any $\tilde{\tau} \in \mathcal{B}_\tau$, $\tilde{e} \in \mathcal{B}_e$, and $\tilde{\eta} \in \mathcal{B}_\eta$. Then, $E[\tilde{\tau}, \tilde{e}, \tilde{\eta}] \equiv r(1-r)[(\tilde{\tau} - \tilde{\mu}_1 + \tilde{\mu}_0) - e(\mu_1 - \tilde{\mu}_1)/\tilde{e} + (1-e)(\mu_0 - \tilde{\mu}_0)/(1 - \tilde{e})]$. Note that $E[\tilde{\tau}, e, \tilde{\eta}] \equiv E[\tilde{\tau}, \tilde{e}, \eta] \equiv r(1-r)(\tilde{\tau} - \tau)$. In addition, $E[\tilde{\tau}, e, \tilde{\eta}] \equiv E[\tilde{\tau}, \tilde{e}, \eta] \equiv 0 \iff \tilde{\tau} \equiv \tau$ since $r(1-r) \in (0, 0.25)$. Therefore, the stabilized AIPW residual satisfies the GDR property. To show Neyman orthogonality, let $E_t(x) = \E[\Psi(Y, X, W; \tau, (e, \mu_0, \mu_1)+t(h_e, h_{\mu_0}, h_{\mu_1})) \mid X = x]$, where $t \in [0, 1)$ and $(h_e, h_{\mu_0}, h_{\mu_1}) \in \mathcal{B}_{(e,\eta)}$, for all $x \in \mathbb{X}$. Note that the dependence of $E_t$ on $(e, \mu_0, \mu_1, h_e, h_{\mu_0}, h_{\mu_1})$ is suppressed for expositional ease. It is straightforward to show that $E_t \equiv r(1-r)[\tau - \mu_1 - th_{\mu_1} + \mu_0 + th_{\mu_0} - t(h_{\mu_1} - h_{\mu_0}) + eth_{\mu_1}/(e + th_e) - (1-e)th_{\mu_0}/(1-e-th_e)]$, implying that $\frac{d}{dt}E_t \equiv r(1-r)[h_{\mu_0} - h_{\mu_1} + e^2h_{\mu_1}/(e+th_e)^2 - (1-e)^2h_{\mu_0}/(1-e-th_e)^2]$, and so $\frac{d}{dt}E_t \big\lvert_{t = 0} \equiv 0$, proving that the stabilized AIPW residual $\Psi(Y, X, W; \tau, e, (\mu_0, \mu_1))$ is also Neyman orthogonal. In addition, it satisfies the SPW principle with $\mathcal{S}(X, W; e) = r(X)(1-r(X))$ and $\xi(X, W; e, \eta) = r(X)(1-r(X))\Big[\frac{W - e(X)}{[e(X)(1-e(X))}(W\mu_1(X) + (1-W)\mu_0(X)) - (\mu_1(X) - \mu_0(X))\Big]$.
\end{proof}

\begin{theorem}[GDR and Orthogonal ASPW Residual With Known Regions of Limited Overlap]
    \label{theorem:hybrid_gdr}
    Suppose the regions of limited overlap are known. Then, this knowledge can be used to construct a binary function $r: \mathbb{X} \to \{0, 1\}$ such that $r(x) = 1$ if $e(x) < \underline{p}^*$, $r(x) = 0$ if $e(x) > 1 - \underline{p}^*$, and $r(x) \in \{0, 1\}$ otherwise for all $x \in \mathbb{X}$, where $\underline{p}^* \in (0, 0.5)$ is some constant. Then, the residual $$\Psi(Y, X, W; \tau, e, \eta) = \mathcal{S}(X, W; e)\,\tau(X) - \tilde{s}(X; e)\,(W - e(X))[Y - r(X)\,\mu_0(X) - (1-r(X))\,\mu_1(X)],$$ where $\mathcal{S}(X, W; e) = W\,r(X) + (1-W)(1-r(X))$, $\tilde{s}(X; e) = r(X)/(1-e(X)) + (1-r(X))/e(X)$, and $\eta \equiv (\mu_0, \mu_1)$, satisfies the GDR condition, the ASPW principle, and Neyman orthogonality under Assumptions \ref{assumption:unconfoundedness}--\ref{assumption:bounded_covariates}. Note that this result treats $r(\cdot)$ as known, not the exact values of $e(\cdot)$.
\end{theorem}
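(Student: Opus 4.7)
The plan is to exploit the binary nature of $r$ to decompose the stated residual as
$$\Psi(Y, X, W; \tau, e, (\mu_0, \mu_1)) = r(X)\,\Psi^{(1)} + (1-r(X))\,\Psi^{(0)},$$
where $\Psi^{(1)} \equiv W\,\tau(X) - (W - e(X))(Y - \mu_0(X))/(1-e(X))$ and $\Psi^{(0)} \equiv (1-W)\,\tau(X) - (W - e(X))(Y - \mu_1(X))/e(X)$ are precisely the two one-sided-limited-overlap residuals already analyzed in Theorems \ref{theorem:neyman_one_sided} and \ref{theorem:gdr_one_sided}. Since $r(\cdot)$ is $X$-measurable, conditioning on $X$ commutes with this decomposition, so each of the three properties to be proved can be inherited piece by piece. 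The one-sided overlap hypotheses of those theorems hold on the relevant sub-events, because by construction $\{r(X) = 1\} \subseteq \{e(X) \leq 1 - \underline{p}^*\}$ (where $\Psi^{(1)}$ is valid) and $\{r(X) = 0\} \subseteq \{e(X) \geq \underline{p}^*\}$ (where $\Psi^{(0)}$ is valid).

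I would first verify the SPW principle. Taking $\mathcal{S}(X, W; e) = W\,r(X) + (1-W)(1-r(X))$, direct calculation gives $s(X; e) = \E[\mathcal{S} \mid X] = r(X)\,e(X) + (1-r(X))(1-e(X))$, hence $\tilde{s}(X; e) = s(X; e)/[e(X)(1-e(X))] = r(X)/(1-e(X)) + (1-r(X))/e(X)$, matching the stated formula; the uniform bound $\tilde{s}(X; e) \leq 1/\underline{p}^*$ a.s.~is immediate from the two sub-event inclusions noted above. The augmentation term can be written as $\xi(X, W; e, \eta) = \tilde{s}(X; e)(W - e(X))[r(X)\mu_0(X) + (1-r(X))\mu_1(X)]$, which has $\E[\xi \mid X] = 0$ a.s.~because $\E[W - e(X) \mid X] = 0$ and every other factor is $X$-measurable.

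For GDR, define $E[\tilde{\tau}, \tilde{e}, \tilde{\eta}](X) \equiv \E[\Psi \mid X]$ at generic arguments with $\tilde{\eta} \equiv (\tilde{\mu}_0, \tilde{\mu}_1)$, and decompose $E[\tilde{\tau}, \tilde{e}, \tilde{\eta}] = r(X)\,E_1[\tilde{\tau}, \tilde{e}, \tilde{\mu}_0] + (1-r(X))\,E_0[\tilde{\tau}, \tilde{e}, \tilde{\mu}_1]$, where $E_1, E_0$ are the conditional expectations of $\Psi^{(1)}, \Psi^{(0)}$. From the proof of Theorem \ref{theorem:gdr_one_sided}, $E_1[\tilde{\tau}, e, \tilde{\mu}_0] \equiv E_1[\tilde{\tau}, \tilde{e}, \mu_0] \equiv e(X)(\tilde{\tau}(X) - \tau(X))$ and $E_0[\tilde{\tau}, e, \tilde{\mu}_1] \equiv E_0[\tilde{\tau}, \tilde{e}, \mu_1] \equiv (1-e(X))(\tilde{\tau}(X) - \tau(X))$; summing with weights $r(X)$ and $1-r(X)$ yields $s(X; e)(\tilde{\tau}(X) - \tau(X))$ in all three GDR configurations, which vanishes a.s.~iff $\tilde{\tau} \equiv \tau$ since $s(X; e) > 0$ a.s. Neyman orthogonality follows from the identical $r$-weighted decomposition of the G\^{a}teaux derivative, since by Theorem \ref{theorem:neyman_one_sided} each piece vanishes at $t = 0$. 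The only real bookkeeping point—and the place I would check most carefully—is the ``middle region'' $\{\underline{p}^* \leq e(X) \leq 1 - \underline{p}^*\}$ where the assignment of $r$ is arbitrary, but since $\underline{p}^* \in (0, 0.5)$ keeps both $e(X)$ and $1 - e(X)$ bounded away from zero there, both $\Psi^{(1)}$ and $\Psi^{(0)}$ remain well-defined and the inherited properties go through regardless of how $r$ is chosen, so no genuine obstacle arises.
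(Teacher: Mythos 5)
Your proposal is correct and follows essentially the same route as the paper, whose proof simply partitions $\mathbb{X}$ by the binary function $r(\cdot)$ and applies Theorem \ref{theorem:gdr_one_sided} (and implicitly Theorem \ref{theorem:neyman_one_sided}) on each piece. Your write-up merely fills in the details the paper leaves implicit—the explicit decomposition $\Psi = r(X)\,\Psi^{(1)} + (1-r(X))\,\Psi^{(0)}$, the verification of $\tilde{s} \leq 1/\underline{p}^*$ via the sub-event inclusions, and the handling of the middle region—all of which check out.
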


\begin{proof}
    The covariate space $\mathbb{X}$ can partitioned based on the binary function $r(\cdot)$. Then, Theorem \ref{theorem:gdr_one_sided} can be applied separately on the bifurcated covariate spaces to prove the above result.
\end{proof}

\subsection{Augmented Stable Probability Weighting for Multivalued Treatments}
\label{subsection:spw_multivalued}

In this section, I show that the SPW principle readily extends to the case where general conditional structural parameters (defined using conditional outcome means or quantiles or distribution functions) are of interest in settings with multivalued treatments. However, to keep the discussion concise, the version of (Augmented) Stable Probability Weighting that I formulate for this purpose is less broad than the version developed previously for CATE.

Recall that the Conditional Average Response (CAR) and the Conditional Response Distribution (CRD) functions are given by $\mu(w, x) = \mathbb{E}[Y^*_w \mid X = x]$ and $\varrho(u; w, x) = \mathbb{P}\{Y^*_w \leq u \mid X = x\}$, respectively, for all $(u, w, x) \in \R \times \mathbb{W} \times \mathbb{X}$. The Conditional Average Contrast (CAC) function is given by $\textstyle \theta[\widetilde{\mathbb{W}}, \kappa] = \sum_{w \,\in\, \widetilde{\mathbb{W}}} \,\,\kappa_w\,\mu_w$ when $\widetilde{\mathbb{W}}$ is a finite subset of $\mathbb{W}$ and $\kappa \equiv (\kappa_w)_{w\,\in\,\mathbb{W}}$ is a $|\widetilde{\mathbb{W}}|$-dimensional vector of real constants. In addition, let $q_v(w,x) = \mathrm{inf}\{u \in \R: \varrho(u; w, x) \geq v\}$ be the Conditional Quantile Response (CQR) function for all $v \in (0, 1)$ and $(w, x) \in \mathbb{W} \times \mathbb{X}$. In the below discussion, $\widetilde{K}(W - w)$ denotes either $\mathbb{I}\{W - w = 0\}$ or $\mathrm{lim}_{\,h\,\downarrow\,0} \,K_h(W-w)$ for all $w \in \mathbb{W}$, depending on whether $\mathbb{W}$ is a discrete set or a continuum, respectively.

\begin{theorem}[ASPW for CAC of Multivalued Treatments].
    \label{theorem:aspw_cac}
     Let $\widetilde{\mathcal{S}}(\cdot; \varphi)$ be a positive bounded real-valued function on $\mathbb{X}$ that may depend on a bounded nuisance function $\varphi: \mathbb{W} \times \mathbb{X} \to \R_{> 0}$ such that $\widetilde{\mathcal{S}}(X; \varphi) \leq M\prod_{w \,\in\, \widetilde{\mathbb{W}}} \,\varphi(w, X)$ a.s.~for some $M > 0$. For $\theta \equiv \theta[\widetilde{\mathbb{W}}, \kappa]$, let $$\Psi(Y, X, W; \theta, \varphi, \gamma) = \widetilde{\mathcal{S}}(X; \varphi)\Bigg\{\sum_{w \,\in\, \widetilde{\mathbb{W}}} \,\,\kappa_w\,\Bigg[\gamma(w,X) + \frac{\widetilde{K}(W - w)}{\varphi(w,X)}\,[Y - \gamma(w,X)]\Bigg] - \theta[\widetilde{\mathbb{W}}, \kappa](X)\Bigg\}$$ be the ASPW generalized residual for CAC such that $\gamma: \mathbb{W} \times \mathbb{X} \to \R$ is a bounded real-valued nuisance parameter. Then, $\E[\Psi(Y, X, W; \theta, \varphi, \gamma) \mid X] = 0$ and $\mathbb{V}[\Psi(Y, X, W; \theta, \varphi, \gamma) \mid X] < \infty$ a.s.~under Assumptions \ref{assumption:unconfoundedness}--\ref{assumption:bounded_covariates} if $\varphi \equiv p$ or $\gamma \equiv \mu$.
\end{theorem}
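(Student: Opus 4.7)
The plan is to exploit the measurability of $\widetilde{\mathcal{S}}(X;\varphi)$ with respect to $X$, which lets me factor it out of $\E[\,\cdot \mid X]$ and reduces the first claim to showing that the bracketed pseudo-outcome has conditional expectation equal to $\theta[\widetilde{\mathbb{W}},\kappa](X)$ almost surely whenever $\varphi \equiv p$ or $\gamma \equiv \mu$. This is essentially the standard double-robustness identity underlying Definition \ref{definition:aipw} (AIPW), now carried through in the conditional-on-$X$ rather than unconditional sense, and then dressed with the stabilizer $\widetilde{\mathcal{S}}$.

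For the central calculation, I would fix $w \in \widetilde{\mathbb{W}}$ and compute
$$\E\!\left[\gamma(w,X) + \frac{\widetilde{K}(W - w)}{\varphi(w,X)}\,[Y - \gamma(w,X)] \,\bigg\lvert\, X\right] = \gamma(w,X) + \frac{\E[\widetilde{K}(W-w)\,Y \mid X] - \gamma(w,X)\,\E[\widetilde{K}(W-w) \mid X]}{\varphi(w,X)}.$$
Using the propensity-score identity $\E[\widetilde{K}(W-w) \mid X] = p(w,X)$ from Definition \ref{definition:ps} together with Assumptions \ref{assumption:unconfoundedness}--\ref{assumption:sutva}, I have $\E[\widetilde{K}(W-w)\,Y \mid X] = \E[\widetilde{K}(W-w)\,Y^*_w \mid X] = p(w,X)\,\mu(w,X)$, so the bracketed term collapses to $\gamma(w,X) + (p(w,X)/\varphi(w,X))\,[\mu(w,X) - \gamma(w,X)]$. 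This equals $\mu(w,X)$ when either $\varphi \equiv p$ (the ratio is one) or $\gamma \equiv \mu$ (the correction vanishes), and summing across $w$ with weights $\kappa_w$ recovers $\theta[\widetilde{\mathbb{W}},\kappa](X)$, proving $\E[\Psi \mid X] = 0$ almost surely.

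For the conditional variance, I would invoke the key stability constraint $\widetilde{\mathcal{S}}(X;\varphi) \leq M \prod_{w \in \widetilde{\mathbb{W}}} \varphi(w,X)$ almost surely: multiplying $\widetilde{\mathcal{S}}(X;\varphi)/\varphi(w,X)$ leaves a factor bounded above by $M \prod_{w' \neq w} \varphi(w', X)$, which is bounded since $\varphi$ is bounded. Combined with boundedness of $Y$ (Assumption \ref{assumption:bounded_outcomes}), of $\gamma$, and of $\theta$ (inherited from Assumption \ref{assumption:bounded_outcomes} when $\gamma \equiv \mu$, or assumed directly otherwise), the entire residual is bounded almost surely conditional on $X$, giving $\mathbb{V}[\Psi \mid X] < \infty$ a.s.~immediately. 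The main obstacle is really the continuous-$\mathbb{W}$ case, where $\widetilde{K}(W - w) = \lim_{h \downarrow 0} K_h(W - w)$ is a Dirac-delta-like limit: carefully justifying the interchange of the limit with the conditional expectation so that $\E[\widetilde{K}(W-w)\,Y \mid X] = p(w,X)\,\mu(w,X)$ requires standard regularity conditions on $K$ and on the conditional density $p(\cdot,x)$, as already alluded to in the discussion following Definition \ref{definition:ps}; in the discrete case this subtlety is absent and the argument is entirely elementary.
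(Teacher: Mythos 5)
Your proposal is correct and follows essentially the same route as the paper: the paper's proof simply cites the double-robustness computation already displayed in Definition \ref{definition:aipw} (which is exactly the conditional-on-$X$ calculation you spell out, giving $\gamma(w,X) + \frac{p(w,X)}{\varphi(w,X)}[\mu(w,X)-\gamma(w,X)] = \mu(w,X)$ when $\varphi \equiv p$ or $\gamma \equiv \mu$) and then bounds the conditional variance via $\widetilde{\mathcal{S}}(X;\varphi)/\prod_{w\in\widetilde{\mathbb{W}}}\varphi(w,X) \leq M$ together with the other boundedness assumptions, just as you do. Your extra remarks on factoring out $\widetilde{\mathcal{S}}$ and on the continuous-$\mathbb{W}$ Dirac-delta subtlety are sensible elaborations but do not change the argument.
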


\begin{proof}
    The zero conditional expectation of the ASPW residual follows from the double robustness property described in Definition \ref{definition:aipw}. The bounded conditional variance follows from the fact that $\widetilde{\mathcal{S}}(X; \varphi)\,/\,\big(\prod_{w \, \in\, \widetilde{\mathbb{W}}} \,\varphi(w, X)\big) \leq M$ a.s.~(in addition to the other boundedness assumptions).
\end{proof}

\begin{example}
    Theorem \ref{theorem:aspw_cac} can be satisfied by setting $\widetilde{\mathcal{S}}(X; \varphi) = \prod_{w \in \widetilde{\mathbb{W}}} \,\varphi(w, X)$ so that $M = 1$.
\end{example}

\begin{remark}[ASPW for CAR or CRD of Multivalued Treatments]
    Note that setting $\varphi \equiv p$ and $\gamma \equiv 0$ in the above residual gives the non-augmented SPW residual for CAC. When CAR is of interest, $\kappa$ in the parameter $\theta[\widetilde{\mathbb{W}}, \kappa]$ should be set to the appropriate standard unit vector. When the CRD (or a distributional treatment effect parameter) is of interest, the corresponding parameter can be expressed as $\sum_{w \,\in\, \widetilde{\mathbb{W}}} \,\,\kappa_w\,\varrho(u; w, \cdot)$ for any $u \in \R$, and so $Y$ and $\mu$ in the above theorem should be replaced by $\mathbb{I}\{Y \leq u\}$ and $\varrho(u; \cdot, \cdot)$, respectively.
\end{remark}

\begin{theorem}[ASPW for CQR of Multivalued Treatments] Suppose Assumptions \ref{assumption:unconfoundedness}--\ref{assumption:bounded_covariates} hold. Let $\Psi_{v, w}(Y, X, W; q, \varphi, \gamma)$, for some chosen values $(v, w) \in (0, 1) \times \mathbb{W}$, be the generalized residual $$\tilde{K}(W - w)[\mathbb{I}\{Y \leq q_v(w, X)\} - \gamma(q_v(w, X); w, X)] + \varphi(w, X)[\gamma(q_v(w, X); w, X) - v]$$ such that $\gamma: \R \times \mathbb{W} \times \mathbb{X} \to [0, 1]$ and $\varphi: \mathbb{W} \times \mathbb{X} \to \R_{> 0}$ are bounded nuisance functions. Then, $\E[\Psi_{v, w}(Y, X, W; q, p, \varrho) \mid X] = 0$ and $\mathbb{V}[\Psi_{v, w}(Y, X, W; q, p, \varrho) \mid X] < \infty$ a.s.~if $\varphi \equiv p$ or $\gamma \equiv \varrho$.  
\end{theorem}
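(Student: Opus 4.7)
The approach mirrors the proof of Theorem 3.8: the residual has the familiar AIPW-style ``doubly robust'' structure (cf.\ Definition 3.2), but applied to the transformed potential outcome $\mathbb{I}\{Y^*_w \leq q_v(w, X)\}$, whose conditional mean is the CRD $\varrho(q_v(w, X); w, X)$. The whole point is that at the true CQR $q_v$, this conditional mean equals the scalar $v$ (assuming $\varrho(\cdot; w, x)$ is continuous at its $v$-th quantile $q_v(w, x)$, the standard regularity condition that makes the infimum in Definition 2.3 attain $\varrho = v$). The plan is to exploit this to collapse $\E[\Psi_{v,w} \mid X]$ into a product that vanishes when either nuisance is correct.

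First I would record the two conditional-moment identities that drive everything. By Definition 2.1 and its Remark, $\E[\tilde{K}(W - w) \mid X = x] = p(w, x)$. By Assumptions 2.1--2.2 (unconfoundedness and SUTVA), replacing $Y$ by $Y^*_W$ and conditioning on $\{W = w, X = x\}$ (via the same computation used for CAR identification in Section 2) yields
\[
\E[\tilde{K}(W - w)\,\mathbb{I}\{Y \leq q_v(w, x)\} \mid X = x] = p(w, x)\,\varrho(q_v(w, x); w, x) = p(w, x)\,v,
\]
the last equality by CDF continuity at the $v$-th quantile. Since $q_v(w, X)$, $\gamma(q_v(w, X); w, X)$, and $\varphi(w, X)$ are all $X$-measurable, a straightforward rearrangement gives the crucial identity
\[
\E[\Psi_{v,w}(Y, X, W; q, \varphi, \gamma) \mid X = x] = \big[\varphi(w, x) - p(w, x)\big]\,\big[\gamma(q_v(w, x); w, x) - v\big] \quad \text{a.s.}
\]

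The double-robustness conclusion is then immediate from this product form. If $\varphi \equiv p$, the first factor vanishes; if $\gamma \equiv \varrho$, then $\gamma(q_v(w, x); w, x) = \varrho(q_v(w, x); w, x) = v$ so the second factor vanishes. Either way the conditional mean is zero almost surely. For the bounded conditional variance, I would observe that $\gamma$ takes values in $[0, 1]$ by assumption, $\varphi$ is bounded, $\mathbb{I}\{\cdot\} \in \{0,1\}$, and in the discrete-$\mathbb{W}$ case $\tilde{K}(W - w) \in \{0,1\}$ as well, so the residual is bounded and $\mathbb{V}[\Psi_{v,w} \mid X] = \E[\Psi_{v,w}^2 \mid X] < \infty$ a.s.

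The main obstacle is the interpretation of $\tilde{K}(W - w)$ when $\mathbb{W}$ is a continuum, since then $\tilde{K}(W - w) = \lim_{h \downarrow 0} K_h(W - w)$ is a Dirac-delta-like object rather than a genuinely bounded random variable. This is the same conceptual issue that already arises in Definition 3.1 and throughout Section 3, and I would resolve it exactly as the paper does there: interpret the conditional-moment identities in the limiting/kernel-smoothing sense justified by the identification calculation following Assumption 2.3, which shows $\lim_{h \downarrow 0} \E[K_h(W - w)\,\mathbb{I}\{Y \leq u\} \mid X = x] = p(w, x)\,\varrho(u; w, x)$. A second, minor issue is the CDF-continuity assumption at the quantile; this is standard in the quantile-treatment-effect literature and should be stated as a maintained regularity condition for the continuous-outcome case (it is automatic when $Y^*_w \mid X$ is continuously distributed).
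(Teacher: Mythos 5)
Your proof is correct and takes essentially the same route as the paper's: compute $\E[\Psi_{v,w}\mid X]$, which factors as $[\,p(w,X)-\varphi(w,X)\,]\,[\,\varrho(q_v(w,X);w,X)-\gamma(q_v(w,X);w,X)\,]$, so that either correctly specified nuisance kills one factor, with the boundedness of $\tilde{K}$, $\gamma$, $\varphi$ giving the finite conditional variance. The only difference is that you state explicitly the regularity condition $\varrho(q_v(w,X);w,X)=v$ (continuity of the conditional CDF at the quantile), which the paper uses implicitly when it writes $\varrho(q_v(w,X);w,X)$ in place of $v$ in the second term --- a point in your favor, since without it neither the $\varphi\equiv p$ nor the $\gamma\equiv\varrho$ case yields a zero conditional mean.
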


\begin{proof}
    Note that $\E[\Psi_{v, w}(Y, X, W; q, \varphi, \gamma) \mid X] = p(w, X)[\varrho(q_v(w,X); w, X) - \gamma(q_v(w, X); w, X)] + \varphi(w, X)[\gamma(q_v(w, X); w, X) - \varrho(q_v(w,X); w, X)]$. Therefore, this equals zero almost surely if $\varphi \equiv p$ or $\gamma \equiv \varrho$. In addition, the boundedness assumptions imply bounded conditional variance.
\end{proof}

\begin{remark}[ASPW for Conditional Quantile Contrasts]
    Note that setting $\varphi \equiv p$ and $\gamma \equiv 0$ in the above residual gives the non-augmented SPW residual for CQR. If the Conditional Quantile Contrast (CQC) defined as $\sum_{w \,\in\, \widetilde{\mathbb{W}}} \,\,\kappa_w\,q_v(w, \cdot)$ is of interest for a given $v \in (0, 1)$, then it can be identified using the vector-valued residual $\Psi_v(Y, X, W; q, \varphi, \gamma) \equiv (\Psi_{v, w}(Y, X, W; q, \varphi, \gamma))_{w \in \widetilde{\mathbb{W}}}$.
\end{remark}

\subsection{Large-Sample Statistical Methods Using Stable Probability Weighting}
\label{subsection:large_sample_spw}

In this subsection, I discuss how the (A)SPW principle can be empirically implemented using several existing methods for conditional moment models. Thanks to this rich literature, this paper does not attempt to reinvent the wheel for large-sample estimation and inference.\footnote{Of course, future research can try to improve methods for efficient identification/singularity-robust semiparametric or nonparametric estimation and inference using infinite conditional moment equalities based on exogeneity.}

Let $\rho(Y, X, W; g)$, where $g$ includes the main parameters of interest and any nuisance parameters, be a (finite-dimensional) vector containing the (A)SPW-based generalized residuals for the parameters of interest as well as generalized residuals that identify the nuisance functions such that $\E[\rho(Y, X, W; g) \mid X] = 0$ almost surely. For example, when $\mathbb{W} = \{0, 1\}$ and CATE is of interest, the required conditional moment restriction is satisfied by setting $g = (\tau, e, \mu_0, \mu_1)$ and $\rho(Y, X, W; g) = (W - e(X),\,(1-W)(Y - \mu_0(X)),\,W(Y - \mu_1(X)),\,\Psi(Y, X, W; \tau, e, (\mu_0, \mu_1)))$, where $\Psi(Y, X, W; \tau, e, (\mu_0, \mu_1))$ is an (A)SPW generalized residual for CATE.\footnote{Sometimes a few nuisance functions, such as the $r(\cdot)$ functions in Theorems \ref{theorem:aspw_gdr} and \ref{theorem:hybrid_gdr} need to be excluded from $g$. When $r(\cdot)$ is unknown in such cases, data-driven assumptions on them can be made using auxiliary sample that is not used for the main analysis. However, the main analysis will then be conditional on the auxiliary sample.} BATE and PATE satisfy $\E[\tau(X) - Z'\beta \mid Z] = 0$ (a.s.) and $\E[\tau(X) - \tau^*] = 0$. \cite{chen2014local} provide a sufficient condition for (local) identification of $g$.\footnote{For example, invertibility of $\E[(e_i(1-e_i))^{\nu + 1}Z_iZ'_i]$, i.e., no perfect collinearity in $\sqrt{(e_i(1-e_i))^{\nu + 1}}Z_i$, where $\nu \geq 0$, is crucial for NPW-based identification of $\beta$ in Section \ref{section:alternatives} when CATE has linearity and $e(\cdot)$ is known.} Typically, further restrictions on $g$ (or further assumptions on the setting) are needed to satisfy the sufficient identification condition of \cite{chen2014local}. I cannot list them out with high specificity because they depend on the empirical context.

Nonparametric estimation of $g$ is feasible when $\E[\rho(Y, X, W; g) \mid X] = 0$ (a.s.) and other regularity conditions hold. One such important condition is that $\E[\lVert\rho(Y, X, W; g)\rVert^2 \mid X]$ is bounded, as required by \cite{ai2003efficient} and \cite{newey2003instrumental} for smooth generalized residuals $\rho(\cdot)$. For example, when $g = (\tau, e, \mu_0, \mu_1)$, boundedness of $\E[\lVert\rho(Y, X, W; g)\rVert^2 \mid X]$ under Assumptions \ref{assumption:unconfoundedness}--\ref{assumption:bounded_covariates} crucially depends on the boundedness of $\E[\Psi(Y, X, W; \tau, e, (\mu_0, \mu_1))^2 \mid X]$, which holds when the (A)SPW principle is used. The sieve minimum distance (SMD) estimation procedure of \cite{ai2003efficient} or other alternatives can be implemented for consistent estimation of $\tau$ and to obtain $\sqrt{n}$-consistent and asymptotically normal estimators of $(\tau^*, \beta)$. When the main parameters of interest are finite-dimensional, two-step semiparametric procedures may be used.\footnote{See, e.g., \cite{bravo2020two,chernozhukov2022locally,cattaneo2010efficient,chenliao2015sieve,ai2012semiparametric,ackerberg2012practical,ackerberg2014asymptotic,otsu2007penalized,hahn2018nonparametric,andrews1994asymptotics,newey1994asymptotic,chernozhukov2018double}.} When all the parameters are finite-dimensional, there exist simpler methods.\footnote{See, e.g., \cite{dominguez2004consistent,kitamura2004empirical,antoine2014conditional,andrews2013inference,jun2012testing,jun2009semiparametric,andrews2017inference}. If unconditional moment equalities identify the parameters, one can also use the methods of \cite{andrews2019identification,andrews2012estimation,andrews2016conditional}.}

In the general case where $\rho(\cdot)$ is possibly nonsmooth, penalized SMD estimators \citep{chen2012estimation} can be used, and sieve-based tests \citep{chen2015sieve,hong2017inference} help with inference on functionals of $g$. \cite{chen2016methods} provide a useful survey of other available methods. \cite{chernozhukov2022constrained} propose valid procedures when constraints, such as shape restrictions, are imposed on $g$. When $\tau$ (or the main structural parameter) is weakly identified but a strongly identified functional of $\tau$, such as $(\tau^*, \beta)$, is of interest, the penalized minimax (adversarial) approach of \cite{bennett2022inference} can be used for inference. Some recent high-dimensional methods \citep{dong2021high,chang2018new,chang2021high,nekipelov022regularized} may also be used.\footnote{High-dimensional statistical methods can be used by turning the conditional moment restrictions into unconditional moment equalities using high-dimensional instruments and parameters. For example, when PATE, BATE, and CATE are of interest so that $g = (\tau^*, \beta, \tau, e, \mu_0, \mu_1)$, high-dimensional logit and linear regression models may be specified for $e$ and $(\mu_0, \mu_1, \tau)$, respectively, and the corresponding high-dimensional regressors may be used as instruments. Then, under some restrictions on the high-dimensionality of the model (e.g., parameter sparsity), estimation and inference are possible using (penalized) sieve generalized method of moments \citep{dong2021high} or using high-dimensional penalized empirical likelihood methods \citep{chang2018new,chang2021high}. Regularized machine learning methods \citep{nekipelov022regularized} may also be used based on (A)SPW generalized residuals.} ASPW-based nonparametric estimators of CATE may be used for policy learning using, e.g., \citeauthor{kitagawa2018should}'s (\citeyear{kitagawa2018should}) hybrid Empirical Welfare Maximization (EWM) approach.

When $g$ only has infinite-dimensional components, e.g., $g = (\tau, e, \mu_0, \mu_1)$, and the main parameter evaluated at some $x \in \mathbb{X}$, e.g., $\tau(x)$, is of interest, then local estimation and inference methods are available. Kernel and local polynomial methods can be used for estimation.\footnote{See, e.g., \cite{lewbel2007local,carroll1998local,han2020identification,zhang2003local}.} In this context of localized moment restrictions, \cite{andrews2014nonparametric} and \cite{xu2020inference} provide useful inference methods that are robust to identification strength. Machine learning-based methods, such as orthogonal random forests \citep{oprescu2019orthogonal}, may also be used based on ASPW residuals.

\clearpage

\section{Finite-Sample Stable Probability Weighting and Inference}
\label{section:fpw}

In this section, I develop finite-sample estimation and inference methods for heterogeneous average treatment effects of finite multivalued treatments $\mathbb{W}$. The setup for finite-sample results uses a ``fixed design'' that treats the covariates as given.\footnote{This type of conditioning on covariates is quite common in finite-sample statistical theory. For the results on design-based inference, I also condition on the (realized or latent) potential outcomes, following common practice.} I assume that the dataset $\mathcal{D}_n \equiv (Y_i, X_i, W_i)_{i = 1}^n$ has independent observations and follows the model specified in Assumptions \ref{assumption:high_dim_strata}--\ref{assumption:stratified_car} below.

\begin{assumption}[High-Dimensional Strata]
    \label{assumption:high_dim_strata}
   The covariate set $\mathbb{X}_n = \bigcup_{i = 1}^n \{X_i\}$, which is assumed to be $\{1, \cdots, K_n\}$ without loss of generality, is such that $\sum_{i = 1}^n \mathbb{I}\{X_i = k\} \geq 2$ for all $k \in \mathbb{X}_n$.
\end{assumption}

\begin{assumption}[Unconfoundedness, SUTVA, and Overlap]
    \label{assumption:fs_unconfoundedness}
    For all $i \in \{1, \dots, n\}$, let $Y^*_{w,i}$ be the potential outcome under treatment $w \in \mathbb{W}$. Then, for all $i \in \{1, \dots, n\}$, $Y_i = Y^*_{W_i,i}$ almost surely, and $\mathbb{P}\{W_i = w \mid X_i, (Y^*_{w,i})_{w \in \mathbb{W}}\} = \mathbb{P}\{W_i = w \mid X_i\} \equiv \lambda_{w,X_i} \in (0, 1)$ such that $\sum_{w\in\mathbb{W}} \lambda_{w,X_i} = 1$.
\end{assumption}

\begin{assumption}[Stratified and Bounded Conditional Average Responses]
    \label{assumption:stratified_car}
    For all $w \in \mathbb{W}$ and $i \in \{1, \dots, n\}$, $\E[Y^*_{w,i} \mid X_i] = \mu_w(\tilde{Z}(X_i))$, where $\tilde{Z}$ is a surjective function that maps $\mathbb{X}_n$ to a finite set $\mathbb{Z}_n$, and $\mu_w: \mathbb{Z}_n \to [\underline{C}_w, \overline{C}_w]$ is a function that maps $\mathbb{Z}_n$ to a known interval $[\underline{C}_w, \overline{C}_w] \subset \R$.
\end{assumption}

The above assumptions essentially specify a (potentially) high-dimensional model (conditional on covariates) with fine strata $\mathbb{X}_n$ (and $\mathbb{Z}_n$) that could depend on the sample size.\footnote{Note that the strata (i.e., the covariate set $\mathbb{X}_n$) in Assumption \ref{assumption:high_dim_strata} is different from the set $\mathbb{X}$ used for large-sample theory. Since the finite-sample analysis is conditional on covariates, the researcher may use unsupervised feature learning to form the strata $\mathbb{X}_n$ based on a richer set of background variables. For example, if a scalar background variable is available, it can be partitioned (say, based on quantiles) into disjoint bins, which then are the strata.} Henceforth, I assume without loss of generality that $|\mathbb{Z}_n | = 1$ so that $\E[Y^*_{w,i} \mid X_i] = \mu_w$ for some real number $\mu_w \in [\underline{C}_w, \overline{C}_w]$ for all $w\in \mathbb{W}$ and $i \in \{1, \dots, n\}$.\footnote{Assumption \ref{assumption:fs_unconfoundedness} uses a partitioning-based high-dimensional model (similar to a a zero-degree piecewise polynomial or spline) for the propensity score with potentially $(|\mathbb{W}| - 1) \times K_n$ unknown nuisance parameters. Assumption \ref{assumption:stratified_car} uses (potentially) coarser strata for the Conditional Average Response (CAR) function but still could be high-dimensional. If we are interested in estimating CAR at a particular value $z \in \mathbb{Z}_n$, then only the units $i$ with $X_i \in \tilde{Z}^{-1}(z)$ need to be used, and those units would then comprise the dataset for finite-sample analysis. Since the dataset $\mathcal{D}_n$ could be redefined to contain only those units, we may assume without loss of generality that $\tilde{Z} \equiv 1$ so that $|\mathbb{Z}_n | = 1$.} I am interested in ``unbiased'' estimation of the Conditional Average Contrast (CAC) parameter $\theta = \sum_{w \in \mathbb{W}} \kappa_w\,\mu_w$ for a given $(\kappa_w)_{w \in \mathbb{W}} \in \R^{|\mathbb{W}|}$.

When $K_n = 1$, an unbiased estimator of $\mu_w$ ($w \in \mathbb{W}$) is $\frac{1}{n}\sum_{i = 1}^n \mathbb{I}\{W_i = w\}\,Y_i/\lambda_{w,1}$, but this is not computable because $\lambda_{w,1}$ is unknown. A conventionally used alternative estimator of $\mu_w$ is the subsample mean\footnote{I do not call the subsample mean the IPW estimator for reasons that will become clear later.} $\big(\sum_{i = 1}^n \mathbb{I}\{W_i = w\}\,Y_i\big)/\big(\sum_{i = 1}^n \mathbb{I}\{W_i = w\}\big) = \frac{1}{n}\sum_{i = 1}^n \mathbb{I}\{W_i = w\}\,Y_i/\hat{\lambda}_{w,1}$, where $\hat{\lambda}_{w,1} = \frac{1}{n}\sum_{i = 1}^n \mathbb{I}\{W_i = w\}$. However, the expectation of this subsample mean does not exist because the denominator $\hat{\lambda}_{w,1}$ can be zero with positive probability, especially under limited overlap. Since unbiased point-estimation may not be generally feasible in this setting, I focus on set-estimators, for which I define a generalized notion of ``unbiasedness'' as follows.

\begin{definition}[Unbiased Set-Estimator]
    A set-estimator $[\widehat{\theta}_l, \widehat{\theta}_u] \subset \R$ of a parameter $\theta \in \R$, where $\widehat{\theta}_l \leq \widehat{\theta}_u$ (or $\widehat{\theta}_l = \widehat{\theta}_u$ for point estimators) a.s., is said to be unbiased for $\theta$ if $\E[\widehat{\theta}_l] \leq \theta \leq \E[\widehat{\theta}_u]$.
\end{definition}

To inform unbiased set-estimation, it is useful to compute the finite-sample bias of a (biased) estimator whose expectation does exist (unlike the subsample mean mentioned before). For this purpose, some sensible estimator (with a finite expectation) of the reciprocal of the propensity score is needed. However, in general, it is impossible to construct an unbiased estimator of the reciprocal of a Bernoulli parameter \citep{voinov1993unbiased}. Nevertheless, a (biased) shrinkage estimator \citep{fattorini2006applying} is sometimes used for practical purposes. I use a leave-one-out version of it below.

\begin{theorem}[Bias From Using a Leave-One-Out Shrinkage Estimator of Inverse Probability]
    \label{theorem:biased_estimator}
    Let $\mathcal{X}_k = \{i: X_i = k\}$ and $N_k = |\mathcal{X}_k|$ for all $k \in \mathbb{X}_n$. Let $\widehat{R}_{w,i} = N_{X_i}/(1 + \sum_{j \neq i\,|\, X_j = X_i} \mathbb{I}\{W_j = w\})$ for all $w\in \mathbb{W}$ and $i \in \{1,\dots,n\}$. Then, under Assumptions \ref{assumption:high_dim_strata}--\ref{assumption:stratified_car}, for all $w \in \mathbb{W}$ and $k \in \mathbb{X}_n$, the bias of the estimator $\widetilde{\mu}_{w,k} = \frac{1}{N_k}\sum_{i \in \mathcal{X}_k} \widehat{R}_{w,i}\,\mathbb{I}\{W_i = w\}\,Y_i$ is $\E[\widetilde{\mu}_{w,k}] - \mu_w = -(1 - \lambda_{w,k})^{N_k}\,\mu_w$.
\end{theorem}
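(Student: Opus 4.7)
The plan is to substitute $\widehat{R}_{w,i}$ into $\widetilde{\mu}_{w,k}$, reduce the estimator to a sum of stratum-local terms whose leave-one-out structure decouples the weight from the treatment indicator and outcome of unit $i$, and then evaluate a standard binomial-reciprocal identity. Writing $S_{-i} \equiv \sum_{j \in \mathcal{X}_k \setminus \{i\}} \mathbb{I}\{W_j = w\}$, the construction of $\widehat{R}_{w,i}$ yields
\[
\widetilde{\mu}_{w,k} \;=\; \frac{1}{N_k}\sum_{i \in \mathcal{X}_k} \frac{N_k\,\mathbb{I}\{W_i = w\}\,Y_i}{1 + S_{-i}} \;=\; \sum_{i \in \mathcal{X}_k} \frac{\mathbb{I}\{W_i = w\}\,Y_i}{1 + S_{-i}}.
\]

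Then I would take expectations term by term. Since the observations are independent and $Y_i = Y^*_{W_i,i}$ by SUTVA, the pair $(\mathbb{I}\{W_i = w\}, Y_i)$ is independent of $S_{-i}$, which depends only on the treatments of the other units in the stratum. Hence the expectation of a single summand factors as $\E[\mathbb{I}\{W_i = w\}\,Y_i]\cdot \E[1/(1 + S_{-i})]$. By SUTVA, unconfoundedness (Assumption \ref{assumption:fs_unconfoundedness}), and Assumption \ref{assumption:stratified_car}, the first factor equals $\E[\mathbb{I}\{W_i = w\}\,Y^*_{w,i}] = \lambda_{w,k}\,\mu_w$.

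The main technical step is evaluating $\E[1/(1 + S_{-i})]$ with $S_{-i} \sim \mathrm{Binomial}(N_k - 1, \lambda_{w,k})$. Using the identity $\binom{N_k - 1}{m}/(m+1) = \binom{N_k}{m+1}/N_k$ and reindexing $l = m+1$, the series telescopes to a truncated binomial sum:
\[
\E\!\left[\frac{1}{1 + S_{-i}}\right] \;=\; \frac{1}{N_k\,\lambda_{w,k}} \sum_{l = 1}^{N_k} \binom{N_k}{l} \lambda_{w,k}^{\,l} (1 - \lambda_{w,k})^{N_k - l} \;=\; \frac{1 - (1 - \lambda_{w,k})^{N_k}}{N_k\,\lambda_{w,k}}.
\]
Multiplying this by $\lambda_{w,k}\,\mu_w$ and summing the $N_k$ identical contributions over $i \in \mathcal{X}_k$ gives $\E[\widetilde{\mu}_{w,k}] = \mu_w\big[1 - (1-\lambda_{w,k})^{N_k}\big]$, hence the stated bias. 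The only non-routine piece is the binomial-reciprocal identity above; everything else is bookkeeping enabled by the leave-one-out design of $\widehat{R}_{w,i}$, which is precisely what makes the weight and the $i$-th summand independent (and what ensures the expectation exists, since $1 + S_{-i} \geq 1$ almost surely even when $\sum_{j \in \mathcal{X}_k} \mathbb{I}\{W_j = w\} = 0$).
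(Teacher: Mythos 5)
Your proof is correct and follows essentially the same route as the paper: condition within the stratum, use the leave-one-out structure to factor $\E[\widehat{R}_{w,i}\,\mathbb{I}\{W_i=w\}\,Y_i]$ into $\lambda_{w,k}\,\mu_w$ times the expected reciprocal of $1+S_{-i}$, and evaluate that reciprocal moment of the $\mathrm{Binomial}(N_k-1,\lambda_{w,k})$ variable. The only difference is that you derive the binomial-reciprocal identity explicitly, whereas the paper cites the calculation of Chao (1972) for $\E[\widehat{R}_{w,i}\mid X_i]=[1-(1-\lambda_{w,k})^{N_k}]/\lambda_{w,k}$.
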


\begin{proof}
    Note that $\E[\widehat{R}_{w,i}\,\mathbb{I}\{W_i = w\}\,Y_i] = \E[\E[\widehat{R}_{w,i}\,\mathbb{I}\{W_i = w\}\,Y_i \mid X_i]] = \E[\widehat{R}_{w,i} \mid X_i]\,\lambda_{w,k}\,\mu_w$ under the assumptions (and since $\widehat{R}_{w,i} \indep (Y_i, W_i) \mid X_i$) for all $i \in \mathcal{X}_k$ and any $(w, k) \in \mathbb{W} \times \mathbb{X}_n$. Since $\E[\widehat{R}_{w,i} \mid X_i] = \big[1-(1-\lambda_{w,k})^{N_k}\big]/\lambda_{w,k}$ by the calculations of \cite{chao1972negative}, $\E[\widehat{R}_{w,i}\,\mathbb{I}\{W_i = w\}\,Y_i] = \mu_w - (1-\lambda_{w,k})^{N_k}\mu_w$, and so the result follows.
\end{proof}

An interesting feature of the estimator $\widetilde{\mu}_{w,k}$ is that it is numerically equivalent to a modified subsample mean, i.e., $\widetilde{\mu}_{w,k} \equiv \big(\sum_{i \in \mathcal{X}_k} \mathbb{I}\{W_i = w\}\,Y_i\big)/\mathrm{max}\{1, \sum_{i \in \mathcal{X}_k} \mathbb{I}\{W_i = w\}\}$, which does have a finite expectation, for all $(w, k) \in \mathbb{W} \times \mathbb{X}_n$. If it is possible to construct an unbiased set-estimator of the bias of $\widetilde{\mu}_{w,k}$, then an unbiased set-estimator of $\mu_w$ can also be constructed for any $w \in \mathbb{W}$. I provide such a construction below but without pooling information across strata.

\begin{theorem}[Unpooled Unbiased Set-Estimators of Conditional Average Responses]
    \label{theorem:unpooled_set_est}
    For all $(w, k) \in \mathbb{W} \times \mathbb{X}_n$, let $\widehat{\mu}_{w,k}: \R \to \R$ be a function given by $\widehat{\mu}_{w,k}(t) = \widetilde{\mu}_{w,k} + t \big(\prod_{i \in \mathcal{X}_k} \mathbb{I}\{W_i \neq w\}\big)$. Then, an unbiased set-estimator of $\mu_w$ is given by $[\widehat{\mu}_{w,k}(\underline{C}_w),\,\widehat{\mu}_{w,k}(\overline{C}_w)]$, which reduces to the point-estimator $\widetilde{\mu}_{w,k}$ if $W_i = w$ for at least one $i \in \mathcal{X}_k$ but otherwise reduces to $[\underline{C}_w, \overline{C}_w]$, for all $(w, k) \in \mathbb{W} \times \mathbb{X}_n$ under Assumptions \ref{assumption:high_dim_strata}--\ref{assumption:stratified_car}.
\end{theorem}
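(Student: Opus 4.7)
The plan is to compute $\E[\widehat{\mu}_{w,k}(t)]$ in closed form as an affine function of $t$ and then choose the endpoints $t \in \{\underline{C}_w, \overline{C}_w\}$ so that the resulting sandwich $\E[\widehat{\mu}_{w,k}(\underline{C}_w)] \leq \mu_w \leq \E[\widehat{\mu}_{w,k}(\overline{C}_w)]$ holds. First, I would invoke Theorem \ref{theorem:biased_estimator} to write $\E[\widetilde{\mu}_{w,k}] = \mu_w - (1-\lambda_{w,k})^{N_k}\mu_w$. Next, I would compute the expectation of the indicator-product correction term: conditional on $X_i = k$ for all $i \in \mathcal{X}_k$, the treatment indicators $(W_i)_{i \in \mathcal{X}_k}$ are independent by Assumption \ref{assumption:fs_unconfoundedness} with $\mathbb{P}\{W_i = w \mid X_i = k\} = \lambda_{w,k}$, so $\E\bigl[\prod_{i \in \mathcal{X}_k} \mathbb{I}\{W_i \neq w\}\bigr] = (1-\lambda_{w,k})^{N_k}$. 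Combining these,
\begin{equation*}
\E[\widehat{\mu}_{w,k}(t)] \;=\; \mu_w \;+\; (t - \mu_w)\,(1-\lambda_{w,k})^{N_k}.
\end{equation*}

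The unbiasedness inequalities then follow immediately from Assumption \ref{assumption:stratified_car}, which guarantees $\underline{C}_w \leq \mu_w \leq \overline{C}_w$, together with $(1-\lambda_{w,k})^{N_k} \in [0,1)$: plugging in $t = \underline{C}_w$ gives $\E[\widehat{\mu}_{w,k}(\underline{C}_w)] - \mu_w = (\underline{C}_w - \mu_w)(1-\lambda_{w,k})^{N_k} \leq 0$, and plugging in $t = \overline{C}_w$ gives the reverse inequality with the opposite sign. Hence the interval $[\widehat{\mu}_{w,k}(\underline{C}_w),\,\widehat{\mu}_{w,k}(\overline{C}_w)]$ brackets $\mu_w$ in expectation, which is the requisite unbiasedness property for set-estimators.

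For the pointwise reduction claim, I would simply read off the two cases of the Bernoulli random variable $\prod_{i \in \mathcal{X}_k} \mathbb{I}\{W_i \neq w\}$. If there exists $i \in \mathcal{X}_k$ with $W_i = w$, the product is zero and $\widehat{\mu}_{w,k}(t) = \widetilde{\mu}_{w,k}$ for every $t \in \R$, so the interval collapses to the single point $\widetilde{\mu}_{w,k}$. Otherwise the product equals one and $\widetilde{\mu}_{w,k} = 0$ (since every summand in its defining sum carries the factor $\mathbb{I}\{W_i = w\} = 0$), whence $\widehat{\mu}_{w,k}(t) = t$ and the set-estimator reduces to the a priori known bound $[\underline{C}_w, \overline{C}_w]$.

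The main obstacle is really just the Bernoulli expectation calculation of the product indicator, and this is straightforward because Assumption \ref{assumption:fs_unconfoundedness} gives conditional independence of $(W_i)_{i \in \mathcal{X}_k}$ given $(X_i)_{i \in \mathcal{X}_k}$ (and Assumption \ref{assumption:high_dim_strata} ensures $N_k \geq 2$, so the power $(1-\lambda_{w,k})^{N_k}$ is well-defined and nondegenerate). The remaining bookkeeping is linear in $t$, so no further delicate arguments are needed; essentially all of the work was already done in Theorem \ref{theorem:biased_estimator}, and the role of the correction term $t\prod_{i \in \mathcal{X}_k}\mathbb{I}\{W_i \neq w\}$ is precisely to offset the bias $-(1-\lambda_{w,k})^{N_k}\mu_w$ by a random quantity whose expectation matches it up to the sign of $t - \mu_w$.
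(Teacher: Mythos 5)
Your proposal is correct and follows essentially the same route as the paper: both rest on the bias formula from Theorem \ref{theorem:biased_estimator} together with $\E\bigl[\prod_{i \in \mathcal{X}_k} \mathbb{I}\{W_i \neq w\}\bigr] = (1-\lambda_{w,k})^{N_k}$, the paper sandwiching $\widehat{\mu}_{w,k}(\mu_w)$ between the endpoint estimators almost surely while you equivalently write $\E[\widehat{\mu}_{w,k}(t)]$ as affine in $t$ and evaluate at $t \in \{\underline{C}_w, \overline{C}_w\}$. Your explicit verification of the reduction claim (the product indicator forcing $\widetilde{\mu}_{w,k} = 0$ when no unit in the stratum receives treatment $w$) is a welcome addition that the paper leaves implicit.
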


\begin{proof}
    Under the assumptions, $\widehat{\mu}_{w,k}(\mu_w) \in [\widehat{\mu}_{w,k}(\underline{C}_w),\,\widehat{\mu}_{w,k}(\overline{C}_w)]$ almost surely, and so it follows that $\E[\widehat{\mu}_{w,k}(\underline{C}_w)] \leq \E[\widehat{\mu}_{w,k}(\mu_w)] \leq \E[\widehat{\mu}_{w,k}(\overline{C}_w)]$. Note that $\E[\prod_{i \in \mathcal{X}_k} \mathbb{I}\{W_i \neq w\}] = (1 - \lambda_{w,k})^{N_k}$, so $\E[\widehat{\mu}_{w,k}(\mu_w)] = \E[\widetilde{\mu}_{w,k}(\mu_w)] + \mu_w\,\E[\prod_{i \in \mathcal{X}_k} \mathbb{I}\{W_i \neq w\}] = \mu_w + (-1 + 1)(1-\lambda_{w,k})^{N_k}\mu_w = \mu_w$. Therefore, $[\widehat{\mu}_{w,k}(\underline{C}_w),\,\widehat{\mu}_{w,k}(\overline{C}_w)]$ is an unbiased set-estimator of $\mu_w$ for all $(w, k) \in \mathbb{W} \times \mathbb{X}_n$.
\end{proof}

Although the set-estimators provided above are unbiased, it may be more efficient to pool information across strata, especially when there is strong overlap in some strata. I do this using the ``Finite-Sample Stable Probability Weighting'' (FPW) set-estimator below and show that it is unbiased. It can be seen as a counterpart of ASPW estimators in the finite-sample setting.

\begin{theorem}[FPW: Finite-Sample Stable Probability Weighting Set-Estimator]
    \label{theorem:fpw}
    For all $w \in \mathbb{W}$, let $\widehat{\mathcal{U}}_w = [\bar{\mu}_w(\underline{C}_w), \bar{\mu}_w(\overline{C}_w)]$, where $\bar{\mu}_w: \R \to \R$ is given by $\bar{\mu}_w(t) = \frac{1}{n}\sum_{i = 1}^n \bar{\mu}_{w,i}(t)$ such that $$\bar{\mu}_{w,i}(t) \equiv \Bigg[ \widehat{R}_{w,i}\,\mathbb{I}\{W_i = w\}\,Y_i + \Bigg(\prod_{j \,|\, X_j = X_i} \mathbb{I}\{W_j \neq w\}\Bigg)\Bigg(\sum_{k \,\in\, \mathbb{X}_n\setminus \{X_i\}}\frac{N_k}{n - N_{X_i}}\,\widehat{\mu}_{w,k}(t) \Bigg)\Bigg]$$
    if $K_n > 1$ and such that $\bar{\mu}_{w,i}(t) \equiv \widehat{R}_{w,i}\,\mathbb{I}\{W_i = w\}\,Y_i + t\,(\prod_{\,j} \mathbb{I}\{W_j \neq w\})$ if $K_n = 1$. Let $\widehat{\theta}_l$ be the minimum and $\widehat{\theta}_u$ the minimum of the set $\big\{\sum_{w \in \mathbb{W}} \kappa_w\,v_w \,\big\lvert\, v_w \in \widehat{\mathcal{U}}_w \,\forall\,w\big\}$. Then, $[\widehat{\theta}_l, \widehat{\theta}_u]$ is an unbiased set-estimator of $\theta = \sum_{w \in \mathbb{W}} \kappa_w\,\mu_w$ under Assumptions \ref{assumption:high_dim_strata}--\ref{assumption:stratified_car}.
\end{theorem}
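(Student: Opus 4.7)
The plan is to split the argument into three moves: affine monotonicity of $\bar{\mu}_w(t)$ in $t$, pointwise unbiasedness at $t = \mu_w$, and a linear programming reduction that pushes set-estimator unbiasedness through $\theta = \sum_{w} \kappa_w \mu_w$. First, I would verify that $t \mapsto \bar{\mu}_{w,i}(t)$ is affine and non-decreasing almost surely. In the $K_n = 1$ branch this is immediate because the coefficient of $t$ is $\prod_j \mathbb{I}\{W_j \neq w\} \in \{0, 1\}$. In the $K_n > 1$ branch, the coefficient of $t$ is the product of $\prod_{j : X_j = X_i} \mathbb{I}\{W_j \neq w\}$, the non-negative weights $N_k/(n - N_{X_i})$, and the coefficients of $t$ in each $\widehat{\mu}_{w,k}(t) = \widetilde{\mu}_{w,k} + t\prod_{i' \in \mathcal{X}_k} \mathbb{I}\{W_{i'} \neq w\}$ from Theorem \ref{theorem:unpooled_set_est}, all of which are non-negative. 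Averaging over $i$ preserves monotonicity, so $\bar{\mu}_w(\underline{C}_w) \leq \bar{\mu}_w(\mu_w) \leq \bar{\mu}_w(\overline{C}_w)$ a.s.

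Second, and this is the computational heart of the argument, I would show $\E[\bar{\mu}_{w,i}(\mu_w)] = \mu_w$ for every $i$. For $K_n = 1$, Theorem \ref{theorem:biased_estimator} gives $\E[\widehat{R}_{w,i}\mathbb{I}\{W_i = w\}Y_i] = [1 - (1 - \lambda_{w,1})^n]\mu_w$, while the residual term contributes $\mu_w \E\bigl[\prod_j \mathbb{I}\{W_j \neq w\}\bigr] = \mu_w(1 - \lambda_{w,1})^n$, and the two combine to $\mu_w$. For $K_n > 1$, the first summand still yields $[1 - (1 - \lambda_{w,X_i})^{N_{X_i}}]\mu_w$ via Theorem \ref{theorem:biased_estimator}. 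For the pooled correction I would invoke Assumption \ref{assumption:fs_unconfoundedness}, which makes observations in distinct strata mutually independent given the covariates: for each $k \neq X_i$, the factor $\prod_{j : X_j = X_i} \mathbb{I}\{W_j \neq w\}$ involves only stratum $X_i$ while $\widehat{\mu}_{w,k}(\mu_w)$ involves only stratum $k$, so their expectations factorize. Using Theorem \ref{theorem:unpooled_set_est} to replace $\E[\widehat{\mu}_{w,k}(\mu_w)]$ by $\mu_w$, together with the identity $\sum_{k \neq X_i} N_k/(n - N_{X_i}) = 1$, the correction contributes exactly $(1 - \lambda_{w,X_i})^{N_{X_i}} \mu_w$, which cancels the shrinkage bias of the first summand. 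Averaging over $i$ then gives $\E[\bar{\mu}_w(\mu_w)] = \mu_w$.

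Third, I would finish with a short linear programming observation. The feasible set $\prod_{w \in \mathbb{W}} \widehat{\mathcal{U}}_w$ is a Cartesian product of intervals and $\sum_w \kappa_w v_w$ is linear in $v$, so its extrema are attained at corners: explicitly, $\widehat{\theta}_l = \sum_{w : \kappa_w \geq 0} \kappa_w \bar{\mu}_w(\underline{C}_w) + \sum_{w : \kappa_w < 0} \kappa_w \bar{\mu}_w(\overline{C}_w)$, with $\widehat{\theta}_u$ obtained by swapping the two endpoints. Combining the monotonicity from step one with the pointwise unbiasedness $\E[\bar{\mu}_w(\mu_w)] = \mu_w$ from step two yields $\E[\widehat{\theta}_l] \leq \sum_w \kappa_w \mu_w = \theta \leq \E[\widehat{\theta}_u]$, which is precisely unbiasedness of the FPW set-estimator in the earlier sense.

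The main obstacle is the second step in the $K_n > 1$ regime: the pooled term is intricate, and the cancellation between the Chao type shrinkage bias and the borrow across strata correction is exact only because the weights $N_k/(n - N_{X_i})$ are designed to average to one and because independence across strata lets the expectation of the product factorize. Keeping track that the $(1 - \lambda_{w,X_i})^{N_{X_i}}$ factor appears with matching magnitudes and opposite signs in the two summands is the bookkeeping heart of the argument; everything else follows from linearity of expectation and a one line LP.
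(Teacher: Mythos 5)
Your proposal is correct and follows essentially the same route as the paper's own (much terser) proof: unbiasedness of each $\bar{\mu}_{w,i}(\mu_w)$ via Theorems \ref{theorem:biased_estimator} and \ref{theorem:unpooled_set_est} (with the cross-strata factorization and the exact cancellation of the $(1-\lambda_{w,X_i})^{N_{X_i}}\mu_w$ terms), then monotonicity in $t$ and the corner argument for $[\widehat{\theta}_l, \widehat{\theta}_u]$, which the paper compresses into ``by construction.'' The only nitpick is that the independence you use to factorize the expectation across strata comes from the independent-observations sampling setup (stated just before Assumption \ref{assumption:high_dim_strata}) rather than from Assumption \ref{assumption:fs_unconfoundedness} itself; this is a labeling issue, not a gap.
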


\begin{proof}
    For all $i \in \{1,\dots,n\}$ and $w \in \mathbb{W}$, $\E[\bar{\mu}_{w,i}(\mu_w)] = \mu_w$ by Theorems \ref{theorem:biased_estimator} and \ref{theorem:unpooled_set_est}. Thus, $\E[\bar{\mu}_w(\underline{C}_w)] \leq \mu_w \leq \E[\bar{\mu}_w(\overline{C}_w)]$, making $\widehat{\mathcal{U}}_w$ an unbiased set-estimator of $\mu_w$, for all $w \in \mathbb{W}$. Therefore, by construction, $[\widehat{\theta}_l, \widehat{\theta}_u]$ is an unbiased set-estimator of $\theta$ under the assumptions.
\end{proof}

\begin{remark}
    Note that the weights $\frac{N_k}{n - N_{X_i}}$ in the above formula may be replaced with any arbitrary set of nonnegative weights $\omega_{w,k,i} \geq 0$ such that $\sum_{k \,\in\, \mathbb{X}_n\setminus \{X_i\}} \omega_{w,k,i} = 1$ for all $i \in \{1, \dots, n\}$ and $w \in \mathbb{W}$. If it is known a priori that some strata $\mathcal{K} \subset \mathbb{X}_n$ have strong overlap, it may be practical to to set $\omega_{w,k,i}$ to highest values for $k \in \mathcal{K}$. However, the FPW set-estimator is more systematic.
\end{remark}

The contribution $\bar{\mu}_{w,i}(t)$ of each observation $i \in \{1, \dots, n\}$ to the FPW set-estimator $\widehat{\mathcal{U}}_w$ (when $\theta = \mu_w$) for some $w \in \mathbb{W}$ is very intuitive. When the stratum of $i$ has information about $\mu_w$, then $\bar{\mu}_{w,i}(t)$ uses a type of stabilized probability weighting $\widehat{R}_{w,i}\,\mathbb{I}\{W_i = w\}\,Y_i$ using the stable reciprocal weight $\widehat{R}_{w,i}$. Otherwise, $\bar{\mu}_{w,i}(t)$ uses a certain type of imputation based on information (about $\mu_w$) in other strata. Theorem \ref{theorem:fpw} shows that this sort of case-specific imputation is unbiased in a sense. The FPW approach can be interpreted as a finite-sample counterpart of the large-sample ASPW approach, which is conceptually similar.\footnote{My FPW approach is similar in spirit to that of \cite{lee2021bounding}, but they consider a broader setting where the parameters of interest may not even be point-identified, so the set-estimators they propose are generally wider. Under my assumptions, the parameters are indeed point-identified. Thus, in this setting, the FPW set-estimator, which actually reduces to a point-estimator for many practical purposes, is much simpler conceptually and computationally.} The FPW estimator can also be seen as a bias-corrected version of the Weighted Modified Difference (WMD) estimator $\big[\sum_{w\in \mathbb{W}} \kappa_w \big(\sum_{k \in \mathbb{X}_n} \frac{N_k}{n}\,\widetilde{\mu}_{w,k}\big)\big] \equiv \big[\sum_{w\in \mathbb{W}} \kappa_w \big(\sum_{k \in \mathbb{X}_n} \frac{N_k}{n}\,\big(\sum_{i \in \mathcal{X}_k} \mathbb{I}\{W_i = w\}\,Y_i\big)/\mathrm{max}\{1, \sum_{i \in \mathcal{X}_k} \mathbb{I}\{W_i = w\}\}\big)\big]$. Both the FPW and WMD estimators can be contrasted with the IPW estimator $\big[\sum_{w\in \mathbb{W}} \kappa_w \big(\sum_{k \in \mathbb{X}_n} \frac{N_k}{n}\,\breve{\mu}_{w,k}\big)\big]$, where $\breve{\mu}_{w,k} \equiv \frac{1}{N_k}\sum_{i \in \mathcal{X}_k} \mathbb{I}\{W_i = w\}\,Y_i/\mathrm{max}\{\widehat{P}_{w,i}, 1/(2N_k - 2)\}$ that uses the leave-one-out propensity score estimate $\widehat{P}_{w,i} \equiv (\sum_{j \neq i\,|\, X_j = X_i} \mathbb{I}\{W_j = w\})/(N_{X_i} - 1)$. It follows from Jensen's inequality\footnote{Note that $\E[1/\mathrm{max}\{\widehat{P}_{w,i}, [2(N_k - 1)]^{-1}\}] \geq 1/\E[\widehat{P}_{w,i}] = 1/\lambda_{w,X_i}$ if $\lambda_{w,X_i} \geq 1/[2(N_k - 1)]$, because the following function is convex for all $k \in \mathbb{X}_n$: $1/P$ if $P \geq 1/(N_k - 1)$, and $2(N_k-1) - (N_k - 1)^2P$ otherwise.} that $\breve{\mu}_{w,k}$ is a biased estimator of $\mu_w$ for all $(w, k) \in \mathbb{W} \times \mathbb{X}_n$. Figures \ref{figure:sampling_dist_n50_p2} through \ref{figure:sampling_dist_n500_p50} in the Appendix illustrate the unbiasedness property of the FPW estimator (compared with the IPW and WMD estimators that can have very large biases and variances) in a simple setting. 

The leave-one-out propensity score estimate $\widehat{P}_{w,i} \equiv (\sum_{j \neq i\,|\, X_j = X_i} \mathbb{I}\{W_j = w\})/(N_{X_i} - 1)$ is used by the FPW and IPW estimators (in different ways) for finite-sample estimation of the CAC parameter $\theta$. The IPW estimator is a point-estimator but it is biased. The FPW is unbiased but it is a set-estimator. A natural question is whether it is possible to obtain an unbiased point-estimator (potentially based on $\widehat{P}_{w,i}$) for any particular forms of CAC $\theta = \sum_{w \in \mathbb{W}} \kappa_w\,\mu_w$, where the constants $(\kappa_w)_{w \in \mathbb{W}} \in \R^{|\mathbb{W}|}$ may themselves potentially depend on unknown (nuisance) parameters. The below theorem addresses this question and also paves the way for practical finite-sample inference.

\begin{theorem}[Unbiased Point-Estimators of Scaled Average Treatment Effects]
    \label{theorem:unbiased_scaled_ate}
    Let $(a, b) \in \mathbb{W}^2$ such that $a \neq b$ and $\kappa_a = - \kappa_b = \frac{1}{n}\sum_{i = 1}^n \lambda_{a,X_i}\,\lambda_{b,X_i} \equiv \mathcal{G}_{a,b}$ while $\kappa_{\tilde{w}} = 0$ for $\tilde{w} \in \mathbb{W} \setminus \{a, b\}$. Then, $\widehat{T}_{a,b} \equiv \frac{1}{n}\sum_{i = 1}^n \big[\widehat{P}_{b,i}\,\mathbb{I}\{W_i = a\} - \widehat{P}_{a,i}\,\mathbb{I}\{W_i = b\}\big]Y_i$ is an unbiased point-estimator of the CAC parameter given by $\theta \equiv \sum_{w \in \mathbb{W}} \kappa_w\,\mu_w = \mathcal{G}_{a,b}(\mu_a - \mu_b)$ so that the moment equality $\E[\widehat{\Psi}_{a,b}] = 0$ holds with the residual $\widehat{\Psi}_{a,b} \equiv \mathcal{G}_{a,b}(\mu_a - \mu_b) - \widehat{T}_{a,b}$ under Assumptions \ref{assumption:high_dim_strata}--\ref{assumption:stratified_car}.
\end{theorem}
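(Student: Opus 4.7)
The plan is to show $\E[\widehat{T}_{a,b}] = \mathcal{G}_{a,b}(\mu_a - \mu_b)$ by computing the expectation of each summand $[\widehat{P}_{b,i}\mathbb{I}\{W_i = a\} - \widehat{P}_{a,i}\mathbb{I}\{W_i = b\}]Y_i$ in isolation, exploiting the leave-one-out construction of $\widehat{P}_{w,i}$ to obtain a clean factorization. Throughout, I would treat the covariates $(X_1, \dots, X_n)$ as given (fixed design), so the $\lambda_{w,X_i}$'s act as constants and expectations are taken over the treatments and the potential outcomes only.

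First, I would establish the key conditional independence. Assumption \ref{assumption:high_dim_strata} guarantees $N_{X_i} \geq 2$, so $\widehat{P}_{b,i}$ is well-defined, and its leave-one-out construction makes it a function solely of $\{W_j : j \neq i,\, X_j = X_i\}$. Independence of observations across $i$ together with Assumption \ref{assumption:fs_unconfoundedness} implies that, conditional on the fixed covariates, the pair $(W_i, Y_i)$ is independent of $\{(W_j, Y_j) : j \neq i\}$, and in particular independent of $\widehat{P}_{b,i}$. Hence $\E[\widehat{P}_{b,i}\mathbb{I}\{W_i = a\}Y_i] = \E[\widehat{P}_{b,i}]\cdot\E[\mathbb{I}\{W_i = a\}Y_i]$.

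Second, I would evaluate each factor separately. Writing $\widehat{P}_{b,i}$ as the mean of $N_{X_i} - 1$ Bernoulli indicators each with mean $\lambda_{b,X_j} = \lambda_{b,X_i}$ gives $\E[\widehat{P}_{b,i}] = \lambda_{b,X_i}$. For the second factor, I would combine SUTVA ($Y_i = Y^*_{W_i,i}$), the unconfoundedness $W_i \orth Y^*_{a,i} \mid X_i$, and $\E[Y^*_{a,i} \mid X_i] = \mu_a$ from Assumption \ref{assumption:stratified_car} to obtain $\E[\mathbb{I}\{W_i = a\}Y_i] = \E[\mathbb{I}\{W_i = a\}Y^*_{a,i}] = \lambda_{a,X_i}\mu_a$. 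Combining yields $\E[\widehat{P}_{b,i}\mathbb{I}\{W_i = a\}Y_i] = \lambda_{a,X_i}\lambda_{b,X_i}\mu_a$, and by the symmetric argument $\E[\widehat{P}_{a,i}\mathbb{I}\{W_i = b\}Y_i] = \lambda_{a,X_i}\lambda_{b,X_i}\mu_b$. Averaging over $i$ delivers $\E[\widehat{T}_{a,b}] = \mathcal{G}_{a,b}(\mu_a - \mu_b)$, and hence $\E[\widehat{\Psi}_{a,b}] = 0$ as claimed.

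I do not expect any serious obstacle; the only substantive point is the cross-unit independence invoked in the first step, and this is exactly what the leave-one-out construction is designed to deliver. If $\widehat{P}_{b,i}$ incorporated $\mathbb{I}\{W_i = b\}$ itself, it would be mechanically negatively associated with $\mathbb{I}\{W_i = a\}$ (the two indicators cannot both equal one), the expectation would fail to factor, and an explicit covariance correction term would appear. The leave-one-out device sidesteps this entirely and renders the unbiasedness transparent.
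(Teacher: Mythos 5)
Your proposal is correct and follows essentially the same route as the paper's proof: the paper likewise computes $\E\big[\widehat{P}_{b,i}\,\mathbb{I}\{W_i = a\}Y_i - \widehat{P}_{a,i}\,\mathbb{I}\{W_i = b\}Y_i\big] = \lambda_{b,X_i}\lambda_{a,X_i}\mu_a - \lambda_{a,X_i}\lambda_{b,X_i}\mu_b$ term by term, relying on the same leave-one-out independence and unconfoundedness factorization that you spell out explicitly. Your write-up simply makes the implicit steps (the expectation of the leave-one-out propensity estimate and the independence of $\widehat{P}_{w,i}$ from $(W_i, Y_i)$) more transparent.
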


\begin{proof}
    Since $\E[\widehat{P}_{b,i}\,\mathbb{I}\{W_i = a\}Y_i - \widehat{P}_{a,i}\,\mathbb{I}\{W_i = b\}Y_i] = \lambda_{b,X_i}\lambda_{a,X_i}\mu_a - \lambda_{a,X_i}\lambda_{b,X_i}\mu_b$ for all units $i \in \{1, \dots, n\}$ under the assumptions, it follows that $\E[\widehat{T}_{a,b}] = (\frac{1}{n}\sum_{i = 1}^n \lambda_{a,X_i}\,\lambda_{b,X_i})(\mu_a - \mu_b)$.
\end{proof}

Note that the moment equality in the above theorem strongly resembles the conditional moment equality for SPW generalized residuals; this is not surprising because they are based on the same underlying idea. Clearly, Theorem \ref{theorem:unbiased_scaled_ate} implies that $\widehat{T}_{a,b}/\mathcal{G}_{a,b}$ is an (infeasible) unbiased point-estimator of the average treatment effect $\mu_a - \mu_b$. Despite the infeasibility of $\widehat{T}_{a,b}/\mathcal{G}_{a,b}$ for finite-sample estimation, $\widehat{T}_{a,b}/\mathcal{G}_{a,b}$ is highly useful for finite-sample inference. Suppose $T^{**}_{a,b}$ is a random draw from the finite-sample distribution corresponding to $\widehat{T}_{a,b}$. Then, despite the fact that $\widehat{T}_{a,b}/\mathcal{G}_{a,b}$ is latent and not computable, note that $\mathbb{P}\{T^{**}_{a,b}/\mathcal{G}_{a,b} \geq \widehat{T}_{a,b}/\mathcal{G}_{a,b}\} = \mathbb{P}\{T^{**}_{a,b} \geq \widehat{T}_{a,b}\}$. Thus, the $p$-values associated with the test statistics $\widehat{T}_{a,b}$ and $\widehat{T}_{a,b}/\mathcal{G}_{a,b}$ are indeed equivalent. 

I now propose a conceptually and computationally simple finite-sample inference method based on the above statistic to test null hypotheses regarding the average effects of multivalued treatments. Specifically, I am interested in ``weak'' null hypotheses that concern the average treatment effect vector $T^\circ = (T^\circ_w)_{w = 1}^L$, where $\mathbb{W} = \{0,1,\dots,L\}$ without loss of generality and $T^\circ_w \equiv \mu_w - \mu_0$, so that $T^\circ \equiv (\mu_1 - \mu_0, \dots, \mu_L - \mu_0)$. In contrast, ``sharp'' null hypotheses hypothesize a specific set of restricted values of all the missing (counterfactual) potential outcomes. Currently, there only exist asymptotically valid tests\footnote{See \cite{chung2013exact,chung2016multivariate,wu2021randomization}.} of weak null hypotheses in my general setting.\footnote{There exist some exceptions, such as the case of binary outcomes, for which \cite{rigdon2015randomization} and \cite{li2016exact} propose some methods. \cite{caughey2021randomization} provide useful methods for testing ``bounded'' null hypotheses with one-sided treatment effect heterogeneity, but the weak null hypotheses I consider are more general.} I use partial identification and a simple simulation-based method to bound the entire ``$p$-value function'' \citep{luo2021leveraging},  which may be called a finite-sample version of the ``confidence distribution'' \citep{xie2013confidence}, using only a single Monte Carlo sample.

As is standard in the literature on design-based finite-sample inference, I treat the following as fixed for inferential purposes: the (partially observed) data $\mathcal{D}^*_n \equiv ((Y^*_{w,i})_{w \in \mathbb{W}}, X_i)_{i = 1}^n$, and a set $\Lambda$ that partially identifies the nuisance parameters $\{\lambda_{w,k}: (w, k) \in \mathbb{W} \times \mathbb{X}_n\}$ by assuming bounds for them, implying the assumption that $\{\lambda_{w,k}: (w, k) \in \mathbb{W} \times \mathbb{X}_n\} \in \Lambda$. Thus, in the following results, the notation $\mathbb{P}_l\{\cdot\}$ is used to represent probability when the only source of randomness is the vector of treatment statuses that are generated according to the model $l \in \Lambda$.

For purposes of inference on the parameter $\mu_a - \mu_b$, I consider ``linear'' test statistics that can be expressed as $\breve{T}_{a,b} = \frac{1}{n}\sum_{i = 1}^n Q_i^{a,b}(\mathcal{W}_n)\,Y_i$, where the weight $Q_i^{a,b}(\mathcal{W}_n)$ may depend on the entire set $\mathcal{W}_n \equiv (X_i, W_i)_{i = 1}^n$ for all $i \in \{1, \dots, n\}$, for a distinct pair $(a, b) \in \mathbb{W}^2$. I prefer to choose $\breve{T}_{a,b} \equiv \widehat{T}_{a,b}$, but $\breve{T}_{a,b}$ may also be chosen to be the appropriate IPW estimator or WMD estimator. For testing weak null hypotheses, I use a flexible restriction on unit-level effect heterogeneity.

\begin{assumption}[Bounds on Unit-Level Effect Heterogeneity]
    \label{assumption:bounds_het}
    Let $\varepsilon^*_{w,i} \equiv (Y^*_{w,i} - Y^*_{0,i}) - (\mu_w - \mu_0)$ for all $w \in \mathbb{W} \setminus \{0\} \equiv \{1, \dots, L\}$ such that $\varepsilon^*_{w,i} \in [-c_w, c_w]$ for some $c_w \geq 0$ for all $i \in \{1, \dots, n\}$.
\end{assumption}

To avoid cumbersome notation, I henceforth concentrate on the case where $\mathbb{W} = \{0, 1\}$ in order to construct bounds for the $p$-value function in Theorem \ref{theorem:bounds_cd} below. However, a conceptually (and mechanically) similar construction is possible for the general case where $\mathbb{W} = \{0, \dots, L\}$.

\begin{theorem}[Partial Identification of the Finite-Sample Confidence Distribution]
    \label{theorem:bounds_cd}
    Let $\mathbb{W} = \{0, 1\}$ so that $\mu_1 - \mu_0 \equiv T^\circ \in \R$. Consider testing the ``weak'' null hypothesis $\mathcal{H}_0: T^\circ = \overline{T}\in \R$. Let $\widetilde{\mathcal{W}}^{l}_n \equiv (X_i, \widetilde{W}^{l}_i)_{i = 1}^n$ such that $(\widetilde{W}^{l}_i)_{i = 1}^n$ is a random vector that follows the model $l \in \Lambda$. In addition, let $\widetilde{\Omega}^l_1 \equiv \frac{1}{n}\sum_{i = 1}^n Q_i^{1,0}(\widetilde{\mathcal{W}}^l_n)\,Y_i$ and $\widetilde{\Omega}^l_2 \equiv \sum_{i = 1}^n \widetilde{U}^l_i$, where $\widetilde{U}^l_i \equiv \frac{1}{n}\, Q_i^{1,0}(\widetilde{\mathcal{W}}^l_n)\,(\widetilde{W}^l_i - W_i)$. Furthermore, let $\widetilde{\Omega}^l_3 \equiv \sum_{i = 1}^n \mathbb{I}\{\widetilde{U}^l_i \geq 0\}\, \widetilde{U}^l_i$ and $\widetilde{\Omega}^l_4 \equiv \sum_{i = 1}^n \mathbb{I}\{\widetilde{U}^l_i < 0\}\, \widetilde{U}^l_i$. Then, the $p$-value (as a function of $\overline{T}$) under Assumptions \ref{assumption:high_dim_strata}--\ref{assumption:bounds_het} is bounded by the extrema of the following set: $$\Big\{\mathbb{P}_l\{\widetilde{\Omega}^l_1 + \widetilde{\Omega}^l_2\,\overline{T} + \epsilon_3\, \widetilde{\Omega}^l_3 + \epsilon_4 \,\widetilde{\Omega}^l_4 \geq \widehat{T}_{1,0} \}: (\epsilon_3, \epsilon_4) \in \{-c_1, c_1\}^2,\, l \in \Lambda\Big\}.$$
\end{theorem}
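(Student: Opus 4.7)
The plan is to exploit the identity $\widetilde{Y}^l_i = Y_i + (\widetilde{W}^l_i - W_i)(Y^*_{1,i} - Y^*_{0,i})$, which follows from SUTVA (Assumption \ref{assumption:fs_unconfoundedness}) applied to both the factual assignment $W_i$ and the hypothetical assignment $\widetilde{W}^l_i$ drawn under model $l \in \Lambda$. Under the weak null $\mathcal{H}_0: \mu_1 - \mu_0 = \overline{T}$ combined with Assumption \ref{assumption:bounds_het}, the unit-level contrast decomposes as $Y^*_{1,i} - Y^*_{0,i} = \overline{T} + \varepsilon^*_{1,i}$ with $\varepsilon^*_{1,i} \in [-c_1, c_1]$, so that $\widetilde{Y}^l_i = Y_i + (\widetilde{W}^l_i - W_i)(\overline{T} + \varepsilon^*_{1,i})$. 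Note that $\widehat{T}_{1,0}$ is a fixed function of the observed data, while the simulated counterpart $\widetilde{T}^l_{1,0} \equiv \frac{1}{n}\sum_i Q_i^{1,0}(\widetilde{\mathcal{W}}^l_n)\widetilde{Y}^l_i$ inherits its randomness solely from the design-based draw $(\widetilde{W}^l_i)_{i=1}^n$.

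Substituting the identity into $\widetilde{T}^l_{1,0}$ and collecting terms gives the linear-in-$\overline{T}$ decomposition $\widetilde{T}^l_{1,0} = \widetilde{\Omega}^l_1 + \overline{T}\,\widetilde{\Omega}^l_2 + \sum_{i=1}^n \widetilde{U}^l_i\,\varepsilon^*_{1,i}$, using the definitions of $\widetilde{U}^l_i$ and $\widetilde{\Omega}^l_2 = \sum_i \widetilde{U}^l_i$. The remainder $\sum_i \widetilde{U}^l_i\,\varepsilon^*_{1,i}$ is linear in the unknown but bounded heterogeneity vector, and so its extrema are attained at the vertices of the hypercube $[-c_1,c_1]^n$. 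Partitioning the index set by the sign of $\widetilde{U}^l_i$ and applying these vertex choices yields the pathwise sandwich $-c_1\widetilde{\Omega}^l_3 + c_1\widetilde{\Omega}^l_4 \leq \sum_i \widetilde{U}^l_i\,\varepsilon^*_{1,i} \leq c_1\widetilde{\Omega}^l_3 - c_1\widetilde{\Omega}^l_4$, recalling $\widetilde{\Omega}^l_3 \geq 0 \geq \widetilde{\Omega}^l_4$ by construction. Hence on every realization of $\widetilde{\mathcal{W}}^l_n$ one has $\{\widetilde{T}^l_{1,0} \geq \widehat{T}_{1,0}\} \subseteq \{\widetilde{\Omega}^l_1 + \overline{T}\widetilde{\Omega}^l_2 + c_1\widetilde{\Omega}^l_3 - c_1\widetilde{\Omega}^l_4 \geq \widehat{T}_{1,0}\}$, with a matching reverse inclusion for the lower bound.

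Taking probabilities on both sides preserves monotonicity, so the $p$-value under any fixed model $l^* \in \Lambda$ satisfies $\mathbb{P}_{l^*}\{\widetilde{T}^{l^*}_{1,0} \geq \widehat{T}_{1,0}\}$ being bracketed by the analogous probabilities evaluated at the two extremal sign pairs $(\epsilon_3, \epsilon_4) = (c_1, -c_1)$ and $(-c_1, c_1)$. Since the true assignment model is only partially identified by $\Lambda$, I would then further envelope over $l \in \Lambda$. Enlarging the index set of sign pairs from the two extremal ones to all four elements of $\{-c_1, c_1\}^2$ is harmless for the minimum and maximum, and matches the statement of the theorem verbatim.

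The main obstacle I expect is bookkeeping rather than substance: one must be careful to distinguish pathwise (fixed-$\omega$) bounds from probability statements, and to keep track of the signs so that the upper and lower envelopes correspond respectively to aligning $\varepsilon^*_{1,i}$ with $\mathrm{sgn}(\widetilde{U}^l_i)$ and against it. Once that is handled cleanly, the result reduces to monotonicity of $\mathbb{P}_l$ applied to a linear inequality whose only free components over the partially identified nuisance set $\Lambda$ and over $\varepsilon^*$ are precisely $\widetilde{\Omega}^l_1, \widetilde{\Omega}^l_2, \widetilde{\Omega}^l_3, \widetilde{\Omega}^l_4$ and the vertex choices $(\epsilon_3, \epsilon_4)$, giving the claimed extrema characterization.
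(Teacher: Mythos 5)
Your proposal is correct and follows essentially the same route as the paper's proof: write $\widetilde{Y}^l_i = Y_i + (\overline{T} + \varepsilon^*_{1,i})(\widetilde{W}^l_i - W_i)$ under the null, decompose the simulated statistic as $\widetilde{\Omega}^l_1 + \widetilde{\Omega}^l_2\,\overline{T} + \sum_i \varepsilon^*_{1,i}\,\widetilde{U}^l_i$, and bound the heterogeneity term pathwise via the sign split of the $\widetilde{U}^l_i$ before enveloping over $l \in \Lambda$. Your write-up merely makes explicit the vertex argument and the monotonicity step that the paper leaves implicit.
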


\begin{proof}
    Under the assumptions and the null hypothesis $\mathcal{H}_0: T^\circ = \overline{T}$, note that $Y^*_{1,i} = Y^*_{0,i} + \overline{T} + \varepsilon^*_{1,i}$. Then, $Y_i = Y^*_{0,i} + W_i(Y^*_{1,i} - Y^*_{0,i}) = Y^*_{0,i} + W_i(\overline{T} + \varepsilon^*_{1,i})$ so that $Y^*_{w,i} = Y_i + (\overline{T} + \varepsilon^*_{1,i})(w - W_i)$ for $w \in \{0, 1\}$. Under $\widetilde{\mathcal{W}}^{l}_n$, the observed outcomes are $\widetilde{Y}^l_i = Y_i + \overline{T}(\widetilde{W}_i - W_i) + \varepsilon^*_{1,i}(\widetilde{W}_i - W_i)$ for all $i \in \{1, \dots, n\}$. Under model $l \in \Lambda$, the $p$-value is given by $\mathbb{P}_l\{\widetilde{T}^l \geq \widehat{T}_{1,0}\}$, where the random variable is $\widetilde{T}^l = \frac{1}{n}\sum_{i = 1}^n Q_i^{1,0}(\widetilde{\mathcal{W}}^l_n)\,\widetilde{Y}^l_i = \widetilde{\Omega}^l_1 + \widetilde{\Omega}^l_2\,\overline{T} + \sum_{i = 1}^n \varepsilon^*_{1,i}\,\widetilde{U}^l_i$ such that $\varepsilon^*_{1,i}\,\widetilde{U}^l_i$ lies between $-c_1\widetilde{U}^l_i$ and $c_1\widetilde{U}^l_i$ for all $i \in \{1, \dots, n\}$, resulting in the above bounds for the $p$-value function.
\end{proof}

Stochastic approximation may be used to compute $\mathbb{P}_l\{\cdot\}$ for all $l \in \Lambda$ without affecting the finite-sample validity of the test \citep{lehmann2005testing}. Only a single set of Monte Carlo draws are needed to bound the entire $p$-value function based on the above result. Valid confidence sets can also be computed by simply ``inverting'' \citep{luo2021leveraging} all possible $p$-value functions.

\clearpage

\small

\setstretch{1}
\setlength{\bibsep}{2pt}
\bibliographystyle{chicago}
\bibliography{references}

\clearpage

\section*{Appendix of Illustrations}

\begin{figure}[ht]
\caption{Sampling Distributions of IPW-Based $(1,0)'\widehat{\beta}_{-1}$ and NPW-Based $(1,0)'\widehat{\beta}_{1}$ Estimators}
\label{figure:ipw_vs_npw_example_coef1est_dist}
\vspace{-5mm}
\begin{center}
\includegraphics[scale=0.7]{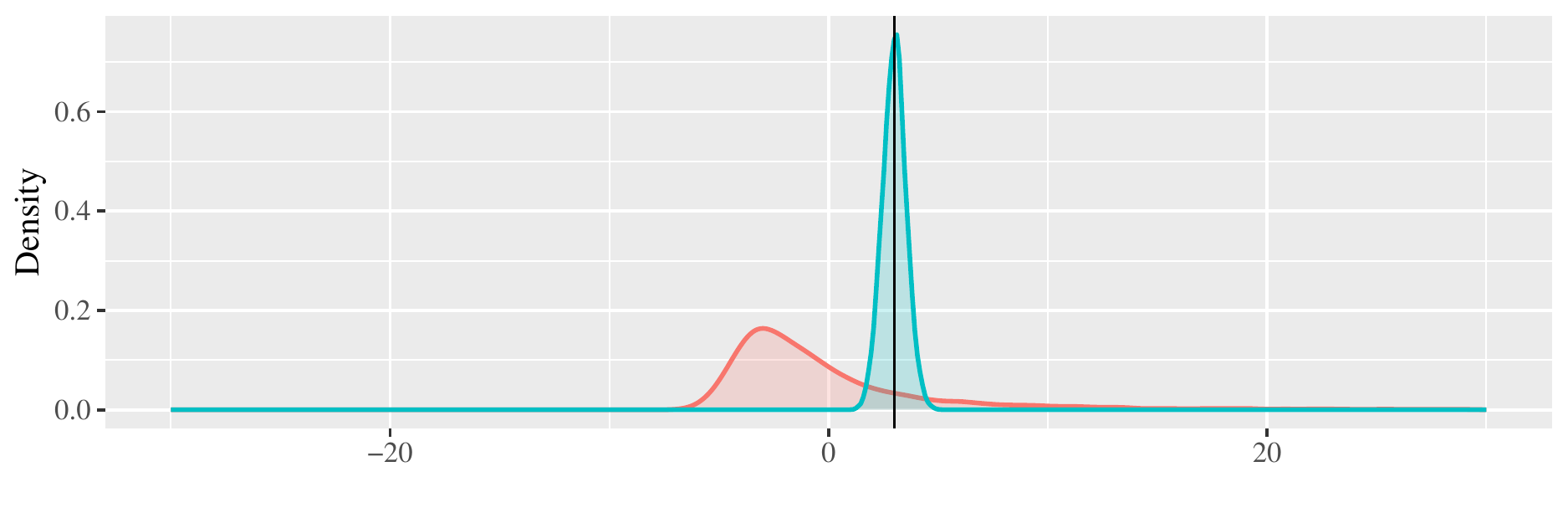}
\end{center}
\vspace{-5mm}
\caption{Sampling Distributions of IPW-Based $(0,1)'\widehat{\beta}_{-1}$ and NPW-Based $(0,1)'\widehat{\beta}_{1}$ Estimators}
\label{figure:ipw_vs_npw_example_coef2est_dist}
\vspace{-5mm}
\begin{center}
\includegraphics[scale=0.7]{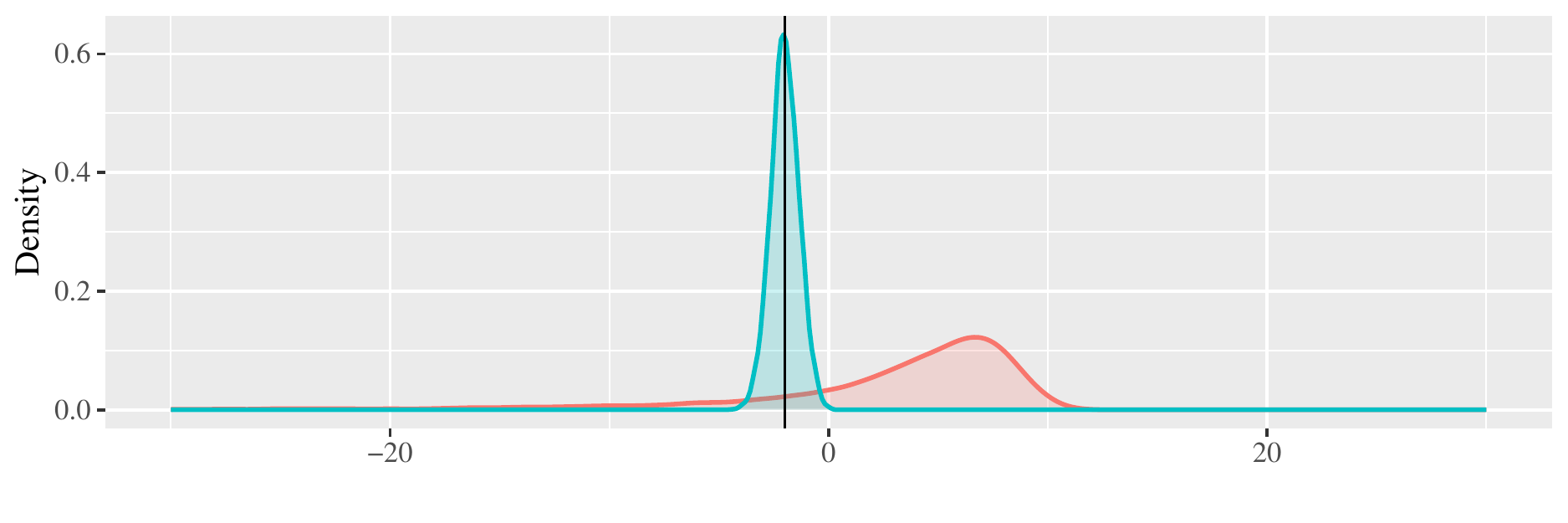}
\end{center}
\vspace{-5mm}
\caption{Sampling Distributions of IPW-Based $\hE[Z_i]'\widehat{\beta}_{-1}$ and NPW-Based $\hE[Z_i]'\widehat{\beta}_{1}$ Estimators}
\label{figure:ipw_vs_npw_example_ate_est_dist}
\vspace{-5mm}
\begin{center}
\includegraphics[scale=0.7]{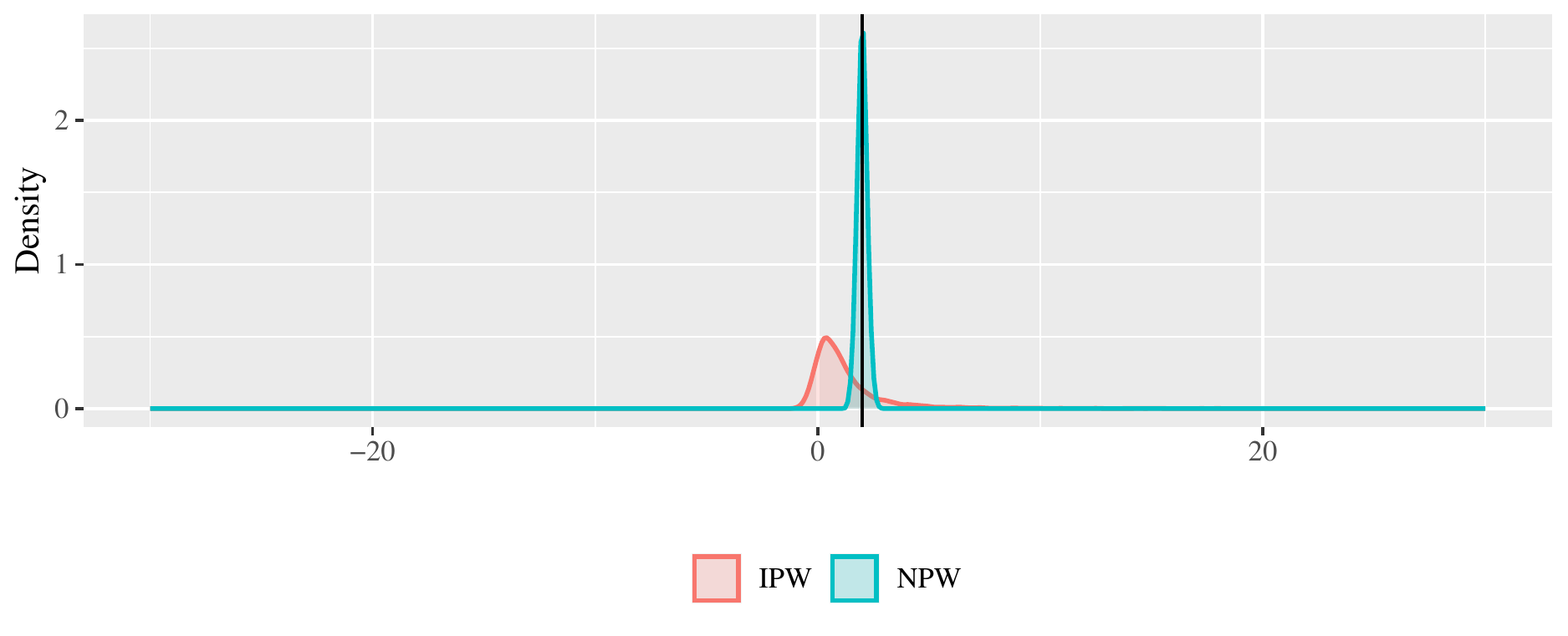}
\end{center}
\end{figure}
\vspace{-4mm}
\noindent \footnotesize Note: The above figures show the simulated densities of the sampling distributions of the IPW and NPW estimators of the parameters $(1,0)'\beta = 3$, $(0,1)'\beta = -2$, and $\E[Z]'\beta = 2$ (under the assumed data-generating process) in a setting with limited overlap. Black lines are used to represent the true values of these estimands. The true parameter coefficient values are marked using black lines. The simulation uses $n = 10^5$ and the following data-generating process: $X \sim \mathrm{Uniform}(0,1)$, $W \sim \mathrm{Bernoulli}(X^4)$, $(\upsilon_1, \upsilon_2) \sim \mathrm{Uniform}[(-2, 2)^2]$, $Y^*_0 = 10\,(1-X^4) + X^4\,\upsilon_1$, and $Y^*_1 = Y^*_0 + \tau(X) + 2\,\upsilon_2$, where $\tau(X) = \beta'Z$ with $\beta = (3, -2)$ and $Z = (1,X)$ so that $\tau^* = \E[Z]'\beta = 3 - 2\,\E[X] = 2$.

\begin{figure}[ht]
\caption{Contour Plot of a Sampling Distribution of the IPW Estimator $\widehat{\beta}_{-1}$}
\label{figure:ipw_joint_density_example}
\vspace{-5mm}
\begin{center}
\includegraphics[scale=0.9]{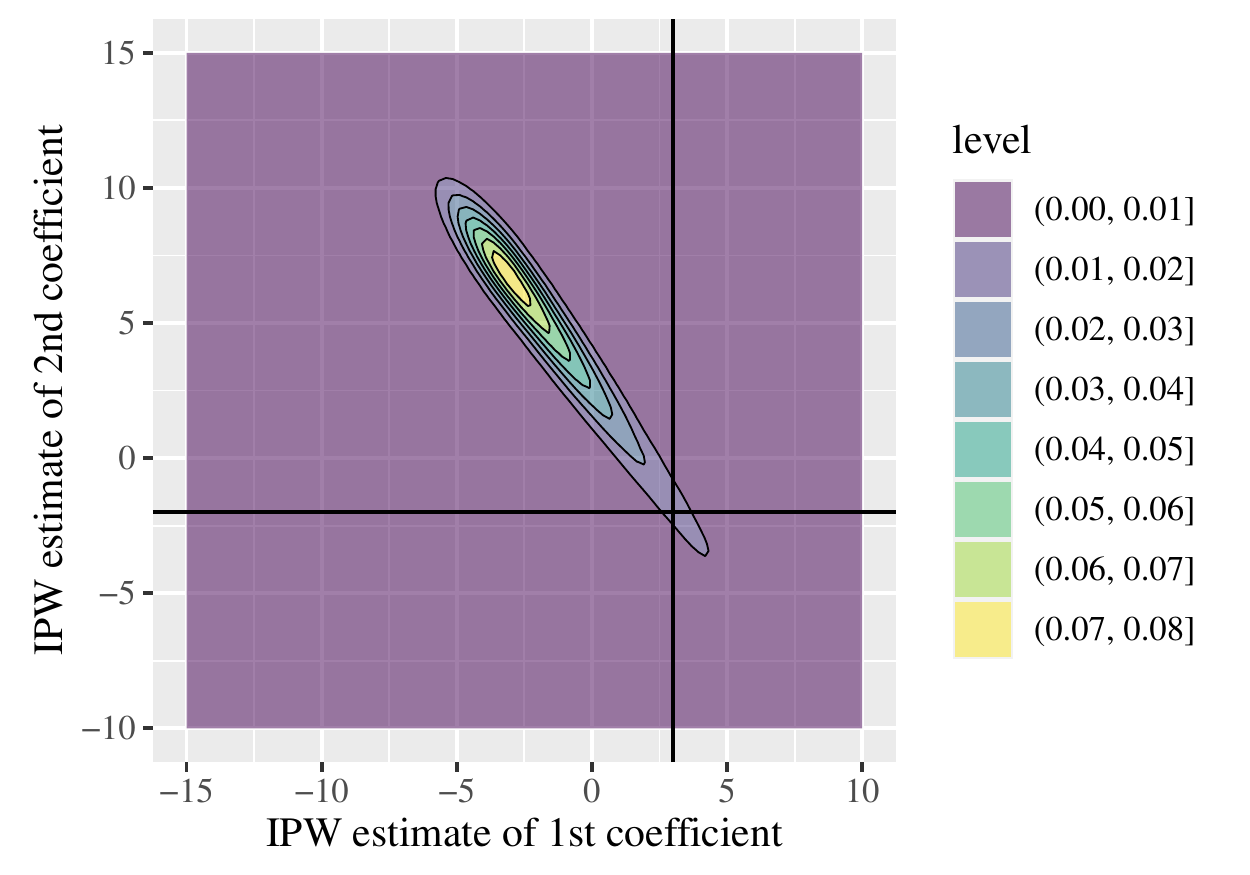}
\end{center}
\caption{Contour Plot of a Sampling Distribution of the NPW Estimator $\widehat{\beta}_{1}$}
\label{figure:npw_joint_density_example}
\vspace{-5mm}
\begin{center}
\includegraphics[scale=0.9]{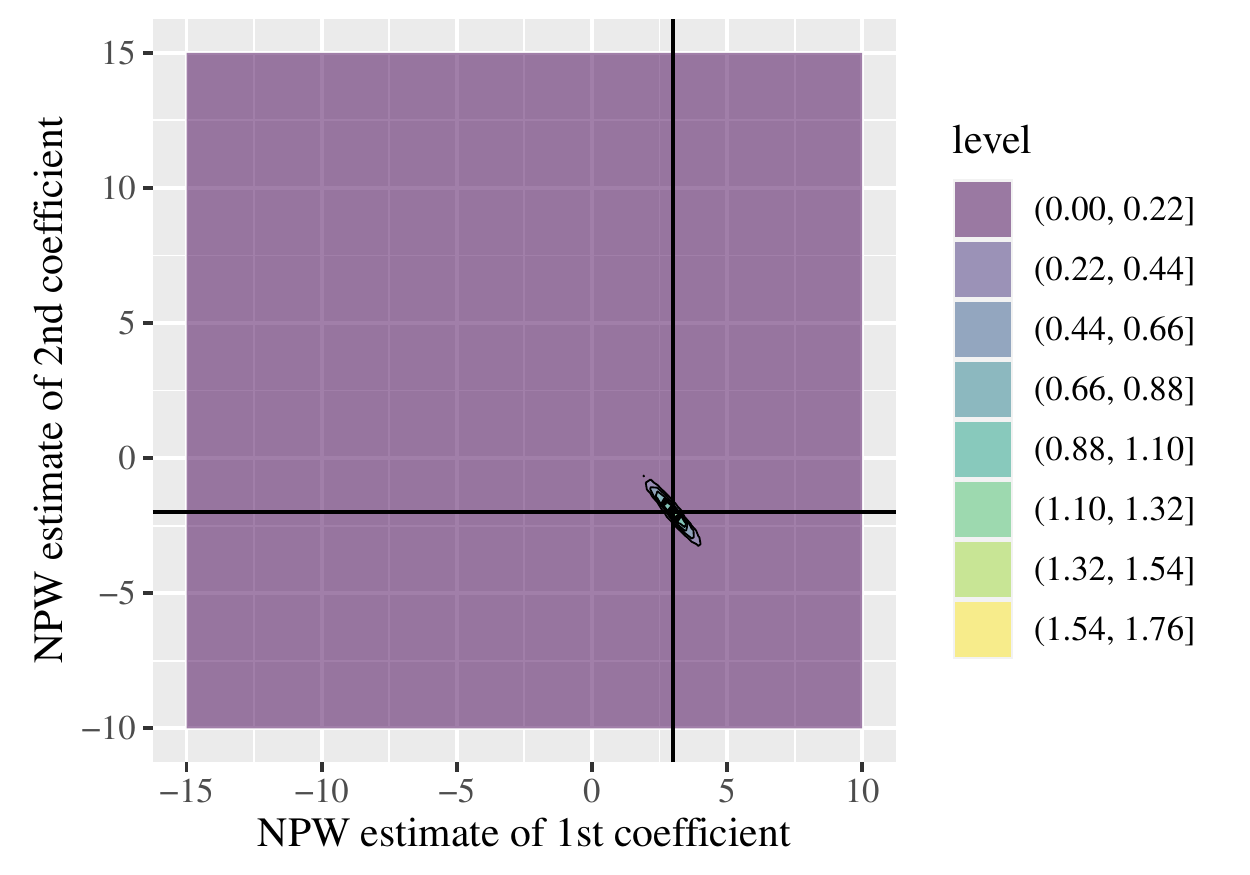}
\end{center}
\footnotesize Note: The above contour plots show the simulated joint densities of the sampling distributions of the IPW and NPW estimators in a setting with limited overlap. The horizontal and vertical axes are used for representing values of the first and second coefficient estimates of a two-dimensional coefficient parameter vector $\beta = (3, -2)$, respectively. The true parameter coefficient values are marked using black lines. The simulation uses $n = 10^5$ and the following data-generating process: $X \sim \mathrm{Uniform}(0,1)$, $W \sim \mathrm{Bernoulli}(X^4)$, $(\upsilon_1, \upsilon_2) \sim \mathrm{Uniform}[(-2, 2)^2]$, $Y^*_0 = 10\,(1-X^4) + X^4\,\upsilon_1$, and $Y^*_1 = Y^*_0 + \tau(X) + 2\,\upsilon_2$, where $\tau(X) = \beta'Z$ with $\beta = (3, -2)$ and $Z = (1,X)$ so that $\tau^* = 3 - 2\,\E[X] = 2$. 
\end{figure}

\begin{figure}[ht]
\caption{Distributions of Finite-Sample Estimators when $n = 50$ and  $\lambda_{1,0} = 1 - \lambda_{1,1} = 0.02$}
\label{figure:sampling_dist_n50_p2}
\vspace{-5mm}
\begin{center}
\includegraphics[scale=0.6]{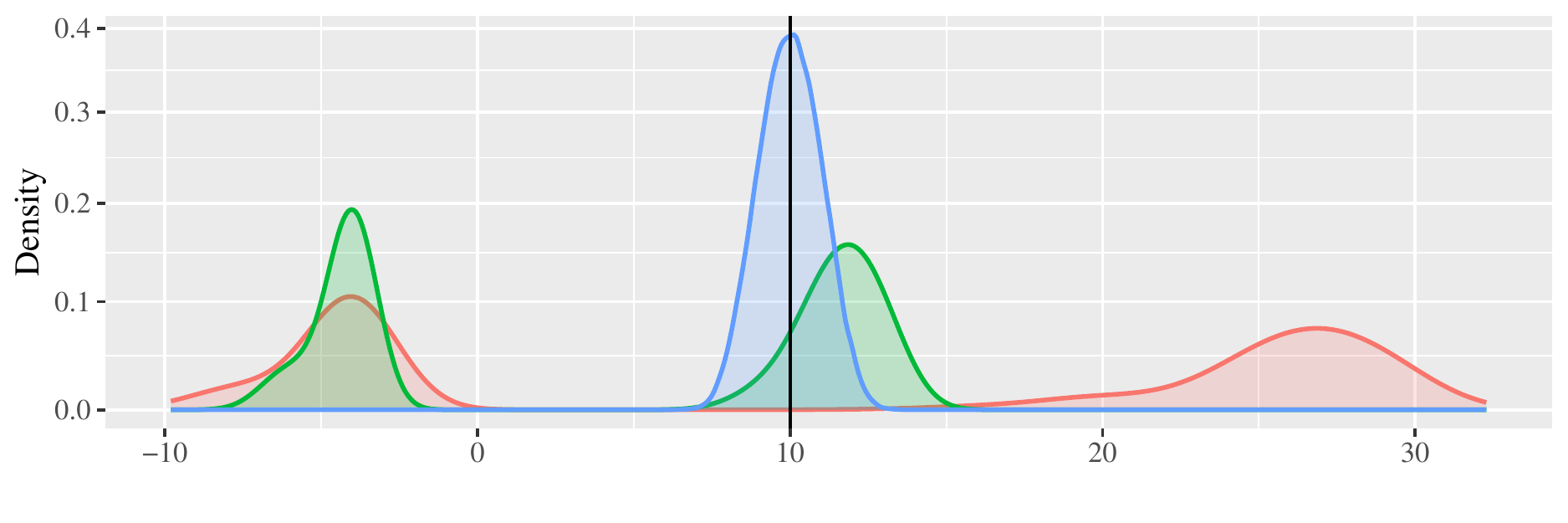}
\end{center}
\vspace{-7mm}
\caption{Distributions of Finite-Sample Estimators when $n = 50$ and  $\lambda_{1,0} = 1 - \lambda_{1,1} = 0.05$}
\vspace{-5mm}
\begin{center}
\includegraphics[scale=0.6]{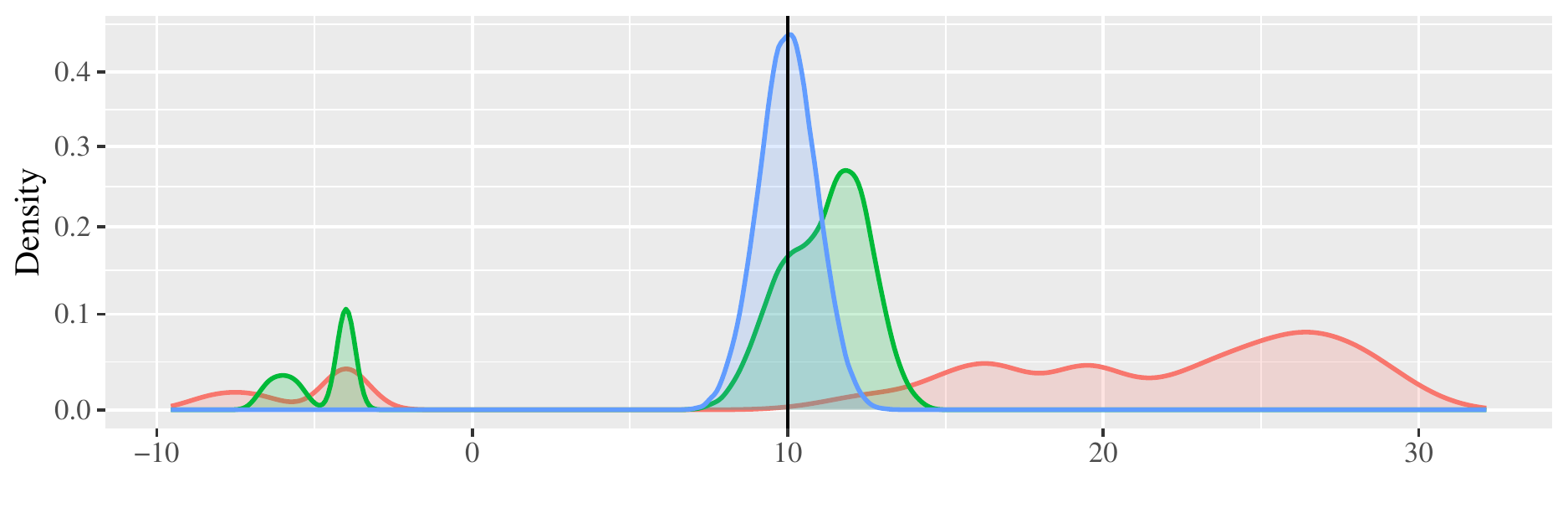}
\end{center}
\vspace{-7mm}
\caption{Distributions of Finite-Sample Estimators when $n = 50$ and  $\lambda_{1,0} = 1 - \lambda_{1,1} = 0.10$}
\vspace{-5mm}
\begin{center}
\includegraphics[scale=0.6]{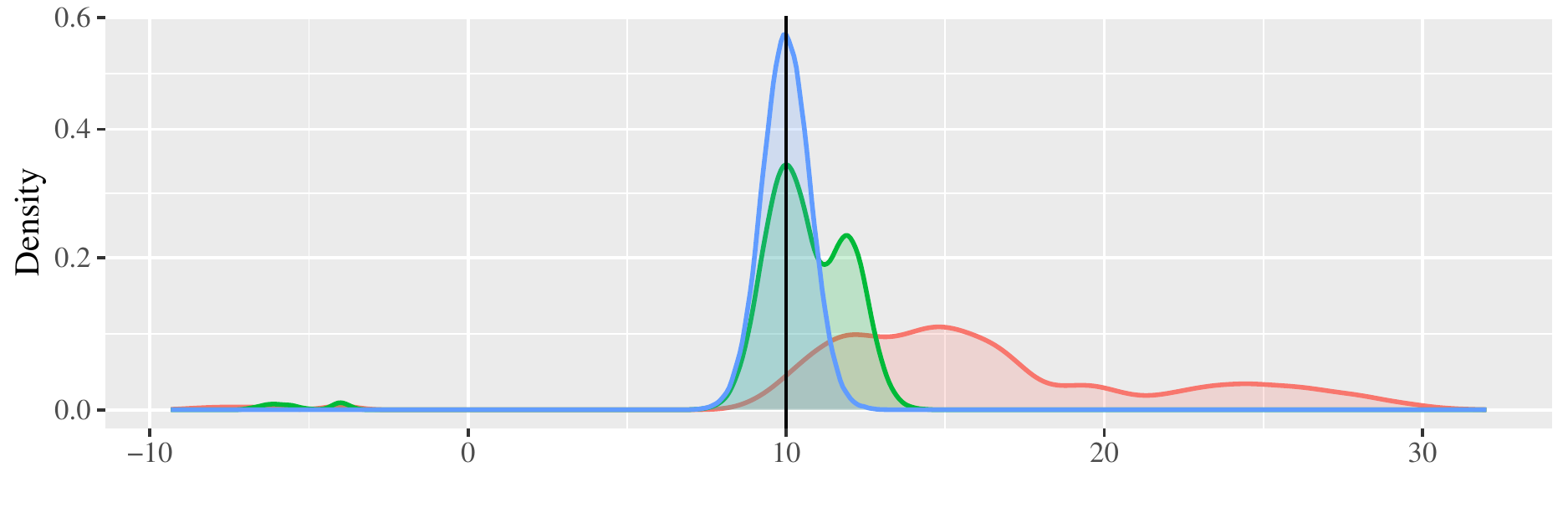}
\end{center}
\vspace{-7mm}
\caption{Distributions of Finite-Sample Estimators when $n = 50$ and  $\lambda_{1,0} = 1 - \lambda_{1,1} = 0.50$}
\vspace{-5mm}
\begin{center}
\includegraphics[scale=0.6]{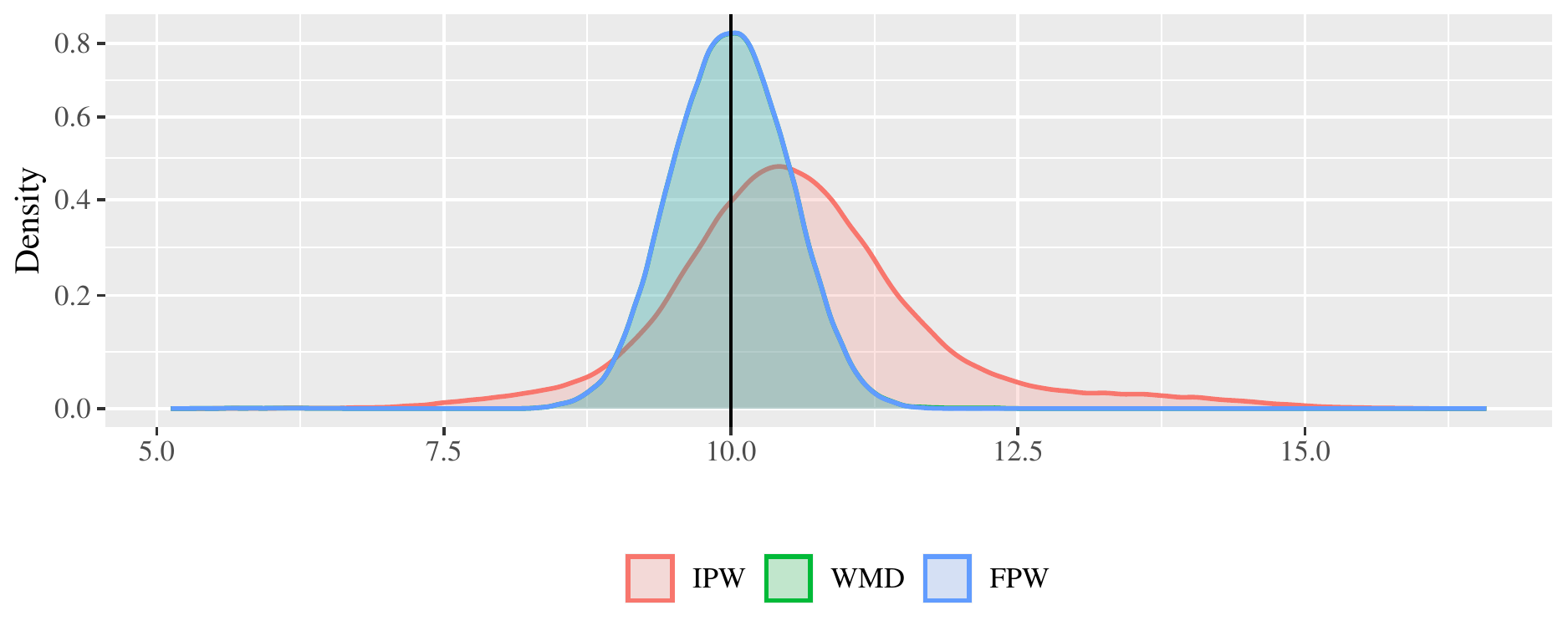}
\end{center}
\footnotesize Note: The above figures show the simulated densities of the sampling distributions of the IPW, WMD, and FPW estimators of the ATE $= 10$ for the specified nuisance parameters $(\lambda_{0,k},\lambda_{1,k})_{k \in \{0, 1\}}$ when $\mathcal{D}_n = (Y_i, X_i, W_i)_{i = 1}^n$ with $n = 50$, $W_i \in \mathbb{W} = \{0, 1\}$, and $X_i = 1\{i > 0.8\,n\}$ such that $Y_i = 10 + 2(1+X_i)\,\upsilon_{1,i} + W_i[10 + (1+2X_i)\,\upsilon_{2,i}]$, where $(\upsilon_{1,i},\upsilon_{2,i}) \sim \mathrm{Uniform}[(-1,1)^2]$, and $W_i \sim \mathrm{Bernoulli}(\lambda_{1,X_i})$, where $\lambda_{1,X_i} = 1 - \lambda_{0, X_i}$, for all $i \in \{1, \dots, n\}$. Although the FPW estimator is a set-estimator in general, it usually reduces to a point-estimator for the above choices of the nuisance parameters $\lambda_{1,0} = 1 - \lambda_{0, 0} = 1 - \lambda_{1,1} = \lambda_{0, 1}$. The black vertical lines mark the true ATE value $= 10$.
\end{figure}

\begin{figure}[ht]
\caption{Distributions of Finite-Sample Estimators when $n = 100$ and  $\lambda_{1,0} = 1 - \lambda_{1,1} = 0.02$}
\vspace{-5mm}
\begin{center}
\includegraphics[scale=0.6]{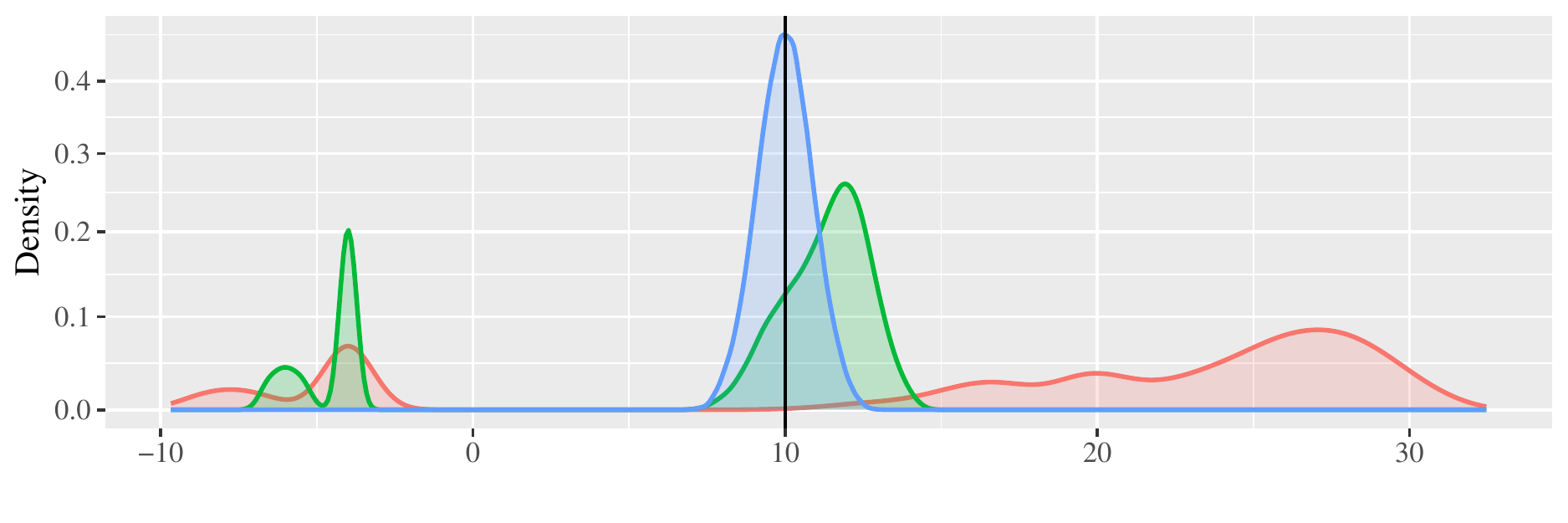}
\end{center}
\vspace{-7mm}
\caption{Distributions of Finite-Sample Estimators when $n = 100$ and  $\lambda_{1,0} = 1 - \lambda_{1,1} = 0.05$}
\vspace{-5mm}
\begin{center}
\includegraphics[scale=0.6]{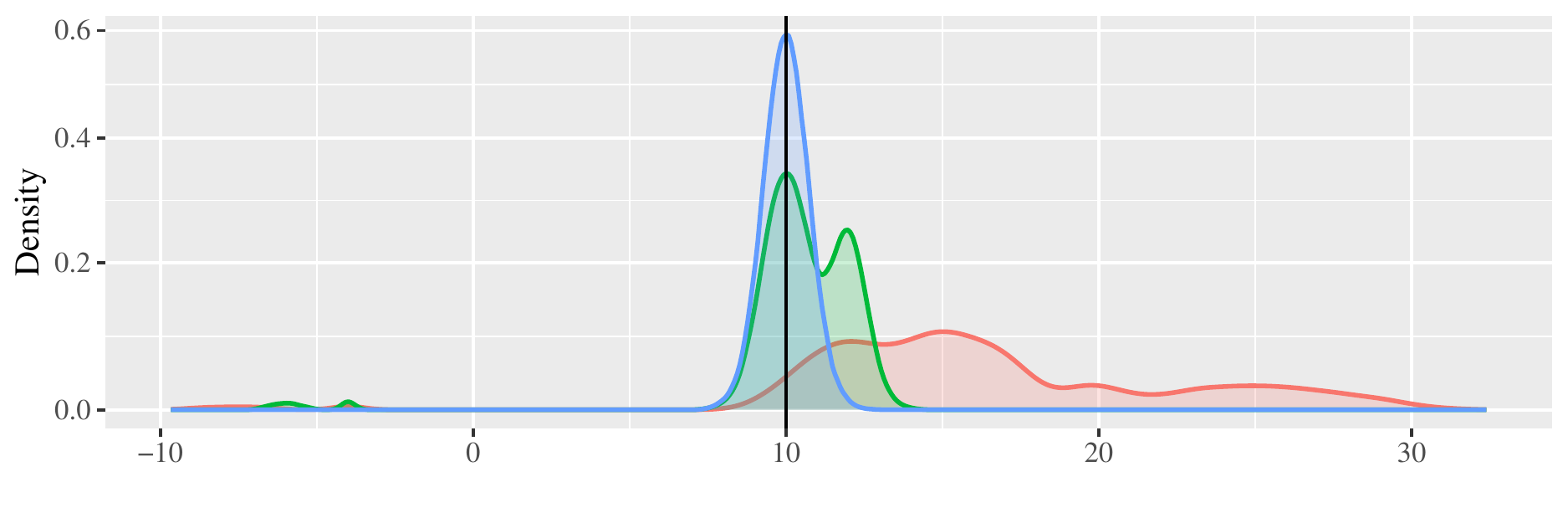}
\end{center}
\vspace{-7mm}
\caption{Distributions of Finite-Sample Estimators when $n = 100$ and  $\lambda_{1,0} = 1 - \lambda_{1,1} = 0.10$}
\vspace{-5mm}
\begin{center}
\includegraphics[scale=0.6]{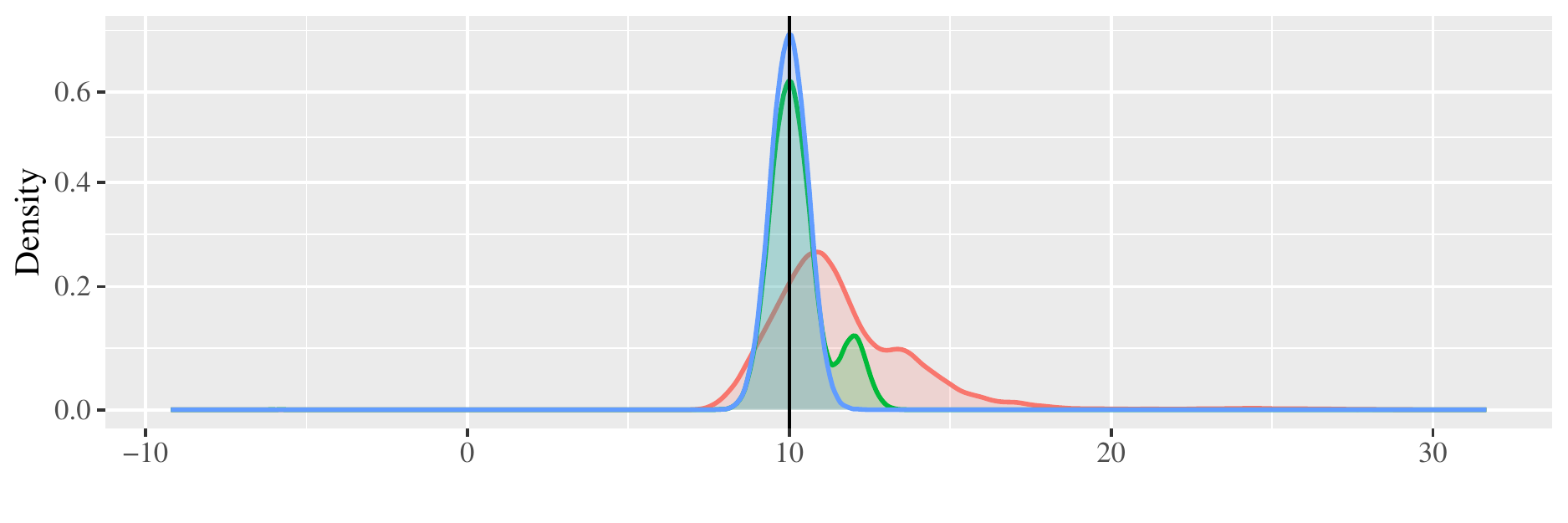}
\end{center}
\vspace{-7mm}
\caption{Distributions of Finite-Sample Estimators when $n = 100$ and  $\lambda_{1,0} = 1 - \lambda_{1,1} = 0.50$}
\vspace{-5mm}
\begin{center}
\includegraphics[scale=0.6]{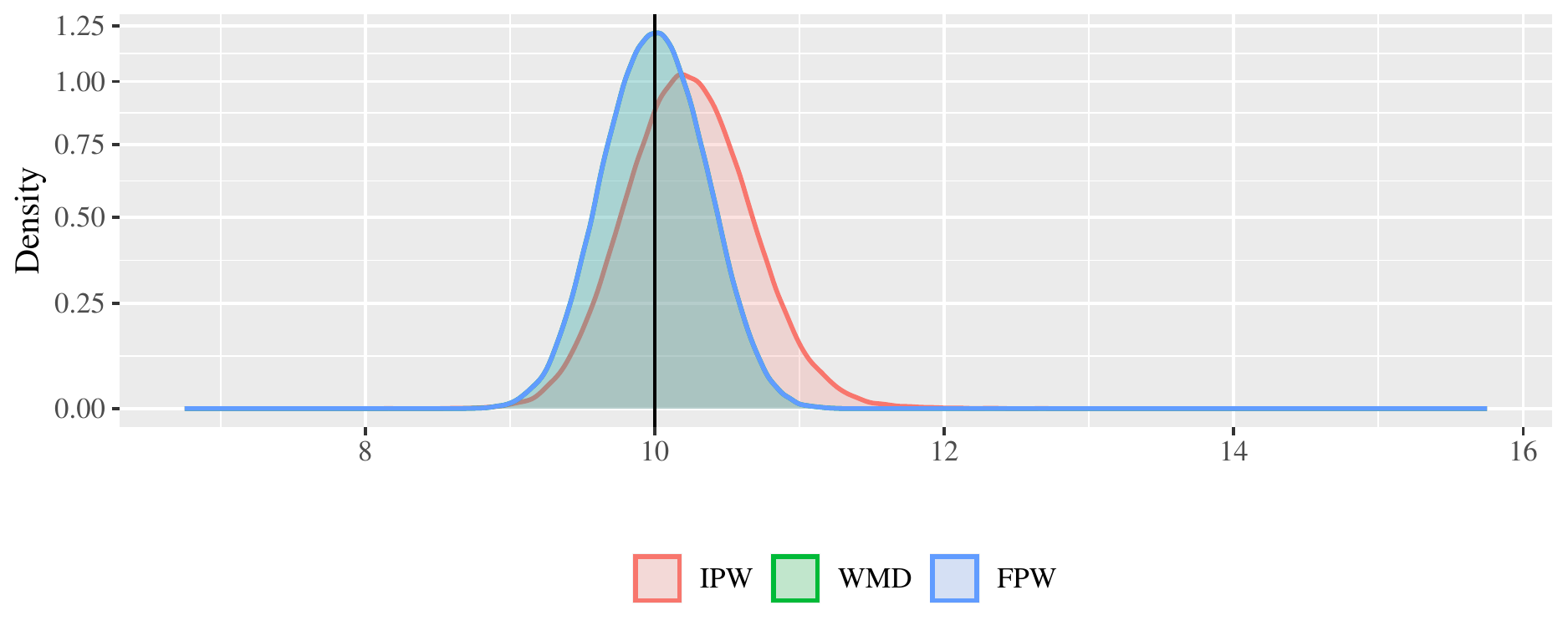}
\end{center}
\footnotesize Note: The above figures show the simulated densities of the sampling distributions of the IPW, WMD, and FPW estimators of the ATE $= 10$ for the specified nuisance parameters $(\lambda_{0,k},\lambda_{1,k})_{k \in \{0, 1\}}$ when $\mathcal{D}_n = (Y_i, X_i, W_i)_{i = 1}^n$ with $n = 100$, $W_i \in \mathbb{W} = \{0, 1\}$, and $X_i = 1\{i > 0.8\,n\}$ such that $Y_i = 10 + 2(1+X_i)\,\upsilon_{1,i} + W_i[10 + (1+2X_i)\,\upsilon_{2,i}]$, where $(\upsilon_{1,i},\upsilon_{2,i}) \sim \mathrm{Uniform}[(-1,1)^2]$, and $W_i \sim \mathrm{Bernoulli}(\lambda_{1,X_i})$, where $\lambda_{1,X_i} = 1 - \lambda_{0, X_i}$, for all $i \in \{1, \dots, n\}$. Although the FPW estimator is a set-estimator in general, it usually reduces to a point-estimator for the above choices of the nuisance parameters $\lambda_{1,0} = 1 - \lambda_{0, 0} = 1 - \lambda_{1,1} = \lambda_{0, 1}$. The black vertical lines mark the true ATE value $= 10$.
\end{figure}

\begin{figure}[ht]
\caption{Distributions of Finite-Sample Estimators when $n = 500$ and  $\lambda_{1,0} = 1 - \lambda_{1,1} = 0.002$}
\vspace{-5mm}
\begin{center}
\includegraphics[scale=0.6]{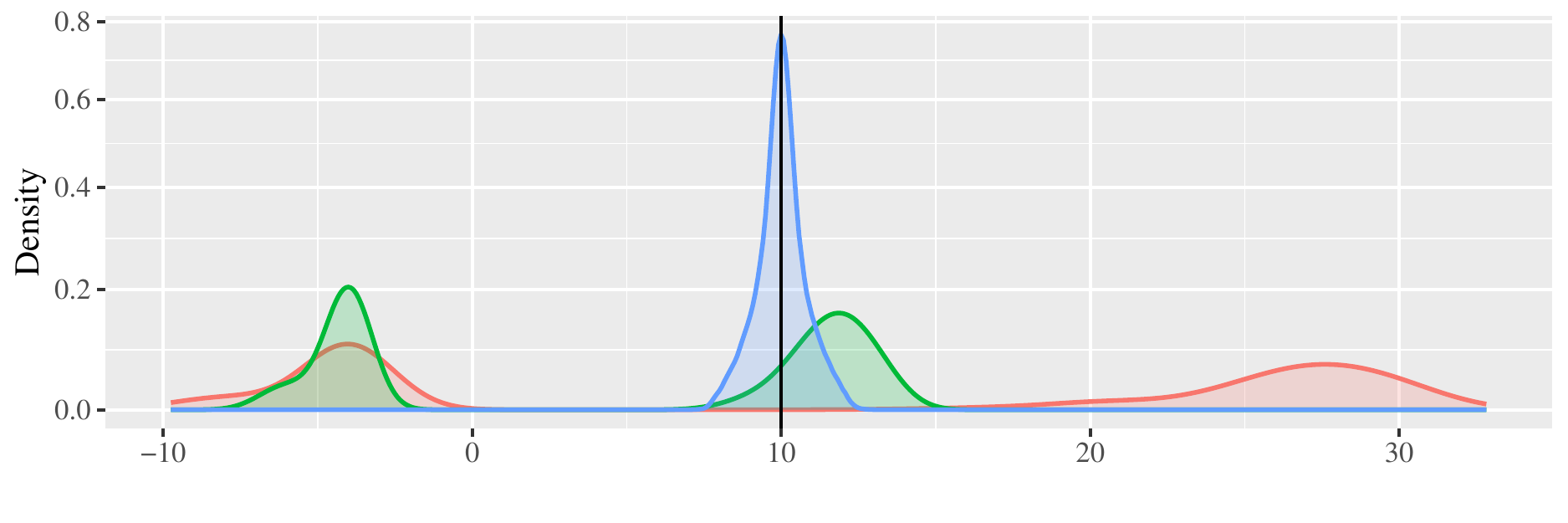}
\end{center}
\vspace{-7mm}
\caption{Distributions of Finite-Sample Estimators when $n = 500$ and  $\lambda_{1,0} = 1 - \lambda_{1,1} = 0.005$}
\vspace{-5mm}
\begin{center}
\includegraphics[scale=0.6]{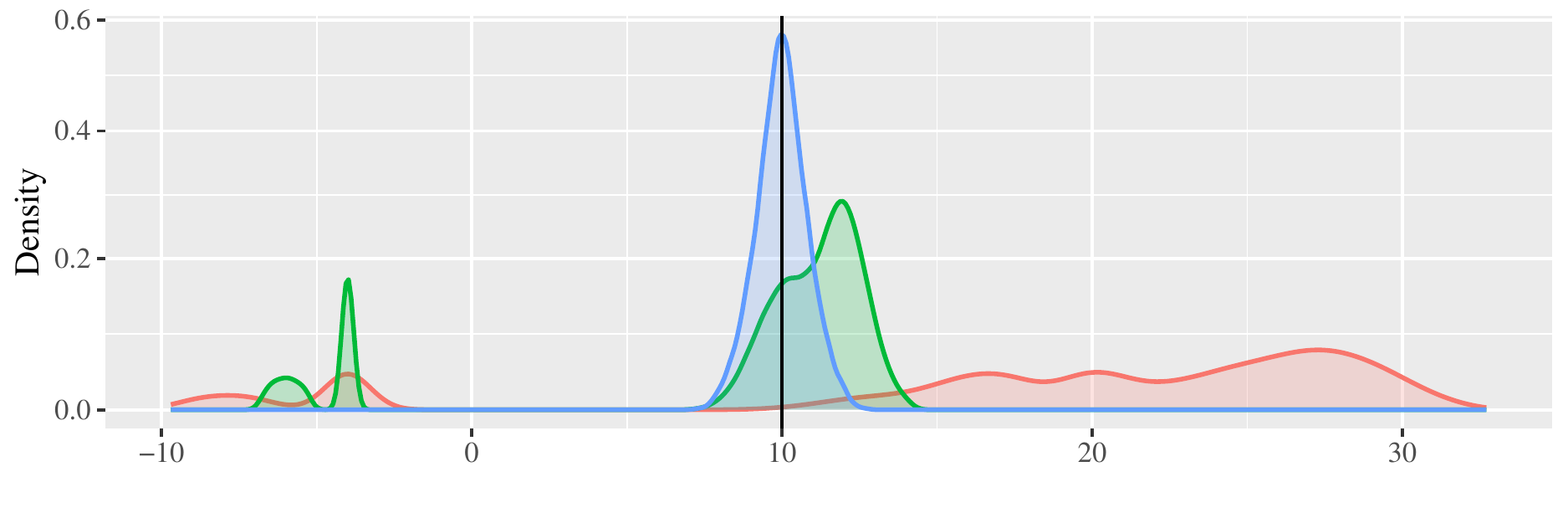}
\end{center}
\vspace{-7mm}
\caption{Distributions of Finite-Sample Estimators when $n = 500$ and  $\lambda_{1,0} = 1 - \lambda_{1,1} = 0.010$}
\vspace{-5mm}
\begin{center}
\includegraphics[scale=0.6]{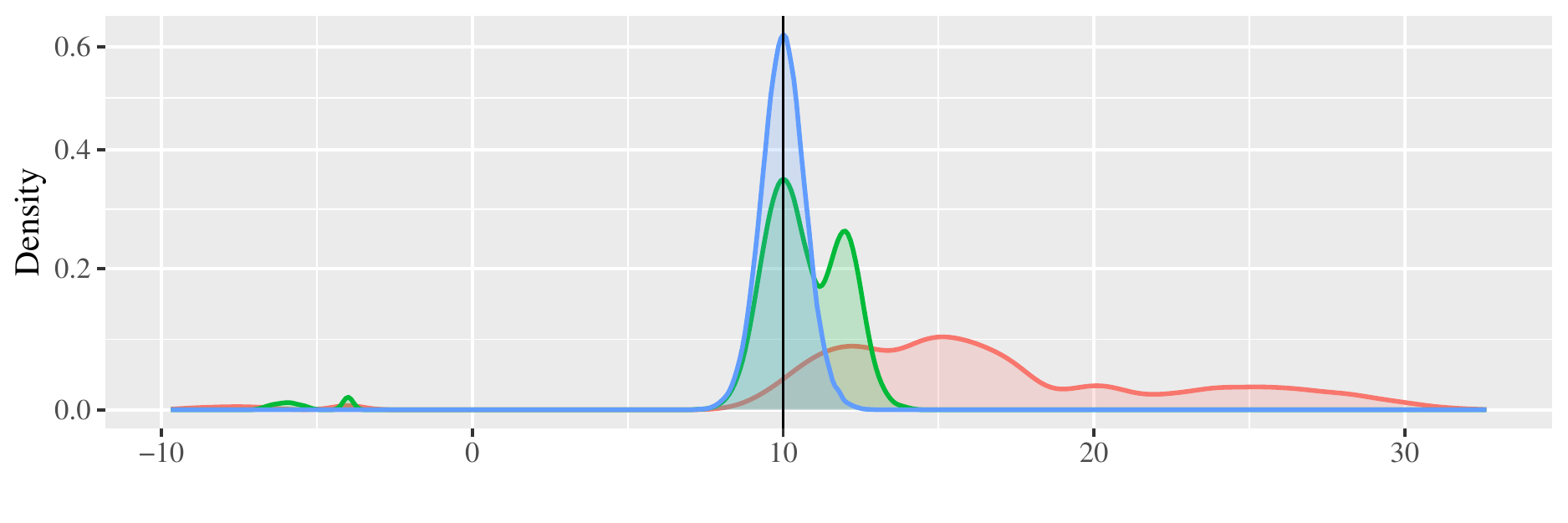}
\end{center}
\vspace{-7mm}
\caption{Distributions of Finite-Sample Estimators when $n = 500$ and  $\lambda_{1,0} = 1 - \lambda_{1,1} = 0.500$}
\label{figure:sampling_dist_n500_p50}
\vspace{-5mm}
\begin{center}
\includegraphics[scale=0.6]{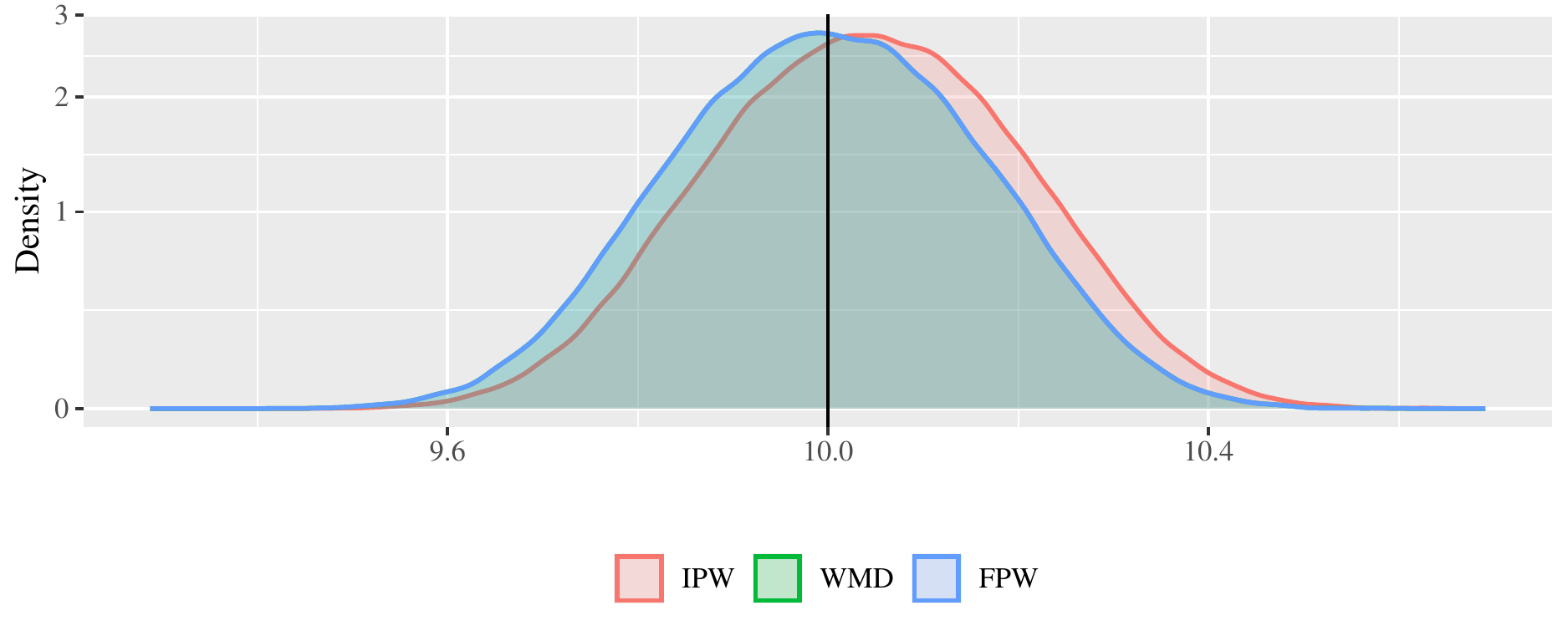}
\end{center}
\footnotesize Note: The above figures show the simulated densities of the sampling distributions of the IPW, WMD, and FPW estimators of the ATE $= 10$ for the specified nuisance parameters $(\lambda_{0,k},\lambda_{1,k})_{k \in \{0, 1\}}$ when $\mathcal{D}_n = (Y_i, X_i, W_i)_{i = 1}^n$ with $n = 500$, $W_i \in \mathbb{W} = \{0, 1\}$, and $X_i = 1\{i > 0.8\,n\}$ such that $Y_i = 10 + 2(1+X_i)\,\upsilon_{1,i} + W_i[10 + (1+2X_i)\,\upsilon_{2,i}]$, where $(\upsilon_{1,i},\upsilon_{2,i}) \sim \mathrm{Uniform}[(-1,1)^2]$, and $W_i \sim \mathrm{Bernoulli}(\lambda_{1,X_i})$, where $\lambda_{1,X_i} = 1 - \lambda_{0, X_i}$, for all $i \in \{1, \dots, n\}$. Although the FPW estimator is a set-estimator in general, it usually reduces to a point-estimator for the above choices of the nuisance parameters $\lambda_{1,0} = 1 - \lambda_{0, 0} = 1 - \lambda_{1,1} = \lambda_{0, 1}$. The black vertical lines mark the true ATE value $= 10$.
\end{figure}

\end{document}